\newtheorem{theorem}{Theorem}[section]
\newtheorem{conjecture}[theorem]{Conjecture}
\newtheorem{question}[theorem]{Question}
\newtheorem{proposition}[theorem]{Proposition}
\newtheorem{lemma}[theorem]{Lemma}
\newtheorem{observation}[theorem]{Observation}
\newtheorem{claim}{Claim}
\newcommand{\abs}[1]{\left|#1\right|}
\newcommand{\set}[1]{\left\{#1\right\}}
\newcommand{\paren}[1]{\left(#1\right)}
\newcommand{\um}{\scalebox{0.8}[1.0]{\(\, - \)}}
\newenvironment{proofclaim}[1][]%
    {\noindent \emph{Proof.} {}{#1}{}}{$~$\hfill $~\blacklozenge$ \vspace{0.2cm}}
\begin{document}

\begin{frontmatter}

\title{Between proper and strong edge-colorings of subcubic graphs}

\author[LaBRI]{Herve Hocquard\corref{cor1} \fnref{FRgrant}}
\ead{herve.hocquard@u-bordeaux.fr}

\author[LaBRI]{Dimitri Lajou\fnref{FRgrant}} 
\ead{dimitri.lajou@u-bordeaux.fr}

\author[FIS]{Borut Lu\v{z}ar\fnref{FRgrant}} 
\ead{borut.luzar@gmail.com}

\cortext[cor1]{Corresponding author}

\address[LaBRI]{Univ. Bordeaux, CNRS, Bordeaux INP, LaBRI, UMR 5800, F-33400, Talence, France}
\address[FIS]{Faculty of Information Studies, Novo mesto, Slovenia}

\begin{abstract}
	In a proper edge-coloring the edges of every color form a matching.
	A matching is \textit{induced} if the end-vertices of its edges induce a matching.
	A \textit{strong edge-coloring} is an edge-coloring in which the edges of every color form an induced matching.
	We consider intermediate types of edge-colorings, where edges of some colors are allowed to form matchings, 
	and the remaining form induced matchings. 
	Our research is motivated by the conjecture proposed in a recent paper of Gastineau and Togni on $S$-packing edge-colorings 
	(On $S$-packing edge-colorings of cubic graphs, Discrete Appl. Math. 259 (2019), 63--75)
	asserting that by allowing three additional induced matchings, one is able to save one matching color.
	We prove that every graph with maximum degree $3$ can be 
	decomposed into one matching and at most $8$ induced matchings,
	and two matchings and at most $5$ induced matchings.
	We also show that if a graph is in class I, 
	the number of induced matchings can be decreased by one, 
	hence confirming the above-mentioned conjecture for class I graphs.
\end{abstract}

\begin{keyword}
	strong edge-coloring; $S$-packing edge-coloring; induced matching.
\end{keyword}

\end{frontmatter}


\section{Introduction}
\label{sec:intro}

A \textit{proper edge-coloring} of a graph $G=(V,E)$ is 
an assignment of colors to the edges of $G$ such that adjacent edges are colored with distinct colors.
Due to a remarkable result of Vizing~\cite{Viz64}, 
we know that the minimum number of colors needed to color the edges of a graph $G$, 
the \textit{chromatic index} of $G$ (denoted by $\chi'(G)$),
is either $\Delta(G)$ or $\Delta(G) + 1$, $\Delta(G)$ being the maximum degree of $G$.
The graphs with the former value of the chromatic index are commonly said to be in \textit{class~I},
and the latter in \textit{class~II}.

In this paper, we are interested in graphs with maximum degree $3$, to which we will refer as \textit{subcubic graphs}.
We need at most $4$ colors to color such graphs; the complete graph on four vertices with one edge subdivided being
the smallest representative of a class~II subcubic graph, and the Petersen graph being the smallest $2$-connected class II cubic graph.
For subcubic graphs of class~II, it has been shown that they can be colored in such a way 
that one of the colors (usually denoted $\delta$) is used relatively rarely (cf.~\cite{AlbHaa96,FouVan13}).
This motivates the question if the edges of color $\delta$ can be pairwise distant.
Note that we consider the distance between edges as the distance between the corresponding vertices in the line graph,
i.e., adjacent edges are said to be at distance $1$.
Payan~\cite{Pay77} and independently Fouquet and Vanherpe~\cite{FouVan13} proved that 
every subcubic graph with chromatic index $4$ admits a proper edge-coloring
such that the edges of one color are at distance at least $3$,
i.e., the end-vertices of those edges induce a matching in the graph.

Gastineau and Togni~\cite{GasTog19} investigated a generalization of edge-colorings with the property described above.
For a given non-decreasing sequence of integers $S=(s_1,\dots,s_k)$, 
an $S$-packing edge-coloring of a graph is a decomposition of edges into disjoint sets $X_1,\dots,X_k$, 
where the edges in the set $X_i$ are pairwise at distance at least $s_i+1$.
A set $X_i$ is called an \textit{$s_i$-packing}; 
a $1$-packing is simply a matching, and a $2$-packing is an induced matching.
To simplify the notation, we denote repetitions of same elements in $S$ using exponents, 
e.g., $(1,2,2,2)$ can be written as $(1,2^3)$.

The notion of $S$-packing edge-colorings is motivated by its vertex counterpart, 
introduced by Goddard and Xu~\cite{GodXu12} as a natural generalization of the packing chromatic number~\cite{GodHedHedHarRal08}.
In~\cite{GasTog19}, the authors consider $S$-packing edge-colorings of subcubic graphs with prescribed number of $1$'s in the sequence.
Vizing's result translated to $S$-packing edge-coloring gives that every subcubic graph admits a $(1,1,1,1)$-packing edge-coloring,
while class I subcubic graphs are $(1,1,1)$-packing edge-colorable. 
Moreover, by Payan's, Fouquet's and Vanherpe's result, we have that there is a $(1,1,1,2)$-packing edge-coloring for any subcubic graph.
\begin{theorem}[Payan~\cite{Pay77}, and Fouquet \& Vanherpe~\cite{FouVan13}]
	\label{thm:1112}
	Every subcubic graph admits a $(1,1,1,2)$-packing edge-coloring.
\end{theorem}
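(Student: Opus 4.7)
The plan is to reformulate the statement as the following equivalent claim: every subcubic graph $G$ contains an induced matching $M$ (possibly empty) such that $\chi'(G\setminus M)\le 3$. Given such an $M$, a proper $3$-edge-coloring of $G\setminus M$ together with a fourth color on $M$ yields a $(1,1,1,2)$-packing edge-coloring. If $G$ is class~I, we simply take $M=\emptyset$, so the whole difficulty lies in handling subcubic graphs of class~II, where $M$ must be nonempty.

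For the class~II case, I would work with an extremal proper $4$-edge-coloring. Fix colors $\{1,2,3,\delta\}$ and, among all proper edge-colorings of $G$ with these colors, choose one minimizing the pair $(|M_\delta|,\,p(M_\delta))$ lexicographically, where $M_\delta$ is the color class of $\delta$ and $p(M_\delta)$ is the number of ordered pairs of edges of $M_\delta$ lying at distance exactly~$2$ in $G$. The goal is to show that at such an extremum $p(M_\delta)=0$, i.e.\ $M_\delta$ is an induced matching, in which case we are done. Assuming for contradiction $p(M_\delta)>0$, pick two $\delta$-edges $e_1=u_1v_1$ and $e_2=u_2v_2$ connected by an edge $v_1u_2$ of some color $c\in\{1,2,3\}$.

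The technical heart is a Kempe-chain swap designed to either reduce $|M_\delta|$ or reduce $p(M_\delta)$ without increasing $|M_\delta|$. Because $G$ is subcubic, at each of $u_1,v_1,u_2,v_2$ at most three colors appear, so for any pair of colors $\{a,b\}\subseteq\{1,2,3,\delta\}$ each vertex is incident with at most two edges colored $a$ or $b$, and the $(a,b)$-Kempe components are simple alternating paths or even cycles. I would first look for a color in $\{1,2,3\}$ missing at both endpoints of $e_1$ (or of $e_2$), in which case $e_1$ can be directly recolored, contradicting minimality of $|M_\delta|$. Otherwise, using the missing colors at $u_1$ and $v_1$ I would perform an $(a,b)$-swap along the appropriate Kempe chain to make some color of $\{1,2,3\}$ simultaneously free at $u_1$ and $v_1$, and then recolor $e_1$. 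If no such swap exists because the relevant Kempe chain joins $u_1$ to $v_1$, I would instead exploit a parallel swap on a $(\delta,c)$-chain through the edge $v_1u_2$ to break the close pair, decreasing $p(M_\delta)$ while leaving $|M_\delta|$ unchanged.

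The main obstacle is controlling side effects: a Kempe swap that reduces one close pair could in principle create a new close pair elsewhere in $M_\delta$, or even rearrange $M_\delta$ in a way that keeps $p(M_\delta)$ constant. To handle this, I would argue that only edges within distance $2$ of the swapped chain can have their status changed, and then use the subcubic degree restriction to enumerate the local configurations near $\{u_1,v_1,u_2,v_2\}$; in each configuration one of the two swaps above strictly improves the lexicographic measure. Reducing to connected cubic $G$ at the outset (any vertex of degree $\le 2$ is handled by a short direct argument or by contracting to a smaller instance) keeps the case analysis bounded, and the degree-three constraint is exactly what forces one of the two swaps to succeed.
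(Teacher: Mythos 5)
First, note that the paper does not prove Theorem~\ref{thm:1112} at all; it imports it from Payan and from Fouquet--Vanherpe. Your plan follows the same general route as those proofs: pass to a proper $4$-edge-coloring that is extremal for a measure whose first coordinate is $|M_\delta|$, and use Kempe chains to improve it. The reformulation as finding an induced matching $M$ with $\chi'(G\setminus M)\le 3$, the dispatch of class~I, and the first round of swaps (freeing a common color of $\{1,2,3\}$ at both ends of a $\delta$-edge unless the relevant Kempe chain joins its two endpoints) are correct and standard.

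The gap is in the case you yourself call the technical heart, and it is exactly where the real work of the cited proofs lies. When the $(a,b)$-chains join $u_1$ to $v_1$ (and likewise for $e_2$), you propose a $(\delta,c)$-swap through $v_1u_2$ and claim it decreases $p(M_\delta)$ after a case analysis of the local configurations near $\{u_1,v_1,u_2,v_2\}$. Two concrete problems. (i) The $(\delta,c)$-component containing $v_1u_2$ may be a path whose two end-edges are both colored $c$; such a path carries one more $c$-edge than $\delta$-edge, so swapping it increases $|M_\delta|$ and destroys the first coordinate of your lexicographic measure. $\delta$-minimality only forbids components with a $\delta$-majority; it does not force them to be balanced, so in this subcase you have no legal move. (ii) Even when the swap preserves $|M_\delta|$ (a cycle or a balanced path), it replaces the $\delta$-edges of the entire component by that component's former $c$-edges. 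The component can wind through the whole graph, so $M_\delta$ changes globally and new pairs of $\delta$-edges at distance~$2$ can appear arbitrarily far from $\{u_1,v_1,u_2,v_2\}$; your observation that only edges within distance $2$ of the swapped chain can change status is true but vacuous, because the chain is not local. A configuration analysis confined to a neighborhood of those four vertices therefore cannot certify that $p(M_\delta)$ strictly drops. Controlling these global side effects --- or designing a potential function and swaps that provably cannot backfire --- is precisely the substance of Payan's and Fouquet--Vanherpe's arguments, so as written this is a plausible plan rather than a proof.
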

Here $2$ cannot be changed to $3$, due to the Petersen and the Tietze graphs (depicted in Fig.~\ref{fig:pet}): 
they both have chromatic index $4$, 
and we need at least two edges of each color. 
Since every pair of edges is at distance at most $3$, we have the tightness.
However, Gastineau and Togni do believe that the following conjecture is true.
\begin{conjecture}[Gastineau and Togni~\cite{GasTog19}]
	Every cubic graph different from the Petersen and the Tietze graph is $(1,1,1,3)$-packing edge-colorable.
\end{conjecture}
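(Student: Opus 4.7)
\medskip
\noindent\textbf{Proof plan.} The plan is to split the conjecture along the two Vizing classes. For a class~I cubic graph $G$, a proper $3$-edge-coloring is already a $(1,1,1)$-packing edge-coloring, and extending it to a $(1,1,1,3)$-packing edge-coloring by declaring the fourth color class empty yields the conclusion at once. The whole difficulty therefore lies in the class~II case.

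For a class~II cubic graph $G$ not isomorphic to the Petersen or the Tietze graph, I would start from a $(1,1,1,2)$-packing edge-coloring provided by Theorem~\ref{thm:1112} and select, among all such colorings, one that minimises the lexicographic potential $(|X_4|,\,b(X_4))$, where $X_4$ is the fourth color class and $b(X_4)$ counts unordered pairs of edges of $X_4$ at distance exactly $2$ or $3$. The objective is to show that $b(X_4)=0$ at a minimiser. Given a bad pair $e,f\in X_4$, I would enumerate the short walks joining them together with the colors of edges incident to the small neighborhood involved, and hunt for a Kempe chain in some pair of colors from $\{1,2,3,4\}$ whose swap either removes an edge from $X_4$ (lowering $|X_4|$) or displaces one of $e,f$ to an edge at distance at least $4$ from the remainder of $X_4$ (lowering $b(X_4)$ without raising $|X_4|$).

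The main obstacle is classifying the cubic graphs for which no such Kempe swap succeeds. A persistent bad pair forces a very rigid pattern of colors in a ball of small radius around it, so the natural route is to convert this local rigidity into global structure through a discharging argument on the line graph $L(G)$, distributing charges according to the distance profile of $X_4$. The final step would be to verify that the only cubic graphs in which local unreducibility holds simultaneously at every bad pair are precisely the Petersen and Tietze graphs, both of which are extremal in that every pair of their edges lies at distance at most $3$. Pinpointing the combinatorial feature that forces $G$ to be one of these two exceptional graphs, and ruling out sporadic alternatives, is where I expect the bulk of the technical work—and any genuine difficulty beyond Theorem~\ref{thm:1112}—to lie.
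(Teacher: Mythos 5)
This statement is an open conjecture of Gastineau and Togni; the paper does not prove it (and explicitly leaves it open), so there is no ``paper proof'' to compare against. Your class~I observation is correct but trivial: a proper $3$-edge-coloring is a $(1,1,1)$-packing edge-coloring, and an empty $3$-packing can be appended. The entire content of the conjecture is the class~II case, and there your text is a research plan, not a proof.

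The concrete gaps are these. First, you never show that the lexicographic potential $(|X_4|,b(X_4))$ can actually be decreased at a bad pair: Kempe swaps in a $(1,1,1,2)$-packing edge-coloring need not preserve the packing conditions on the classes they touch (swapping a $2$-colored Kempe chain can move an edge of $X_4$ next to another edge of $X_4$ elsewhere), and no argument is given that some swap always works or that the process terminates rather than cycling. Second, and more fundamentally, the step ``verify that the only cubic graphs in which local unreducibility holds at every bad pair are precisely the Petersen and Tietze graphs'' is not a step you can defer --- it \emph{is} the conjecture. You give no discharging rules, no list of reducible configurations, and no mechanism by which local rigidity around a bad pair propagates to force one of two specific $10$- and $12$-vertex graphs; the observation that every pair of edges in those graphs is at distance at most $3$ explains why they are exceptions, but does nothing toward showing they are the \emph{only} exceptions (note also that class~II cubic graphs include all cubic graphs with bridges, which your plan never addresses). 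As written, the proposal establishes nothing beyond what Theorem~\ref{thm:1112} already gives.
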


\begin{figure}[ht]
	$$
		\includegraphics{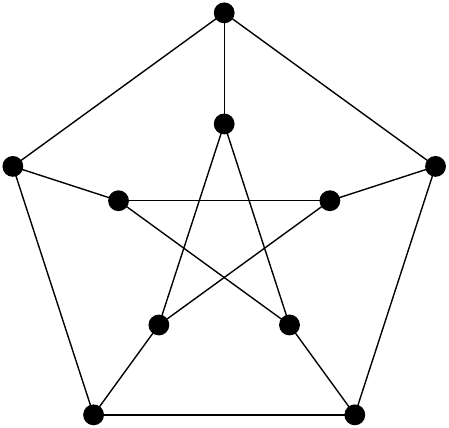} \hspace{2cm}
		\includegraphics{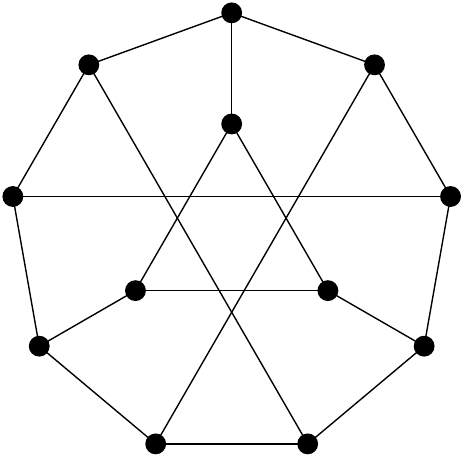}
	$$
	\caption{The Petersen (left) and the Tietze graph (right) admit a $(1,1,1,2)$-packing edge-coloring, 
		and $2$ cannot be increased to $3$.}
	\label{fig:pet}
\end{figure}

Clearly, reducing the number of $1$'s in sequences increases the total number of needed colors, i.e., the length of the sequence.
In fact, if there is no $1$ in a sequence, i.e., the edges of every color class induce a matching, 
the coloring is called a \textit{strong edge-coloring}. 
It has been proved by Andersen~\cite{And92} and independently by Hor\'{a}k, Qing, and Trotter~\cite{HorHeTro93}
that every subcubic graph admits a strong edge-coloring with at most $10$ colors, 
i.e., a $(2^{10})$-packing edge-coloring.
The number of colors is tight, e.g., the Wagner graph in Fig.~\ref{fig:strong} needs $10$ colors for a strong edge-coloring.
\begin{figure}[ht]
	$$
		\includegraphics{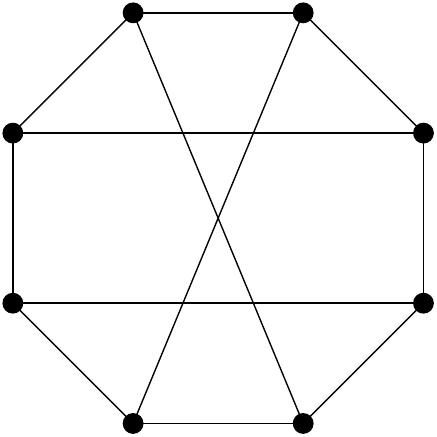}
	$$
	\caption{The Wagner graph is the smallest cubic graph which needs $10$ colors for a strong edge-coloring.}
	\label{fig:strong}
\end{figure}
Let us remark here that the Wagner graph is in class I, meaning that smallest chromatic index
does not necessarily mean less number of colors for a strong edge-coloring of a graph.

Proper and strong edge-coloring of subcubic graphs have been studied extensively already in the previous century.
In~\cite{GasTog19}, Gastineau and Togni started filling the gap by considering $(1^k,2^{\ell})$-packing edge-colorings for $k \in \set{1,2}$.
They proved that every cubic graph with a $2$-factor admits a $(1,1,2^5)$-packing edge-coloring, and the number of required $2$-packings 
reduces by one if the graph is class I. 
For the case with one $1$-packing, 
they remark that using the bound for the strong edge-coloring
one obtains that every subcubic graph admits a $(1,2^9)$-packing edge-coloring.
These bounds are clearly not tight, and they propose a conjecture (the items $(a)$ and $(c)$ in Conjecture~\ref{conj:main}), 
which motivated the research presented in this paper. The case $(b)$ has been formulated as a question, and we added the case $(d)$,
due to af\mbox{}f\mbox{}irmative results of computer tests on subcubic graphs of small orders.
\begin{conjecture}
	\label{conj:main}
	Every subcubic graph $G$ admits:
	\begin{itemize}
		\item[$(a)$] a $(1,1,2^4)$-packing edge-coloring~\cite{GasTog19};
		\item[$(b)$] a $(1,2^7)$-packing edge-coloring~\cite{GasTog19};
		\item[$(c)$] a $(1,1,2^3)$-packing edge-coloring if $G$ is in class I~\cite{GasTog19};
		\item[$(d)$] a $(1,2^6)$-packing edge-coloring if $G$ is in class I.
	\end{itemize}
\end{conjecture}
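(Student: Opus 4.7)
The plan is to handle each item by fixing a convenient edge-coloring --- Vizing's proper coloring in class~I or the $(1,1,1,2)$-packing coloring of Theorem~\ref{thm:1112} otherwise --- and then merging color classes into the required number of matchings and induced matchings. For item~$(c)$, given a class~I graph $G$ with a proper $3$-edge-coloring $(M_1,M_2,M_3)$, I would keep $M_1,M_2$ as the two matchings and try to partition $M_3$ into three induced matchings. This reduces to $3$-coloring the conflict graph $H$ whose vertices are the edges of $M_3$ and whose edges join pairs at distance at most $2$ in $G$. Every $e=uv\in M_3$ sees at most four other $M_3$-edges (at most two through $u$ via its $M_1$- and $M_2$-neighbors, and at most two through $v$), so $\Delta(H)\le 4$; the substantive task is refining this to $\chi(H)\le 3$, exploiting the fact that the $H$-neighbors of $e$ split naturally along the $M_1\cup M_3$-cycle and the $M_2\cup M_3$-cycle through $e$.

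For~$(a)$, I would start from $(M_1,M_2,M_3,I)$ given by Theorem~\ref{thm:1112}, keep $M_1$ and $M_2$, and recolor $M_3\cup I$ with four induced-matching colors; the natural first move is to grow $I$ by absorbing all $M_3$-edges at sufficient distance from the current $I$ and from each other, then partition the residual $M_3$ as in~$(c)$ with one extra color to spare. For~$(d)$, starting from a proper $3$-edge-coloring, I would keep $M_1$ and split $M_2\cup M_3$ into six induced matchings, using that $M_2\cup M_3$ is a disjoint union of even cycles and paths whose induced-matching conflict graph has a manageable cyclic structure. Item~$(b)$ combines the two ideas: keep $M_1$ only and decompose $M_2\cup M_3\cup I$ into seven induced matchings by marrying the absorption trick with the even-cycle analysis.

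The principal obstacle in every case is closing the gap between the easy maximum-degree bound (which yields only the weaker $(1,1,2^5)$- and $(1,2^8)$-packing colorings stated in the abstract) and the conjectured values. The natural route is a discharging argument on a minimum counterexample: compile a list of reducible configurations --- short cycles meeting the chosen matchings in specific patterns, adjacent vertices of degree at most $2$, prescribed local arrangements of $M_i$- and $I$-edges --- and discharge on $G$ to force one to appear. I expect reducibility to be the hardest step, especially in~$(a)$ and~$(b)$, because without a proper $3$-edge-coloring available the extra class $I$ from Theorem~\ref{thm:1112} is hard to control locally, and the sharp examples (the Petersen, Tietze, and Wagner graphs) indicate that a handful of small exceptional graphs will likely have to be treated separately.
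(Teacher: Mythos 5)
The statement you are addressing is a \emph{conjecture}, not a theorem of the paper: the paper explicitly leaves all four items open and only proves the weaker bounds of Theorem~\ref{thm:main}, each with one additional $2$-color. Your proposal does not close that gap either --- it is a research plan in which every genuinely difficult step is named and then deferred. In item~$(c)$, after fixing a proper $3$-edge-coloring $(M_1,M_2,M_3)$ and forming the conflict graph $H$ on $M_3$, the bound $\Delta(H)\le 4$ only yields $\chi(H)\le 4$ (via Brooks, modulo $K_5$ components, which is exactly how the paper obtains the weaker $(1,1,2^4)$ bound via Proposition~\ref{prop:K5} and contraction); the claim $\chi(H)\le 3$ that you call ``the substantive task'' \emph{is} the conjecture, and nothing in your sketch establishes it. Worse, for a \emph{fixed} coloring $\pi$ it is not clear that $\chi(H)\le 3$ is even true: Proposition~\ref{prop:K5} only bounds cliques of $H$ by $5$, so $H$ could a priori contain a $K_4$, in which case your rigid strategy of keeping $M_1,M_2$ untouched and only repartitioning $M_3$ fails and one must also vary the choice of $\pi$ --- a degree of freedom your argument never exploits. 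The ``absorption trick'' for items~$(a)$ and~$(b)$ and the discharging scheme with its list of reducible configurations are likewise asserted rather than carried out, and you yourself flag reducibility as the step you expect to be hardest.

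For orientation: the paper's actual proofs of the $+1$ bounds do follow the two tracks you gesture at --- contraction of one matching plus Brooks' theorem for the two-matching cases $(a)$ and $(c)$, and a minimal-counterexample analysis of $bc$-cycles (the union $M_2\cup M_3$ you mention) using Hall's theorem and the Combinatorial Nullstellensatz for the one-matching cases $(b)$ and $(d)$. So your instincts about which decompositions to start from are sound, but they are known to deliver only the weaker bounds already in the paper; the conjectured values would require a new idea that neither the paper nor your proposal supplies.
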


The conjectured bounds, if true, are tight. 
For the cases $(a)$ and $(b)$, a subcubic graph that achieves the upper bound is the complete bipartite graph $K_{3,3}$ with one subdivided edge 
(the left graph in Fig.~\ref{fig:112}).
Recall that this graph is also class II and needs $10$ colors for a strong edge-coloring, 
hence achieving the upper bounds for proper edge-coloring, strong edge-coloring, $(1,1,2^4)$-packing edge-coloring, and $(1,2^7)$-packing edge-coloring.
Indeed, each $1$-packing contains at most three edges and each $2$-packing contains at most one edge.
An analogous argument holds for the cases $(c)$ and $(d)$ on the complete bipartite graph $K_{3,3}$.
\begin{figure}[ht]
	$$
		\includegraphics{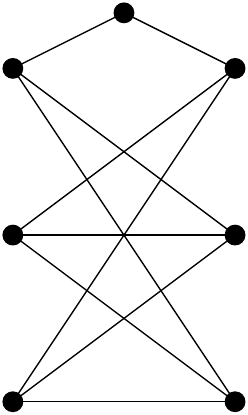} \hspace{2cm}
		\includegraphics{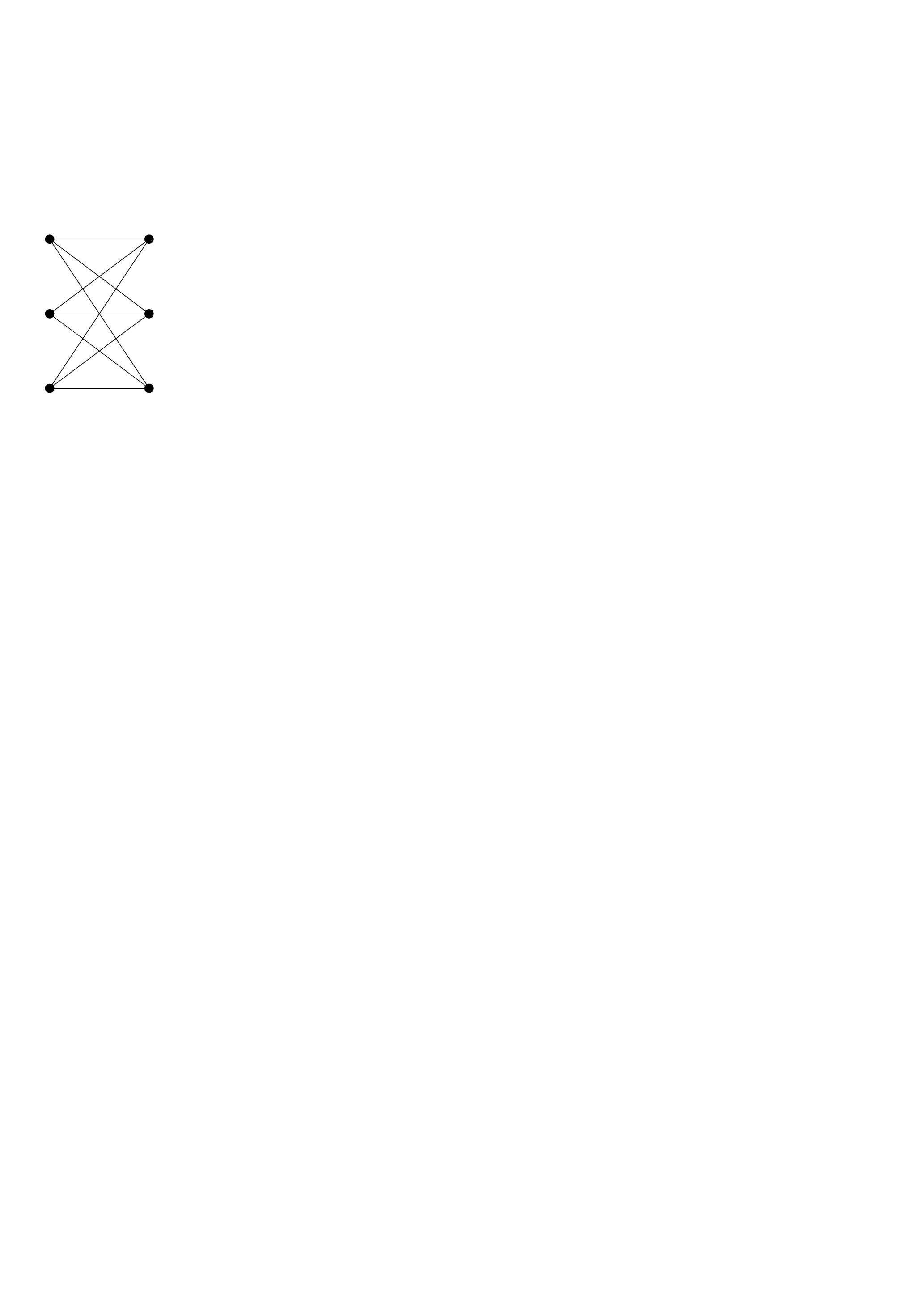}
	$$
	\caption{The smallest subcubic graph which does not admit a $(1,1,2^3)$-packing edge-coloring nor a $(1,2^6)$-packing edge-coloring (left),
		and the smallest class I subcubic graph which does not admit a $(1,1,2^2)$-packing edge-coloring nor a $(1,2^5)$-packing edge-coloring (right).}
	\label{fig:112}
\end{figure}

Conjecture~\ref{conj:main} bridges two of the most important edge-colorings, proper and strong,
basically claiming that each $1$-packing could be replaced by three $2$-packings. Indeed, if such operations were possible, 
then one could transform a $(1,1,1,2)$-packing edge-coloring into a strong edge-coloring. 
Note that this does not apply to subclasses of graphs, 
e.g., the Wagner graph needs $10$ colors for a strong edge-coloring and it is in class I.

This paper contributes to answering the conjecture by providing upper bounds with one additional color for all four cases of Conjecture~\ref{conj:main}.
\begin{theorem}
	\label{thm:main}
	Every subcubic graph $G$ admits:
	\begin{itemize}
		\item[$(a)$] a $(1,1,2^5)$-packing edge-coloring;
		\item[$(b)$] a $(1,2^8)$-packing edge-coloring;
		\item[$(c)$] a $(1,1,2^4)$-packing edge-coloring if $G$ is in class I;
		\item[$(d)$] a $(1,2^7)$-packing edge-coloring if $G$ is in class I.
	\end{itemize}
\end{theorem}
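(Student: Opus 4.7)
The plan is to prove each of the four statements by starting from a suitable proper edge-coloring of $G$ and refining one or more matching color classes into induced matchings. The central tool I would use is the \emph{conflict graph} $H(M)$ of a matching $M$ of $G$, whose vertex set is $M$ and in which two edges of $M$ are adjacent iff they are at distance at most $2$ in $G$. A direct local count shows that $\Delta(H(M)) \leq 4$: for $uv \in M$, there are at most two edges of $M$ at distance $\leq 2$ through $u$ (one via each of the two other neighbors of $u$), and symmetrically at most two via $v$. By Brooks' theorem, $H(M)$ is therefore $4$-colorable unless a component of $H(M)$ is isomorphic to $K_5$.

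For part $(a)$, I would start from a $(1,1,1,2)$-packing edge-coloring $(M_1,M_2,M_3,I)$ of $G$ given by Theorem~\ref{thm:1112} and refine $M_3$ into at most four induced matchings via a proper $4$-coloring of $H(M_3)$; together with $I$ this yields a $(1,1,2^5)$-packing. For part $(c)$, the same refinement applied to a proper $3$-edge-coloring $(M_1,M_2,M_3)$ of the class~I graph $G$ produces a $(1,1,2^4)$-packing. In both cases, the exceptional possibility that $H(M_3)$ contains a $K_5$-component has to be handled separately: such a component corresponds to five edges of $M_3$ pairwise at distance $\leq 2$, which rigidly constrains the local structure of $G$ around these edges, and I would argue that a carefully chosen local exchange between $M_3$ and one of the other color classes ($M_1$, $M_2$, or $I$) breaks this $K_5$ without disturbing the remaining classes.

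For parts $(b)$ and $(d)$, refining only one matching does not suffice: we must further reduce the number of matching classes from two to one. Here I would work with the union $M_2 \cup M_3$ of two matching classes, which forms a subgraph of maximum degree $2$ (a disjoint union of paths and even cycles), and aim to decompose $M_2 \cup M_3$ together with $I$ (if present) into $8$ respectively $7$ induced matchings. The conflict graph on $M_2 \cup M_3$ can have maximum degree as large as $10$, so a greedy coloring does not directly suffice; instead I would exploit the path/cycle structure and the alternation between the two matching classes, combined with local swaps using $M_1$ and $I$, to produce the desired decomposition step by step along the components of $M_2 \cup M_3$.

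The main obstacle throughout is controlling the exceptional configurations, namely the $K_5$-components of the single-matching conflict graph in $(a)$ and $(c)$, and the dense local conflict structure on $M_2 \cup M_3$ in $(b)$ and $(d)$. In all cases I expect these to be handled by isolating a small list of forbidden local configurations and showing, through a minimum counterexample or a discharging argument, that each can be resolved by a bounded local recoloring. I expect part $(b)$ to be the most technically involved, since it combines the weakest structural hypothesis with the most restrictive target sequence, while part $(d)$ should follow by specializing the argument of $(b)$ to the class~I setting where a proper $3$-edge-coloring is available from the start.
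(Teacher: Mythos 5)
Your treatment of parts $(a)$ and $(c)$ is essentially the paper's argument: the conflict graph $H(M)$ with $\Delta(H(M))\le 4$ is exactly what the paper obtains by contracting the matching $M$ and applying Brooks' theorem, and the exceptional $K_5$-component is the content of Proposition~\ref{prop:K5}, which pins it down to a cubic graph on $10$ vertices (the paper then disposes of these finitely many graphs directly, which is a legitimate substitute for your unspecified ``local exchange''). Up to that point the proposal is sound.

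For parts $(b)$ and $(d)$, however, there is a genuine gap: what you describe is a plan, and the plan stops exactly where the difficulty begins. The subgraph $M_2\cup M_3$ is indeed a disjoint union of paths and even cycles, and its conflict graph can have degree up to $10$, so everything hinges on your phrase ``exploit the path/cycle structure \dots combined with local swaps \dots step by step along the components.'' No mechanism is given for this, and the obstruction is concrete: an edge of a $bc$-cycle typically has only $3$ available $2$-colors once the rest of the graph is colored, and a cycle $C_n$ is strongly list-colorable from lists of size $3$ only when $n\equiv 0 \bmod 3$ (Theorem~\ref{thm:listcyc}); for other lengths one needs lists of size $4$, and $5$ for $C_5$. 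So a component-by-component greedy or swap argument cannot close the cycles of length $\not\equiv 0\bmod 3$, and this is precisely where the paper invests all its effort: a minimal-counterexample framework with a \emph{strengthened} induction hypothesis (for $(b)$, a ``good'' coloring in which no $2^-$-vertex meets the $1$-color; for $(d)$, the statement that \emph{any} prescribed color class of \emph{any} proper $3$-edge-coloring can serve as the $1$-color), the ``crossing'' surgery on six-vertex subpaths of a $bc$-cycle to gain extra available colors, and Combinatorial Nullstellensatz computations establishing choosability of paths and of the gadgets $D_n$ with carefully tuned list sizes. None of these ingredients, nor substitutes for them, appear in the proposal, and without something of this kind the reduction of two matchings to one does not go through. (Also note that the paper's proof of $(b)$ does not start from a proper edge-coloring at all; it is a direct structural induction, so your framing of $(b)$ as a refinement of $M_2\cup M_3\cup I$ would have to be built from scratch.)
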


The structure of the paper is the following. 
We begin by presenting notation, definitions and auxiliary results in Section~\ref{sec:prel}.
In Section~\ref{sec:11}, we give proofs of the cases $(a)$ and $(c)$ of Theorem~\ref{thm:main}.
In Sections~\ref{sec:1all} and~\ref{sec:1classI}, we proof the cases $(b)$ and $(d)$ of Theorem~\ref{thm:main} 
in even stronger settings. 
We conclude the paper with an overview of open problems and possible further work on this topic.

\section{Preliminaries}
\label{sec:prel}
	
We call a vertex of degree $k$, at most $k$, and at least $k$ a \textit{$k$-vertex}, a \textit{$k^-$-vertex}, and a \textit{$k^+$-vertex}, respectively.
We denote the graph obtained from a graph $G$ by removing a set of vertices $X$ as $G \setminus X$.
When $X = \set{v}$ is a singleton, we simply write $G - v$.
An analogous notation is used for sets of edges.

As usual, the set of vertices adjacent to a vertex $v$ is denoted $N(v)$, and called the \textit{neighborhood of $v$}.
For a vertex $v$, we denote the set of edges incident to $v$ by $N'(v)$,
and the edges incident to the neighbors of $v$ (including the edges in $N'(v)$) by $N''(v)$.
We refer to the former as the \textit{edge-neighborhood of $v$} and to the latter as the \textit{$2$-edge-neighborhood of $v$}.
Analogously, we define the edge-neighborhood and the $2$-edge-neighborhood of an edge $e$.

When coloring the edges, we deal with two types of colors. 
The ones allowing the edges of those colors to be at distance at least $2$ we call the \textit{$1$-colors}, 
and the one requiring the edges to be at distance at least $3$ are called the \textit{$2$-colors}.
An edge colored with a $1$-color (resp. a $2$-color) is a \textit{$1$-edge} (resp. a \textit{$2$-edge}).
We denote the number of $1$-edges (resp. $2$-edges) incident with a vertex $v$ by $c_1(v)$ (resp. $c_2(v)$).
For an edge $uv$, we denote by $A_2(uv)$ the number of available $2$-colors, 
i.e., the $2$-colors with which the edge can be colored without violating the coloring assumptions.
 

Sometimes, we will need a more careful analysis of choosing colors from the lists of available colors.
For that purpose, we will use the classical result due to Hall~\cite{Hal35}.
\begin{theorem}[Hall's Theorem~\cite{Hal35}]
	\label{th:Hall} 
	Let $\mathcal{A}=(A_i)_{i \in I}$ be a finite family of (not necessarily distinct) subsets of a finite set $A$. 
	A system of representatives for the family $\mathcal{A}$ is a set $\{a_i, i \in I\}$ 
	of distinct elements of $A$ such that $a_i \in A_i$ for all $i \in I$. 
	$\mathcal{A}$ has a system of representatives if and only if $|\bigcup_{i \in J} A_i| \ge |J|$ for all subsets $J$ of $I$.
\end{theorem}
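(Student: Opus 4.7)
The plan is to prove the two directions separately. Necessity is immediate: a system of representatives for the subfamily $(A_i)_{i \in J}$ consists of $\abs{J}$ distinct elements drawn from $\bigcup_{i \in J} A_i$, so this union must contain at least $\abs{J}$ elements. For sufficiency, I would proceed by induction on $\abs{I}$, with the base case $\abs{I} = 1$ handled immediately since the hypothesis gives $\abs{A_1} \ge 1$, allowing us to pick any element of $A_1$.

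For the inductive step, I would distinguish two cases according to how tightly Hall's condition is satisfied. In the \emph{abundant} case, every proper nonempty $J \subsetneq I$ satisfies $\abs{\bigcup_{i \in J} A_i} \ge \abs{J} + 1$. I would then pick any index $i_0 \in I$ together with any element $a_{i_0} \in A_{i_0}$, and form the reduced family $A_i' = A_i \setminus \set{a_{i_0}}$ indexed by $I \setminus \set{i_0}$. The slack of $1$ guarantees that the reduced family still satisfies Hall's condition, so induction produces representatives for the remaining indices, which together with $a_{i_0}$ give the desired system.

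The harder case is the \emph{critical} one, where there exists a proper nonempty $J_0 \subsetneq I$ with $\abs{\bigcup_{i \in J_0} A_i} = \abs{J_0}$. Write $U = \bigcup_{i \in J_0} A_i$. Applying induction to $(A_i)_{i \in J_0}$ yields a system of representatives that necessarily exhausts $U$. What remains is to find representatives for the indices in $I \setminus J_0$ avoiding $U$, i.e., for the family $A_i'' = A_i \setminus U$; combining the two partial systems then finishes the proof.

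The main obstacle is verifying that this residual family satisfies Hall's condition, and this is precisely where the structure of the critical subset is used. For any $K \subseteq I \setminus J_0$, applying the original hypothesis to $K \cup J_0$ gives
\[
    \abs[\Big]{\bigcup_{i \in K \cup J_0} A_i} \ge \abs{K} + \abs{J_0}.
\]
Since $\bigcup_{i \in K \cup J_0} A_i$ decomposes as the disjoint union of $U$ and $\bigcup_{i \in K} A_i''$, subtracting $\abs{U} = \abs{J_0}$ yields $\abs{\bigcup_{i \in K} A_i''} \ge \abs{K}$. Thus induction applies a second time to furnish the missing representatives, completing the argument.
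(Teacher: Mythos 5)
Your proof is correct and complete: it is the classical induction argument (often attributed to Halmos and Vaughan) that splits the inductive step according to whether some proper nonempty subfamily meets Hall's condition with equality, and both cases are handled properly --- in particular, the key verification that the residual family $(A_i \setminus U)_{i \in I \setminus J_0}$ still satisfies Hall's condition is done correctly by applying the hypothesis to $K \cup J_0$ and subtracting $\abs{U} = \abs{J_0}$. There is nothing in the paper to compare against: Theorem~\ref{th:Hall} is imported from Hall's 1935 paper as a known tool and is stated without proof, so your argument simply supplies a standard self-contained proof of that cited result.
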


Perhaps the strongest tool for determining if one can always choose colors from the lists of 
available colors such that given conditions are satisfied is the following result, due to Alon~\cite{Alo99}.
\begin{theorem}[Combinatorial Nullstellensatz~\cite{Alo99}]
	\label{thm:null} 
	Let $\mathbb{F}$ be an arbitrary field, 
	and let $P=P(X_1,\ldots,X_n)$ be a polynomial in $\mathbb{F}[X_1,\ldots,X_n]$.
	Suppose that the coefficient of a monomial $X_1^{k_1}\ldots X_n^{k_n}$, 
	where each $k_i$ is a non-negative integer, 
	is non-zero in $P$ 
	and the degree ${\rm deg}(P)$ of $P$ equals $\sum_{i=1}^n k_i$.
	If moreover $S_1,\ldots,S_n$ are any subsets of $\mathbb{F}$ with $|S_i|>k_i$ for $i=1,\ldots,n$,
	then there are $s_1\in S_1,\ldots,s_n\in S_n$ so that $P(s_1,\ldots,s_n) \neq 0$.
\end{theorem}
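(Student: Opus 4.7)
The plan is to argue by contradiction: assume $P(s_1,\ldots,s_n) = 0$ for every choice $(s_1,\ldots,s_n) \in S_1 \times \cdots \times S_n$, and reduce $P$ modulo the vanishing polynomials of each $S_i$ to obtain a polynomial of bounded partial degree that still vanishes on the grid, from which a contradiction follows with the existence of the distinguished leading monomial.

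First I would establish the following auxiliary fact by induction on $n$: if $Q \in \mathbb{F}[X_1,\ldots,X_n]$ satisfies $\deg_{X_i}(Q) \leq t_i$ for each $i$, and $Q$ vanishes identically on $T_1 \times \cdots \times T_n$ with $|T_i| > t_i$, then $Q$ is the zero polynomial. The base case $n=1$ is the classical fact that a nonzero univariate polynomial of degree at most $t$ has at most $t$ roots. For the inductive step, one writes $Q = \sum_{j=0}^{t_n} Q_j(X_1,\ldots,X_{n-1}) \, X_n^j$, fixes an arbitrary $(s_1,\ldots,s_{n-1}) \in T_1 \times \cdots \times T_{n-1}$, and notes that the univariate polynomial $Q(s_1,\ldots,s_{n-1},X_n)$ has more than $t_n$ roots in $T_n$; hence each $Q_j(s_1,\ldots,s_{n-1}) = 0$, and applying the induction hypothesis to each $Q_j$ concludes the step.

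For the main argument, first shrink each $S_i$ to a subset of exactly $k_i + 1$ elements, which only strengthens the vanishing assumption. Let $g_i(X_i) = \prod_{s \in S_i}(X_i - s)$, a monic polynomial of degree $k_i + 1$ vanishing on $S_i$, and write $g_i(X_i) = X_i^{k_i+1} - h_i(X_i)$ with $\deg h_i \leq k_i$. Iteratively replace in $P$ each occurrence of $X_i^{k_i+1}$ by $h_i(X_i)$; the process terminates (e.g.\ by induction on total degree) and produces a polynomial $P'$ with $\deg_{X_i}(P') \leq k_i$ for every $i$, and $P'$ agrees with $P$ pointwise on $S_1 \times \cdots \times S_n$, because every substitution changes the polynomial by a multiple of some $g_i$ and each $g_i$ vanishes on the grid.

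The delicate step, and in my view the main point to verify, is that the reduction never alters the coefficient of the distinguished monomial $X_1^{k_1}\cdots X_n^{k_n}$. Each reduction replaces a factor $X_i^{k_i+1}$ by $h_i(X_i)$, so the total degree of every created monomial is strictly smaller than the total degree of the monomial being reduced; since $X_1^{k_1}\cdots X_n^{k_n}$ has total degree $\sum k_i = \deg(P)$, no new monomial of this total degree can ever appear, and the monomial itself already has $X_i$-degree $k_i$ in each variable so it is never rewritten. Hence the coefficient of $X_1^{k_1}\cdots X_n^{k_n}$ in $P'$ equals its coefficient in $P$, which is nonzero by hypothesis. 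But $P'$ vanishes on $S_1 \times \cdots \times S_n$ and satisfies $\deg_{X_i}(P') \leq k_i < |S_i|$, so the auxiliary fact forces $P' \equiv 0$, a contradiction that completes the proof.
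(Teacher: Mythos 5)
Your proof is correct. Note that the paper does not prove this statement at all: it is quoted verbatim from Alon's paper and used as a black box, so there is no in-paper argument to compare against. What you have written is the standard (and essentially Alon's own) proof via reduction modulo the grid ideal: the multivariate vanishing lemma (a polynomial of partial degree at most $t_i$ in $X_i$ vanishing on a product set with $|T_i|>t_i$ is identically zero), followed by replacing $X_i^{k_i+1}$ by the lower-degree remainder $h_i$ of $g_i(X_i)=\prod_{s\in S_i}(X_i-s)$. All the points that need care are handled: the induction in the vanishing lemma is sound; the reduction changes $P$ only by multiples of the $g_i$, hence not its values on the grid; and your observation that every created monomial has strictly smaller total degree, so that no monomial of total degree $\deg(P)=\sum k_i$ is ever created and the distinguished monomial itself (having $X_i$-degree exactly $k_i$) is never rewritten, correctly justifies that its coefficient survives into $P'$. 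The only spot worth tightening is the termination claim for the iterative rewriting: ``induction on total degree'' should be spelled out, e.g.\ by observing that the maximal total degree of a monomial violating some bound $\deg_{X_i}\le k_i$ never increases while the number of violating monomials of that maximal degree strictly decreases at each step (or simply by performing multivariate division with remainder by the $g_i$ in one pass). This is a routine fix and does not affect the validity of the argument.
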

In short, $P$ being the chromatic polynomial of a graph $G$, 
if there is a monomial (of proper degree) of $P$ with non-zero coefficient, then there exists a coloring of $G$.

When considering lists of available colors for an edge, 
we are in fact dealing with the list version of a coloring. 
We say that $L$ is an \textit{edge-list-assignment} for a graph $G$ if it assigns a list $L(e)$ of possible colors to each edge $e$ of $G$.
If $G$ admits a strong edge-coloring $\sigma$ such that $\sigma(e) \in L(e)$ for all edges in $E(G)$, 
then we say that $G$ is \textit{strong $L$-edge-colorable} or $\sigma$ is a {\em strong $L$-edge-coloring} of $G$.
The graph $G$ is \textit{strong $k$-edge-choosable} if it is strong $L$-edge-colorable for every edge-list-assignment $L$, 
where $|L(e)| \ge k$ for every $e \in E(G)$. 
The \textit{list strong chromatic index} $\chi_{\textrm{ls}}'(G)$ of $G$ is the minimum $k$ such that $G$ is strong $k$-edge-choosable. 

We will use the following result, due to Zhang, Liu, and Wang~\cite{ZhaLiuWan02}
which established a result on an adjacent vertex-distinguishing list edge-coloring of cycles,
i.e., proper list edge-coloring where the sets of colors for every pair of adjacent vertices are distinct.
It is easy to see that such a coloring is also a strong edge-coloring of a cycle, and we write the statement in this language.
\begin{theorem}[Zhang, Liu \& Wang~\cite{ZhaLiuWan02}]
	\label{thm:listcyc}
	Let $n$ be an integer with $n \ge 3$. Then, 
	\begin{itemize}
		\item[$(i)$] $\chi_{\textrm{ls}}'(C_n) = 5$ if $n=5$;
		\item[$(ii)$] $\chi_{\textrm{ls}}'(C_n) = 4$ if $n \not\equiv 0 \bmod{3}$;
		\item[$(iii)$] $\chi_{\textrm{ls}}'(C_n) = 3$ if $n \equiv 0 \bmod{3}$.
	\end{itemize}
\end{theorem}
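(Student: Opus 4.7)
The plan is to reformulate a strong edge-coloring of $C_n$ as a proper list-coloring of the \emph{conflict graph} $G_n$ on vertex set $E(C_n)=\set{e_0,\dots,e_{n-1}}$, where $e_i\sim e_j$ iff the cyclic distance between $i$ and $j$ lies in $\set{1,2}$. Then $G_n\cong K_n$ for $n\le 5$, while for $n\ge 6$ the graph $G_n$ is a $4$-regular circulant in which every three cyclically consecutive vertices induce a triangle. All three bounds will be read off from colorings/non-colorings of $G_n$.

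\emph{Lower bounds.} The triangle structure yields $\chi_{\mathrm{ls}}'(C_n)\ge 3$ in general. The constant list $\set{1,\dots,5}$ on the vertices of $K_5=G_5$ immediately forces $\chi_{\mathrm{ls}}'(C_5)\ge 5$. When $n\not\equiv 0\pmod 3$ and $n\ne 5$, assigning the constant list $\set{1,2,3}$ to every vertex of $G_n$ forces any valid coloring to be a strictly $1,2,3$-periodic pattern of period~$3$ (since every three consecutive vertices must exhaust $\set{1,2,3}$), which cannot close up on a cycle with $3\nmid n$; hence $\chi_{\mathrm{ls}}'(C_n)\ge 4$.

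\emph{Upper bounds.} For $n=5$, a greedy coloring of $K_5$ with lists of size~$5$ works, since each new vertex sees at most four previously colored neighbors. For $n\not\equiv 0\pmod 3$, $n\ne 5$, I would cut the cycle at $e_0$ and greedily color $e_1,e_2,\ldots,e_{n-1}$ from lists of size~$4$ — each step has at most two already-colored neighbors in $G_n$, leaving at least two admissible colors — and then verify that $L(e_0)$ is not fully consumed by the four forbidden colors $c(e_1),c(e_2),c(e_{n-2}),c(e_{n-1})$; if it is, a Hall-type argument (Theorem~\ref{th:Hall}) applied to the last few edges permits a local recoloring that frees one color in $L(e_0)$. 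For $n\equiv 0\pmod 3$, $n\ge 6$, I would apply Combinatorial Nullstellensatz (Theorem~\ref{thm:null}) to the graph polynomial
\[
P \;=\; \prod_{\substack{i<j \\ e_i\sim e_j \text{ in } G_n}} (X_i - X_j),
\]
a polynomial of degree $2n$ in $n$ variables; since every exponent must be at most $2$ and they sum to $2n$, the only admissible monomial is $\prod_i X_i^{2}$, so the task reduces to showing that this monomial has a nonzero coefficient in $P$.

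The main obstacle is precisely this last verification. Since $G_n$ is $4$-regular while the lists have only $3$ elements, no greedy or degeneracy argument succeeds; one must invoke Alon--Tarsi-type counting, for which the coefficient of $\prod_i X_i^{2}$ equals, up to sign, the difference between the numbers of even and odd orientations of $G_n$ in which every vertex has out-degree~$2$. Establishing non-vanishing of this difference for all $n\equiv 0\pmod 3$ with $n\ge 6$ would require exhibiting an explicit sign-reversing involution on the set of such orientations whose fixed-point set is a small, easily-counted family — a natural candidate being the ``rotational'' orientation in which every arc points in the direction of increasing index modulo~$n$, together with its local perturbations.
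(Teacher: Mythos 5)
The paper does not prove this statement at all --- it is quoted verbatim from Zhang, Liu and Wang~\cite{ZhaLiuWan02} and used as a black box --- so your proposal has to be judged on its own. Your setup (passing to the conflict graph $G_n=C_n^2$), both lower bounds, and the case $n=5$ are correct and complete. The two upper bounds, however, are where all the content of the theorem lives, and neither is actually established. For $(ii)$, after the greedy pass you say that if $L(e_0)$ is exhausted by $c(e_1),c(e_2),c(e_{n-2}),c(e_{n-1})$ then ``a Hall-type argument permits a local recoloring that frees one color.'' This is precisely the step that needs an argument: it can happen that each of $e_1,e_2,e_{n-2},e_{n-1}$ has its list reduced to its current color once its three colored conflict-neighbors are excluded, so no single recoloring is available and one must argue more globally. (A clean way to close this gap is to invoke degree-choosability: by the Erd\H{o}s--Rubin--Taylor/Borodin theorem, a connected graph whose lists have size at least the degrees is colorable unless every block is a clique or an odd cycle, and $C_n^2$ for $n\ge 6$ is $2$-connected, $4$-regular, and neither complete nor an odd cycle, hence colorable from lists of size $4$. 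But as written your recoloring claim is an assertion, not a proof.)

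The more serious gap is $(iii)$, which is the deepest part of the theorem and which you explicitly leave open: you reduce it to showing that the coefficient of $\prod_i X_i^2$ in the graph polynomial of $G_n$ is nonzero, and then state what ``would be required'' to verify this, without doing it for a single value of $n$. Beyond being unfinished, the strategy itself carries a real risk: $3$-choosability of $C_n^2$ does not imply that this Alon--Tarsi coefficient is nonzero, so even a successfully constructed sign-reversing involution might reveal that the difference of even and odd out-degree-$2$ orientations vanishes, in which case the Combinatorial Nullstellensatz route fails outright and a direct combinatorial argument (which is what Zhang, Liu and Wang, and independently Prowse and Woodall for powers of cycles, actually carry out) is needed. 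As it stands, parts $(ii)$ and $(iii)$ of the theorem are not proven by your proposal.
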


%
%
%
%
\section{Proofs of the cases $(a)$ and $(c)$ of Theorem~\ref{thm:main}}
\label{sec:11}

We begin with the cases of Theorem~\ref{thm:main} using two $1$-colors.
These two cases provide straightforward extensions of the results by Gastineau and Togni~\cite{GasTog19},
who established them for bridgeless cubic graphs.
The extension comes from the following easy observation.

\begin{proposition}
	\label{prop:K5}
	Let $G$ be a subcubic graph 
	and let $X$ be a set of edges in $G$ such that every two edges in $X$ are at distance exactly $2$.
	Then, $X$ contains at most $5$ edges.
	Moreover, if $|X| = 5$, then $G$ is cubic with $10$ vertices.
\end{proposition}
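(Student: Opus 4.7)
The plan is to pivot on a single edge $e_0 = uv \in X$ and track where the remaining edges of $X$ can attach. Denote by $u_1, u_2$ the (at most two) neighbors of $u$ distinct from $v$, and by $v_1, v_2$ those of $v$ distinct from $u$, when they exist. For any other $e \in X$, the requirement $d(e_0, e) = 2$ simultaneously forbids $e$ from sharing an endpoint with $e_0$ and demands that some endpoint of $e$ be adjacent in $G$ to $u$ or $v$; together, these force $e$ to be incident to one of $u_1, u_2, v_1, v_2$. Moreover, each of these four vertices can host at most one edge of $X \setminus \{e_0\}$, since two such edges meeting at a common vertex would be at distance $1$, contradicting the hypothesis. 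A one-line bipartite incidence count — each of the edges in $X \setminus \{e_0\}$ contributes at least one incidence into $\{u_1, u_2, v_1, v_2\}$, each of the four vertices absorbing at most one — yields $|X \setminus \{e_0\}| \le 4$, so $|X| \le 5$.

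For the moreover statement, $|X| = 5$ forces every inequality above to be tight: all four of $u_1, u_2, v_1, v_2$ must exist (so $\deg(u) = \deg(v) = 3$), and each edge of $X \setminus \{e_0\}$ must contribute \emph{exactly} one incidence, i.e., take the form $u_i a_i$ or $v_j b_j$ with $a_i, b_j \notin \{u, v, u_1, u_2, v_1, v_2\}$. In particular, no edge of $X$ can be an internal chord of $\{u_1, u_2, v_1, v_2\}$, which would spend two slots at once. Since the analysis is symmetric under the choice of pivot, iterating it over every $e \in X$ shows that both endpoints of every edge of $X$ have degree exactly $3$ in $G$. The five edges of $X$ are pairwise vertex-disjoint (they cannot lie at distance $1$), so they contribute precisely $10$ distinct vertices, each of degree $3$.

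It remains to see that $G$ is cubic on precisely these ten vertices. Fix any endpoint $w$ of some $e_w \in X$ and re-run the pivot analysis with $e_w$ in place of $e_0$: the four slots absorbing the other four edges of $X$ are the two remaining $G$-neighbors of $w$ together with the two remaining $G$-neighbors of the other endpoint of $e_w$, and each slot is filled by a distinct edge of $X$. Consequently, each of $w$'s two neighbors off $e_w$ is an endpoint of some $X$-edge, hence lies in the ten-element set. Iterating over all ten endpoints shows that every $G$-neighbor of each of them already lies among the ten, so the induced subgraph on them is $3$-regular on $10$ vertices, as claimed. The main delicacy is precisely this last bookkeeping: one must verify that the symmetry under the choice of pivot really applies to every edge of $X$, and that the tight slot count rules out any $X$-edge from collapsing two of $u_1, u_2, v_1, v_2$ into a single vertex.
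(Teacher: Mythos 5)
Your proof is correct and follows essentially the same route as the paper's: pivot on one edge of $X$, observe that the remaining edges must attach at the (at most four) neighbours of its endpoints with at most one edge of $X$ per attachment point, and in the tight case $|X|=5$ propagate the forced structure to conclude that all ten endpoints have degree $3$ and absorb all of $G$. Your write-up is somewhat more explicit about the tightness bookkeeping (e.g.\ ruling out an $X$-edge occupying two slots), but the underlying argument is the same; like the paper, it implicitly uses that $G$ is connected in the final step.
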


\begin{proof}
	Let $X$ be a set of edges in a subcubic graph $G$ satisfying assumptions of the proposition.
	For an edge $e \in X$, we have at most four adjacent edges, say $e_1$, $e_2$, $e_3$, and $e_4$.
	Each edge $e_i$, $1 \le i \le 4$, can be adjacent to at most one other edge from $X$,
	since otherwise there would be two edges at distance $1$ in $X$.
	This means, $X$ contains at most $5$ edges.
	
	In the case when $|X| = 5$, every edge of $G$ not in $X$ connects two edges of $X$, 
	hence every vertex of $G$ is an end-vertex of some edge from $X$ and thus the number 
	of vertices in $G$ is 10. Since every edge from $X$ is adjacent to four edges, we infer that $G$ is cubic.
\end{proof}

Now, we are ready to prove the cases $(a)$ and $(c)$ of Theorem~\ref{thm:main}.
\begin{proof}[Proof of Theorem~\ref{thm:main}$(a)$ and $(c)$]
	We begin with the case $(a)$.
	Let $G$ be a subcubic graph (we may assume it is connected) 
	and let $\pi$ be a $(1,1,1,2)$-packing edge-coloring of $G$
	(which exists by Theorem~\ref{thm:1112}).
	To establish the statement, we only need to replace one $1$-color in $\pi$ with four $2$-colors.
	Let $X$ be a set of all the edges in $G$ colored by one $1$-color in $\pi$, 
	and let $G^*$ be the graph obtained from $G$ by contracting all the edges in $X$ (and removing loops that are created in the process).
	Clearly, $G^*$ has maximum degree at most $4$, and it is $4$-vertex-colorable by the Brooks' Theorem, 
	unless it is isomorphic to $K_5$. 
	Observe that vertex coloring of $G^*$ induces a strong edge-coloring of the edges in $X$.
	Furthermore, by Proposition~\ref{prop:K5}, the only graphs in which it may happen that five colors are needed
	to color $G^*$, are cubic with $10$ vertices. 
	For these graphs we have even determined that they even admit a $(1,1,2^4)$-packing edge-coloring, 
	and thus establish the case $(a)$.
	
	The case $(b)$ follows immediately from the argument above, since we do not have an extra $2$-color in the coloring $\pi$.
\end{proof}

%
%
%
%
\section{Proof of the case $(b)$ of Theorem~\ref{thm:main}}
\label{sec:1all}

In order to prove Theorem~\ref{thm:main}$(b)$, we prove a bit stronger result.
We say that a $(1,2^8)$-packing edge-coloring of a subcubic graph $G$ with the color set $\set{0,1,\dots,8}$, 
where $0$ is a $1$-color and the others are $2$-colors, 
is a \textit{good $(1,2^8)$-packing edge-coloring} if no $2^-$-vertex of $G$ is incident with a $1$-edge (i.e., an edge colored $0$).

\begin{theorem}
	\label{thm:128_good}
	Every subcubic graph admits a good $(1,2^8)$-packing edge-coloring.
\end{theorem}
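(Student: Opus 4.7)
The plan is to prove Theorem~\ref{thm:128_good} by induction on the number of edges, taking $G$ to be a minimum counterexample. The goodness condition is exactly what makes the induction work cleanly: when an edge incident to a low-degree vertex is removed, the remaining endpoint may become a $2^-$-vertex in the smaller graph, so the inductive good coloring automatically forbids the $1$-color at that vertex, and extending the coloring to $G$ then only requires choosing a $2$-color for the missing edge.

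First I would eliminate low-degree vertices. If $G$ contains a $1$-vertex $v$ with neighbor $u$, deleting $uv$ yields a smaller subcubic graph that by minimality admits a good coloring; extending it to $G$ only requires a $2$-color on $uv$, and the $2$-edge-neighborhood of $uv$ contains at most $2+4=6$ other edges, so at least two of the eight $2$-colors remain available. A similar argument, with sub-cases depending on whether the two neighbors $u_1,u_2$ of a $2$-vertex $v$ are adjacent, disposes of $2$-vertices: delete $v$ (adding the edge $u_1u_2$ if it is not already present), apply induction to the smaller graph, and then re-color the two new edges at $v$ using Hall's Theorem (Theorem~\ref{th:Hall}) on the two lists of still-available $2$-colors. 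After these reductions $G$ may be assumed to be cubic.

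In the cubic case I would rule out further local configurations that permit the same reduction scheme: short cycles of length at most $5$, pairs of adjacent $3$-vertices with heavily overlapping $2$-edge-neighborhoods, and other bounded substructures whose removal leaves a graph to which induction applies and whose re-insertion only requires coloring a short path or cycle. The list edge-coloring result for cycles (Theorem~\ref{thm:listcyc}) and the Combinatorial Nullstellensatz (Theorem~\ref{thm:null}) applied to a suitable polynomial should between them deliver the extension in each case, once the lists of available $2$-colors are verified to be long enough by direct counting within the $2$-edge-neighborhoods of the edges being recolored.

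The main obstacle is the combinatorial step of showing that every cubic graph contains at least one of these reducible configurations, thereby completing the contradiction. Because cubic graphs are regular, simple degree-based counting is insufficient, and the proof must exploit girth, connectivity, and the geometry of $2$-edge-neighborhoods, most likely through a carefully tailored discharging argument with initial charges assigned to vertices or edges and rules that move weight into any non-reducible substructure. Designing these rules so that every cubic graph avoiding all reducible configurations receives a contradictory total charge is the step I expect to demand the most technical work.
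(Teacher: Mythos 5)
Your setup (minimal counterexample, elimination of $1$- and $2$-vertices using the goodness condition, reduction to the cubic case) matches the paper's opening claims. But the proposal does not close, and the closing step you defer to ``a carefully tailored discharging argument'' is precisely where the actual idea of the proof lives, and it is not a discharging argument at all. The reducible configurations you list --- cycles of length at most $5$ and adjacent $3$-vertices with overlapping $2$-edge-neighborhoods --- are all avoided by cubic graphs of large girth, so no discharging scheme over those configurations can succeed. The missing idea is that \emph{every} cycle, of every length, is reducible, which makes the final contradiction immediate: a cubic (hence non-forest) graph with no cycles does not exist. Concretely, for a minimal induced cycle $C=u_1\dots u_n$ ($n\ge 5$) with pendant neighbors $u_i'$, one deletes $V(C)$; each $u_i'$ becomes a $2^-$-vertex of $G'$, so the goodness of the inductive coloring guarantees no $u_i'$ is incident with a $1$-edge, and therefore \emph{all} pendant edges $u_iu_i'$ can simultaneously receive the $1$-color $0$. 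Each edge of $C$ then sees at most four $2$-colors and has at least four available, and Theorem~\ref{thm:listcyc} finishes for $n\ge 6$ (with a short ad hoc argument for $n=5$, where the lists could coincide). This is the place where the strengthened ``good'' statement earns its keep, and it is absent from your plan.

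Two smaller points. First, the paper also needs $G$ to be bridgeless (used, e.g., to separate the pendant neighbors of a triangle), which your reduction list omits. Second, your $2$-vertex reduction adds the edge $u_1u_2$ back; this makes $u_1$ and $u_2$ into $3$-vertices of $G'$, so the goodness condition no longer forbids the color $0$ at them, and the two lists of available $2$-colors can each have size $1$ with the same single color, so Hall's theorem does not apply directly. The paper instead deletes $v$ outright (so $u$ and $w$ become $2^-$-vertices and are $0$-free by goodness) and, in the bad case, recolors an edge at a suitable neighbor of $u$ with $0$ to free up a $2$-color; some such recoloring step is needed and your sketch skips it.
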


\begin{proof}
	We prove Theorem~\ref{thm:128_good} by contradiction. 
	Let $G$ be a minimal counterexample to the theorem in terms of $\abs{V(G)} + \abs{E(G)}$.
	Clearly, $G$ is connected and has maximum degree $3$.
	In the following claims, we establish some structural properties of $G$
	which will eventually yield a contradiction on the existence of $G$.
	In most of the claims, we consider a graph $G'$ smaller than $G$, 
	which, by the minimality of $G$, admits a good $(1,2^8)$-packing edge-coloring $\pi$,
	and we show that $\pi$ can be extended to $G$ by recoloring some edges of $G'$ 
	and coloring the edges of $G$ not being colored by $\pi$.

\begin{claim}
	\label{128_simple}
	$G$ is simple.
\end{claim}

\begin{proofclaim}
	Suppose there are vertices $u$ and $v$ in $G$ connected by at least two parallel edges.
	Remove one of the edges (call it $e$) between them to obtain a smaller graph $G'$, 
	and let $\pi$ be a good $(1,2^8)$-packing edge-coloring of $G'$. 
	We can extend $\pi$ to $G$, since $A_2(e) \le 7$, and hence there is an available $2$-color for $e$.
	
	Loops can be removed in a similar fashion.
\end{proofclaim}

\begin{claim}
	\label{128_deg2-}
	$G$ is cubic.
\end{claim}

\begin{proofclaim}
	Suppose first that there exists a $1$-vertex $v$ adjacent to a vertex $u$ in $G$. 
	By the minimality of $G$, there exists a good $(1,2^8)$-packing edge-coloring $\pi$ of $G'= G - v$,
	meaning $u$ is not incident with a $1$-edge. 
	We can extend $\pi$ to $G$, 
	by coloring $uv$ with any of the (at least two) $2$-colors that do not appear in the $2$-edge-neighborhood of $u$.
	Hence, $G$ does not contain $1$-vertices.

	Suppose now that there exists a $2$-vertex $v$ adjacent to the vertices $u$ and $w$. 
	By the minimality of $G$, $G' = G - v$ admits a good $(1,2^8)$-edge-coloring $\pi$, 
	and hence $u$ and $w$ are not incident with a $1$-edge.
	We show that $\pi$ can be extended to $G$ as follows.
	First observe that if
	$A_2(uv) \ge 2$ and $A_2(vw) \ge 1$,
	or
	$A_2(uv) \ge 1$ and $A_2(vw) \ge 2$,	
	then $\pi$ can be extended and we are done.
	So, we may assume that $A_2(uv) \le 1$ and $A_2(vw) \le 1$.
	It follows that $u$ and $w$ are both $3$-vertices in $G$, and moreover, $u$ and $w$ are not adjacent.	
	Next, let $u_1$ and $u_2$ be the neighbors of $u$ distinct from $v$, and analogously, 
	let $w_1$ and $w_2$ be the neighbors of $w$.
	By the above argument, $d_2(u_1) + d_2(u_2) \ge 5$, 
	meaning that at least one of $u_1$ and $u_2$ is a $3$-vertex not incident with a $1$-edge, say $u_1$.
	Moreover, $u_1$ is not adjacent to $w$.
	Now, we recolor $uu_1$ with $0$. 
	By an analogous argument, we have that we can recolor one of the edges adjacent to $w$ with $0$,
	obtaining a contradiction on the number of available colors.
	Hence, $\pi$ can be extended to $G$.
\end{proofclaim}

\begin{claim}\label{128_2con}
	$G$ is $2$-connected.
\end{claim}

\begin{proofclaim}
	Since $G$ is cubic, the claim is equivalent to saying that $G$ is bridgeless.
	Suppose the contrary and let $uv$ be a bridge in $G$.
	Let $G_u$ (resp. $G_v$) be the component of $G - uv$ containing $u$ (resp. $v$).
	By the minimality of $G$, there is a good $(1,2^8)$-packing edge-coloring $\pi_u$ of $G_u + uv$
	and a good $(1,2^8)$-packing edge-coloring $\pi_v$ of $G_v + uv$. 
	Meaning that the edge $uv$ is in both cases colored with a $2$-color. 
	Now, we permute the $2$-colors in the coloring $\pi_v$ so that the color of $uv$ 
	is the same in both colorings, $\pi_u$ and $\pi_v$, and that the colors on the other two edges incident with $v$ 
	are distinct from the colors on the other two edges incident with $u$ (except possibly the color $0$).
	In this way, we obtain a good $(1,2^8)$-packing edge-coloring of $G$, a contradiction.
\end{proofclaim}

\begin{claim}\label{128_girth4}
	$G$ does not contain $3$-cycles.
\end{claim}

\begin{proofclaim}
	Suppose the contrary and let $C = uvw$ be a $3$-cycle in $G$.
	If $C$ is adjacent to two other $3$-cycles, then $G$ is the complete graph on four vertices, and hence colorable.
	So, we may assume $C$ is adjacent to at most one $3$-cycle. We consider two cases.
	
	Suppose first that $C$ is adjacent to a $3$-cycle $C' = uvx$.
	Let $w'$ and $x'$ be the neighbors of $w$ and $x$, respectively, distinct from $u$ and $v$.
	Since $G$ is bridgeless, by Claim~\ref{128_2con}, $w' \neq x'$.
	Now, by the minimality of $G$, there is a coloring $\pi$ of $(G \setminus \set{uvwx}) + w'x'$ (if $w'x'$ are already connected, we add a parallel edge).
	Let $\pi(w'x') = a$ and consider the coloring of $G$ induced by $\pi$ by coloring $xx'$ and $ww'$ by $a$.
	The edges $uw$, $ux$, $vw$, and $vx$ have each at least $5$ available $2$-colors, while the edge $uv$ has at least $7$.
	This means that we are always able to extend $\pi$ to $G$.
	Thus, $C$ is not adjacent to any $3$-cycle.
	
	Hence, the third neighbors of $u$, $v$, and $w$ (denote them $u'$, $v'$, and $w'$, resp.), are all distinct.
	Let $G'$ be the graph obtained from $G$ by removing $C$ and adding a new vertex $x$ adjacent to the vertices $u'$, $v'$, and $w'$.
	Let $\pi$ be a coloring of $G'$ and let $a = \pi(u'x)$, $b = \pi(v'x)$, and $c = \pi(w'x)$.
	Let $\varphi$ be a partial coloring of $G$ induced by $\pi$, and set $\varphi(u'u) = a$, $\varphi(v'v) = b$, and $\varphi(w'w) = c$.
	Notice that only the edges $uv$, $uw$, and $vw$ are not colored yet in $\varphi$, 
	and each of them has at most $7$ colored neighbors in its $2$-edge-neighborhood.
	If $0 \in \set{a,b,c}$, say $a = 0$, then for each edge of $C$ there are two available $2$-colors, 
	and moreover, the edge $vw$ can be colored with $0$, so $\varphi$ can be extended to all the edges.
	
	Hence, we may assume $a$, $b$, and $c$ are all $2$-colors and moreover, 
	they all have the same one $2$-color available, otherwise $\varphi$ can be extended to all the edges using the color $0$
	and two of the available $2$-colors.
	This means that every non-colored edge must see the same $7$ colors in its colored $2$-edge-neighborhood.	
	\begin{figure}[ht]	
		$$
			\includegraphics{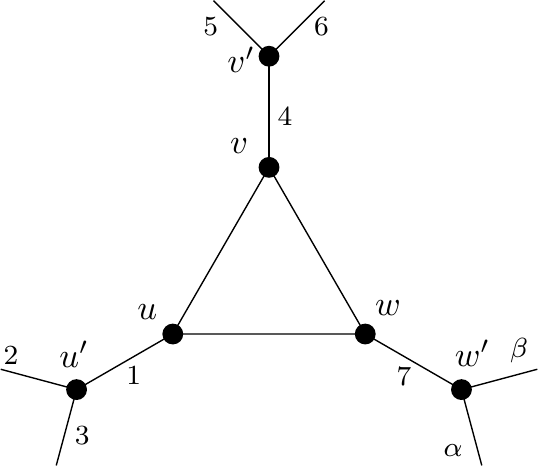}
		$$
		\caption{In a $3$-cycle where all three pendent edges are colored with a $2$-color, we cannot forbid the same $2$-color on all three edges.}
		\label{fig:128:3cycle}
	\end{figure}	
	Suppose $uv$ has $7$ distinct colors in its $2$-edge-neighborhood (as depicted in Fig.~\ref{fig:128:3cycle}).
	Then, in order to  have the same forbidden colors for $uw$ and $vw$, it must hold that $\set{\alpha,\beta} = \set{5,6}$
	and $\set{\alpha,\beta} = \set{2,3}$, respectively. We obtain a contradiction and so $\varphi$ can always be extended to all the edges of $G$.
\end{proofclaim}

\begin{claim}
	\label{128_girth5}
	$G$ does not contain $4$-cycles.
\end{claim}

\begin{proofclaim}
	We again proceed by contradiction. Suppose there is a $4$-cycle $C = uvwz$ in $G$.
	Let $u'$, $v'$, $w'$, and $z'$ be the neighbors of $u$, $v$, $w$, and $z$, respectively which do not belong to $C$. 
	Since $G$ has girth at least $4$, the eight edges $uv$, $vw$, $wz$, $zu$, $uu'$, $vv'$, $ww'$, and $zz'$ are distinct.
	Note that it is possible for the vertices $u'$ and $w'$ (resp. $v'$ and $z'$) to be equal; in such a case, 
	there is at least one more $2$-color available for the uncolored edges at distance two from this vertex. 
	This counter-balance the fact that we may need to use one more $2$-color on the edges incident with this vertex. 
	Therefore, we can assume that $u'$, $v'$, $w'$ and $z'$ are distinct. 
	We construct the graph $G' = (G \setminus V(C)) \cup \set{u'w',v'z'}$ (see the right graph in Fig.~\ref{fig:128:4cycle}).
	\begin{figure}[ht]	
		$$
			\includegraphics{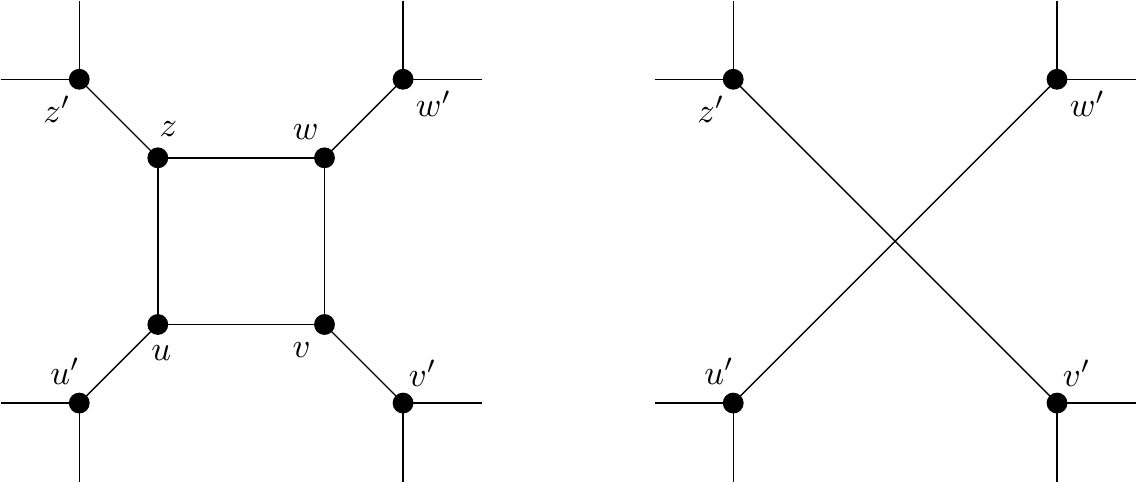}
		$$
		\caption{A $4$-cycle with its neighborhood in $G$ (left), and its replacement in the graph $G'$ (right).}
		\label{fig:128:4cycle}
	\end{figure}		
	By the minimality, $G'$ admits a good $(1,2^8)$-packing edge-coloring $\pi$, and we show that we can always extend $\pi$ to all the edges of $G$.
	We consider three cases regarding the colors of the edges $u'w'$ and $v'z'$ in $\pi$.	
	\begin{itemize}
		\item[$(i)$] \textit{Both, $u'w'$ and $v'z'$, are colored with the $1$-color $0$}. \quad
			Then the coloring $\varphi$ of $G$ induced by $\pi$ has only the edges of $C$ non-colored, 
			while the four pendent edges of $C$ (the edges with one end-vertex in $C$) are colored with $0$.
			This means that there are at least $4$ available $2$-colors for every edge of $C$,
			so we can complete the coloring by Theorem~\ref{th:Hall}.
		\item[$(ii)$] \textit{One of the edges, say $u'w'$, is colored with $0$, and the other with a $2$-color, say $1$.} \quad
			Then, in the induced coloring $\varphi$, we have $\varphi(u'u) = \varphi(w'w) = 0$ and $\varphi(v'v) = \varphi(z'z) = 1$.
			Now, the color $1$ appears on two edges of every $2$-edge-neighborhood of the edges of $C$. 
			So, each of them has at least $3$ available $2$-colors. We consider two subcases. 			
			If the union of all sets of available colors contains at least $4$ distinct colors, we can always choose 
			distinct colors for all the edge, by Hall's theorem. 
			
			So, we may assume that all four edges have the same set of $3$ available colors, say $\set{6,7,8}$.
			This means that on the edges incident to $u'$ and $v'$ there are colors $2$, $3$, $4$, and $5$.
			The same four colors must appear on the edges incident to $v'$ and $w'$, but this implies that
			at least two pairs of edges of the same $2$-color are at distance $2$ in $G'$, a contradiction.
		\item[$(iii)$] \textit{Both, $u'w'$ and $v'z'$, are colored with some $2$-color.} \quad
			In this case, we may color two opposite edges of $C$ with the color $0$.
			The remaining two non-colored edges have at least two available colors each, so we can always complete the coloring.
			This completes the proof of the claim.
	\end{itemize}
\end{proofclaim}

\begin{claim}
	\label{128_noCycles}
	$G$ contains no cycle of length at least $5$.
\end{claim}

\begin{proofclaim}
	Suppose the contrary and let $C = u_1u_2\dots u_n$ be a minimal induced $n$-cycle in $G$, with $n \ge 5$.
	For every $i$, $1\le i \le n$, let $u_i'$ be the neighbor of the vertex $u_i$ not in $C$, and let $G' = G \setminus V(C)$. 
	Note that the $u_i'$ are pairwise distinct by the minimality of $C$.
	Then, by the minimality of $G$, there is a good $(1,2^8)$-packing edge-coloring $\pi$ of $G'$.
	Since $\pi$ is good, no $u_i'$ is incident with the color $0$.
	So, in the coloring $\varphi$ of $G$ induced by $\pi$, we can color every edge $u_iu_i'$ with $0$.
	In this way, only the edges of $C$ are left non-colored and each edge of $C$ has at least four $2$-colors available. 
	
	Suppose first that $n = 5$.
	We can color $C$, except if all five edges have the same four $2$-colors available. 
	If we are in this case, 
	then suppose that $1$ and $2$ are the two colors on the edges incident to $u_1'$, 
	and $3$ and $4$ are the two colors on the edges incident to $u_2'$. 
	Then $\{1,2\}$ must also be on the edges incident to $u_3'$, 
	$\{3,4\}$ on the edges incident to $u_4'$, 
	and again $\{1,2\}$ on the edges incident to $u_5'$. 
	Thus the edge $u_1u_5$ has five available $2$-colors, a contradiction.
	
	If $n \geq 6$, then we can complete the coloring by Theorem~\ref{thm:listcyc}, a contradiction.
\end{proofclaim}

By Claims~\ref{128_deg2-}-\ref{128_noCycles},  $G$ is a cubic bridgeless graph without cycles, a contradiction.
This concludes the proof of Theorem~\ref{thm:128_good}.
\end{proof}

%
%
%
%
\section{Proof of the case $(d)$ of Theorem~\ref{thm:main}}
\label{sec:1classI}

We split this section into two parts. 
First, we introduce notation and auxiliary results, and then use these to prove the case $(d)$ of Theorem~\ref{thm:main} in a stronger setting.

\subsection{Auxiliary results}

To show that certain graphs are strongly colorable from given lists, we will use Combinatorial Nullstellensatz, i.e., Theorem~\ref{thm:null}. 
For this purpose we introduce the following.
For two positive integers $k$ and $\ell$, where $k \le \ell$, we define the polynomial $P_{k,\ell}$ as follows:
\begin{equation}
	\label{eq:path}
	P_{k,\ell}(X_k, \dots, X_\ell)=(X_{k+1}-X_{k}) \cdot \prod_{i=k+2}^\ell (X_{i}-X_{i-2})(X_{i}-X_{i-1})\,.
\end{equation}
If $k=\ell$, by convention we take $P_{k,\ell}(X_k) = 1$.
Furthermore, for a monomial $m$, 
we denote by $p_{k,\ell}(m)$ the coefficient of $m$ in the polynomial $P_{k,\ell}$.

\begin{proposition} 
	\label{claim:pkl-nullstellensatz}
	For $k +2 \leq \ell$, we have the following equalities:
	\begin{align}
		p_{k,\ell}\paren{X_k\paren{\prod_{i=k+1}^{\ell -2}X_{i}^2} X_{\ell -1} X_\ell} &= 
		  \left\{
			  \begin{aligned}
				-1 & \mbox{ if } \ell -k \equiv 0 \bmod 3,\\
				1 & \mbox{ if } \ell -k \equiv 1 \bmod 3,\\
				0 & \mbox{ if } \ell -k \equiv 2 \bmod 3,\\
			  \end{aligned}
			\right.
			\label{eq:xpath1}
	\intertext{~}
		p_{k,\ell}\paren{X_k\paren{\prod_{i=k+1}^{\ell -1}X_{i}^2}} &= 
		  \left\{
			  \begin{aligned}
				0 & \mbox{ if } \ell -k \equiv 0 \bmod 3,\\
				-1 & \mbox{ if } \ell -k \equiv 1 \bmod 3,\\
				1 & \mbox{ if } \ell -k \equiv 2 \bmod 3,\\
			  \end{aligned}
			\right. 
			\label{eq:xpath2}
	\intertext{~}			
		p_{k,\ell}\paren{X_k\paren{\prod_{i=k+1}^{\ell -2}X_{i}^2} X_{\ell -1} X_\ell} 
			&= - ~p_{k,\ell}\paren{X_k X_{k+1}\paren{\prod_{i=k+2}^{\ell -1}X_{i}^2}  X_\ell}
			\label{eq:xpath3},
	\intertext{~}
		p_{k,\ell}\paren{X_k\paren{\prod_{i=k+1}^{\ell -1}X_{i}^2}} 
			&= - ~p_{k,\ell}\paren{\paren{\prod_{i=k+1}^{\ell -1}X_{i}^2} X_\ell}.
			\label{eq:xpath4}
	\end{align}
\end{proposition}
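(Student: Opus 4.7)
The plan is to prove equations (\ref{eq:xpath1}) and (\ref{eq:xpath2}) simultaneously by induction on $n := \ell - k \ge 2$, and then to deduce (\ref{eq:xpath3}) and (\ref{eq:xpath4}) from a reversal symmetry of $P_{k,\ell}$.

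For the base case $n=2$, I would directly expand $(X_{k+1}-X_k)(X_{k+2}-X_k)(X_{k+2}-X_{k+1})$ and read off the coefficient of $X_kX_{k+1}X_{k+2}$ as $0$ and the coefficient of $X_k X_{k+1}^2$ as $1$, which agrees with both identities in the class $n \equiv 2 \pmod 3$. For the inductive step, I would use the recurrence
\[
P_{k,\ell} = P_{k,\ell-1}\cdot\bigl(X_\ell^2 - X_\ell X_{\ell-1} - X_\ell X_{\ell-2} + X_{\ell-1}X_{\ell-2}\bigr).
\]
In the monomial of (\ref{eq:xpath1}), $X_\ell$ appears to the first power, so only the two summands linear in $X_\ell$ contribute, and the coefficient reduces to minus the sum of an instance of (\ref{eq:xpath2}) at $\ell-1$ and an instance of (\ref{eq:xpath1}) at $\ell-1$. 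In the monomial of (\ref{eq:xpath2}), $X_\ell$ does not appear, so only the summand $X_{\ell-1}X_{\ell-2}$ contributes, reducing the coefficient to a single instance of (\ref{eq:xpath1}) at $\ell-1$. In each case, a routine split on $n \bmod 3$ using the inductive values $\{-1,1,0\}$ and $\{0,-1,1\}$ yields the claimed values exactly.

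For (\ref{eq:xpath3}) and (\ref{eq:xpath4}), I would observe that $P_{k,\ell}$ is, up to sign, the graph polynomial of the square of a path on $\{k,k+1,\dots,\ell\}$: every edge $\{i,j\}$ with $|i-j|\in\{1,2\}$ contributes one factor of the form $X_j-X_i$ (larger index minus smaller). Substituting $X_i \mapsto X_{k+\ell-i}$ flips the sign of each of the $2(\ell-k)-1$ factors while permuting them, so
\[
P_{k,\ell}(X_\ell, X_{\ell-1}, \dots, X_k) = -\,P_{k,\ell}(X_k, X_{k+1}, \dots, X_\ell).
\]
Consequently, the coefficient of any monomial $\prod X_i^{a_i}$ equals $-1$ times the coefficient of its reversal $\prod X_i^{a_{k+\ell-i}}$. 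The two monomials paired in (\ref{eq:xpath3}) are exactly such a reversed pair (exponent sequences $(1,2,\dots,2,1,1)$ and $(1,1,2,\dots,2,1)$), and so are those in (\ref{eq:xpath4}) (exponent sequences $(1,2,\dots,2,0)$ and $(0,2,\dots,2,1)$), so both identities follow at once.

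The main obstacle is purely bookkeeping in the inductive step: one must verify that after stripping the factor $(X_\ell-X_{\ell-2})(X_\ell-X_{\ell-1})$, each residual monomial needed from $P_{k,\ell-1}$ has exactly the shape of the left-hand side of (\ref{eq:xpath1}) or (\ref{eq:xpath2}) at parameter $\ell-1$, so that the two identities genuinely feed into each other along the induction. Once the indexing is aligned, the arithmetic modulo $3$ is immediate.
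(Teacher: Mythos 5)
Your proof is correct. For \eqref{eq:xpath1} and \eqref{eq:xpath2} you follow essentially the same route as the paper: peeling off the trailing factor $(X_\ell-X_{\ell-2})(X_\ell-X_{\ell-1})$ yields the coupled recurrences $a_\ell=-a_{\ell-1}-b_{\ell-1}$ and $b_\ell=a_{\ell-1}$ for the two coefficients, and the period-three pattern follows by induction; your explicit base case at $\ell=k+2$ (coefficients $0$ and $1$) is clean and checks out, and is in fact tidier than the paper's seeding of the recurrence, which relies on degenerate small-index conventions. Where you genuinely diverge is in \eqref{eq:xpath3} and \eqref{eq:xpath4}: the paper re-derives mirror recurrences by expanding the leading factors $(X_{k+2}-X_k)(X_{k+1}-X_k)$ and omits the details, whereas you observe directly that $P_{k,\ell}$ is an odd number, namely $2(\ell-k)-1$, of factors each of the form ``larger index minus smaller index,'' so the substitution $X_i\mapsto X_{k+\ell-i}$ permutes the factors while negating each one, giving the global antisymmetry $P_{k,\ell}(X_\ell,\dots,X_k)=-P_{k,\ell}(X_k,\dots,X_\ell)$; since the monomial pairs in \eqref{eq:xpath3} and \eqref{eq:xpath4} are exactly reversals of one another (exponent sequences $(1,2,\dots,2,1,1)$ versus $(1,1,2,\dots,2,1)$ and $(1,2,\dots,2,0)$ versus $(0,2,\dots,2,1)$), both identities drop out at once. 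Your symmetry argument is shorter, avoids a second induction, and would equally yield the reversal of any coefficient of $P_{k,\ell}$, so it is the more economical of the two.
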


\begin{proof}
	First, note that by shifting the indices we can assume, without loss of generality, that $k=1$.
	Next, let 
	$a_\ell = p_{1,\ell}(X_1X_{2}^2\dots X_{\ell -2}^2 X_{\ell -1} X_\ell)$ and 
	$b_\ell = p_{1,\ell}(X_1X_{2}^2\dots X_{\ell -2}^2 X_{\ell -1}^2)$. 
	By expanding the factor $(X_{\ell} - X_{\ell-2})(X_{\ell} - X_{\ell-1})$ of $P_{1,\ell}$, 
	we obtain the following equalities on $a_\ell$ and $b_\ell$ for $\ell \geq 3$:
	\begin{align*}
		 a_\ell &= -p_{1,\ell-1}(X_1X_2^2\dots X_{\ell-3}^2X_{\ell -2}X_{\ell -1}) - p_{1,\ell-1}(X_1X_2^2\dots X_{\ell-3}^2X_{\ell -2}^2)\\
		 &= -a_{\ell-1} - b_{\ell -1},\\
		 b_\ell &= p_{1,\ell-1}(X_1X_2^2\dots X_{\ell-3}^2X_{\ell -2}X_{\ell -1})\\
		 &= a_{\ell -1}.
	\end{align*}
	Thus, 
	$$
		a_\ell = -a_{\ell - 1} - a_{\ell -2}\,.
	$$
	Moreover, $a_1 = 1$ and $a_2 = 0$, thus $a_3 = -1$. 
	By induction, we infer Equalities~\eqref{eq:xpath1} and~\eqref{eq:xpath2} for $a_{\ell}$ and $b_{\ell}$. 
	
	Symmetrically, by expanding the factor $(X_{3} - X_{1})(X_{2} - X_{1})$ of $P_{1,\ell}$,
	we infer analogous recurrences, and consequently Equalities~\eqref{eq:xpath3} and~\eqref{eq:xpath4} follow.
	We omit the proof.
\end{proof}

A graph with $k$ distinct edges $e_1,\dots,e_k$ is an \textit{$(a_1,a_2,\dots ,a_k)$-graph}
if its $i$-th edge $e_i$ is associated with a list of colors $L_i$ of size at least $a_i$. 
We say that an $(a_1,a_2,\dots,a_k)$-graph is \textit{strongly choosable} (or \textit{$(a_1,a_2,\dots,a_k)$-choosable}) 
if it admits a strong edge-coloring of its edges verifying that the color of $e_i$ belongs to $L_i$ for every assignment of $L_i$'s.

We often abbreviate the notation by joining consecutive terms of the same value, 
e.g., a $(2,3,3,3,1)$-path is abbreviated as a $(2,3^3,1)$-path.

Now, we show strong choosability of several configurations that will be used later in the proof.

\begin{lemma}
	\label{lem:path_choose}
	For any positive integer $\ell$, $\ell \ge 3$, a path of length $\ell$ is	
	\begin{itemize}
		\item[$(a)$] $(2,2,3^{\ell-3},2)$-choosable if $\ell \not \equiv 0 \bmod 3$;
		\item[$(b)$] $(2,3^{\ell-2},2)$-choosable.
	\end{itemize}
\end{lemma}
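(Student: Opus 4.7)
The plan is to translate the strong edge-choosability statement into a statement about a non-vanishing coefficient of the polynomial $P_{1,\ell}$ defined in~\eqref{eq:path}, and then apply Combinatorial Nullstellensatz (Theorem~\ref{thm:null}). Indeed, labeling the edges of the path $e_1, e_2, \dots, e_\ell$, a strong edge-coloring corresponds to choosing $x_i \in L_i$ for each $i$ so that $x_i \neq x_{i+1}$ and $x_i \neq x_{i+2}$; these are exactly the conditions that $P_{1,\ell}(x_1,\dots,x_\ell) \neq 0$. Note that $\deg(P_{1,\ell}) = 1 + 2(\ell-2) = 2\ell - 3$, so to apply Theorem~\ref{thm:null} with list-size budget $(|L_i| - 1)_{i=1}^\ell$, I need a monomial of total degree $2\ell - 3$ whose exponent vector fits inside the budget and whose coefficient in $P_{1,\ell}$ is non-zero.

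For part~$(a)$, the budget is $(1,1,2,2,\dots,2,1)$, whose sum equals exactly $2\ell - 3$. Thus the unique candidate monomial is
\[
	M_a = X_1 X_2 \paren{\prod_{i=3}^{\ell-1} X_i^2} X_\ell,
\]
which is precisely the form appearing on the right-hand side of~\eqref{eq:xpath3} with $k = 1$. Combining~\eqref{eq:xpath3} with~\eqref{eq:xpath1} gives
\[
	p_{1,\ell}(M_a) = - p_{1,\ell}\paren{X_1 \paren{\prod_{i=2}^{\ell-2} X_i^2} X_{\ell-1} X_\ell},
\]
and the latter quantity is $-1$, $1$, or $0$ according as $\ell - 1 \equiv 0, 1, 2 \bmod 3$. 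Hence $p_{1,\ell}(M_a) \neq 0$ whenever $\ell \not\equiv 0 \bmod 3$, which is exactly the hypothesis of~$(a)$, and the Nullstellensatz yields the desired strong coloring.

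For part~$(b)$, if $\ell \not\equiv 0 \bmod 3$ then the claim is immediate from~$(a)$, since the list-sizes $(2,3^{\ell-2},2)$ dominate pointwise the sizes $(2,2,3^{\ell-3},2)$. The only remaining case is $\ell \equiv 0 \bmod 3$, and now the new budget $(1,2,2,\dots,2,1)$ sums to $2\ell - 2$, giving room for a monomial in which one exponent is strictly smaller than its maximum. I will take
\[
	M_b = X_1 \prod_{i=2}^{\ell-1} X_i^2,
\]
which has $k_\ell = 0 \leq 1$ and $k_i \leq 2$ throughout the middle, so it fits the budget. This monomial is exactly the one appearing in~\eqref{eq:xpath2} with $k = 1$, and since $\ell - 1 \equiv 2 \bmod 3$, its coefficient equals $1 \neq 0$. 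Nullstellensatz again yields the coloring.

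The proof is essentially a bookkeeping exercise once the right monomials are identified; the only thing requiring any care is noticing that part~$(a)$ fails exactly when $\ell \equiv 0 \bmod 3$, forcing part~$(b)$ to be handled by a separate monomial that exploits the additional slack provided by the list of size $3$ at position $2$. No hard step is anticipated, since all required coefficient computations are packaged in Proposition~\ref{claim:pkl-nullstellensatz}.
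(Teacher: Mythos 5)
Your proof is correct and follows essentially the same route as the paper: the same two monomials $X_1X_2\bigl(\prod_{i=3}^{\ell-1}X_i^2\bigr)X_\ell$ and $X_1\prod_{i=2}^{\ell-1}X_i^2$, with coefficients read off from Proposition~\ref{claim:pkl-nullstellensatz} and the Combinatorial Nullstellensatz applied in the same way. The only (cosmetic) difference is that you explicitly chain~\eqref{eq:xpath3} with~\eqref{eq:xpath1} to evaluate the first coefficient, where the paper cites~\eqref{eq:xpath1} directly; your version is if anything the more careful one.
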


\begin{proof}
	Let $P$ be an $\ell$-path with the consecutive edges $e_1,\dots,e_\ell$, 
	and each edge $e_i$ has a list of available colors $L_i$ for every $i$, $1 \le i \le \ell$.
	Moreover, to each edge $e_i$, $1 \le i \le \ell$, we associate the variable $X_i$.

	First, consider the case $(a)$.
	By Theorem~\ref{thm:null}, we have that if the coefficient of $X_1 X_2 \paren{\prod_{i=3}^{\ell-1}X_{i}^2} X_\ell$ is non-zero, 
	then there is a solution $(x_1,\dots,x_\ell) \in L_1 \times \dots \times L_\ell$ such that $P_{1,\ell}(x_1,\dots,x_\ell) \neq 0$. 
	By Equation~\eqref{eq:xpath1} of Proposition~\ref{claim:pkl-nullstellensatz},
	this coefficient is non-zero if and only if $\ell - 1 \not\equiv 2 \bmod 3$, 
	thus only in the case when $\ell$ is not a multiple of $3$.
	This proves the case $(a)$.

	Now, we proceed with the case $(b)$.
	If $P$ is a $(2,3^{\ell-2}, 2)$-path, then it is also a $(2,2,3^{\ell-3}, 2)$-path.
	Thus, by the case $(a)$, it suffices to consider the case where $\ell$ is a multiple of $3$. 
	By Equation~\eqref{eq:xpath2} of Proposition~\ref{claim:pkl-nullstellensatz}, 
	the coefficient of $X_1\paren{\prod_{i=2}^{\ell - 1}X_{i}^2}$ is $1$ if $\ell \equiv 0 \bmod 3$, 
	and so, by Theorem~\ref{thm:null}, $P$ is colorable from its lists.
	This completes the proof.	
\end{proof}

Note that in the case $(b)$ of Lemma~\ref{lem:path_choose} we proved a stronger result; 
namely, a $(2,3^{\ell-2},1)$-path is strongly choosable if $\ell \equiv 0 \mod 3$.

\begin{proposition}
	\label{prop:trick}
	For a positive integer $\ell$,
	let $X$ be a set of colors with $|X| = 3$, 
	and $P$ be a $(2,3,\ldots,3,a)$-path of length $3\ell + 1$, 
	with $a \in \{2,3\}$, such that $L_i \subseteq X$ for every $i \in \{1,\ldots,3\ell + 1\}$. 
	If $\sigma$ is a strong edge-coloring of $P$ with $\sigma(e_i) \in L_i$, 
	then $\sigma(e_1) = \sigma(e_{3\ell+1}) \in L_1 \cap L_{3\ell+1}$.
\end{proposition}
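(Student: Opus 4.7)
The plan is to exploit the fact that three consecutive edges of $P$ must receive three pairwise distinct colors, combined with $\abs{X}=3$, to force a strict period-$3$ structure on the coloring $\sigma$.

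First I would observe that, since $P$ is a path, a strong edge-coloring is exactly a coloring in which every pair of edges at distance $1$ or $2$ receive different colors. Hence for every index $i$ with $1 \le i \le 3\ell - 1$, the three consecutive edges $e_i, e_{i+1}, e_{i+2}$ have pairwise distinct colors. Since each of these colors lies in $X$ and $\abs{X} = 3$, we must have
\begin{equation*}
	\set{\sigma(e_i), \sigma(e_{i+1}), \sigma(e_{i+2})} = X.
\end{equation*}

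Next I would push this one step further. For every index $i$ with $1 \le i \le 3\ell - 2$, the color $\sigma(e_{i+3})$ must differ from both $\sigma(e_{i+1})$ and $\sigma(e_{i+2})$ and again lies in $X$, so
\begin{equation*}
	\sigma(e_{i+3}) = X \setminus \set{\sigma(e_{i+1}), \sigma(e_{i+2})} = \sigma(e_i).
\end{equation*}
Thus $\sigma$ is periodic with period $3$ along the path. A straightforward induction on $k$ then yields $\sigma(e_{1+3k}) = \sigma(e_1)$ for every $k$ with $1 \le k \le \ell$, and in particular $\sigma(e_{3\ell+1}) = \sigma(e_1)$. Since $\sigma(e_1) \in L_1$ and $\sigma(e_{3\ell+1}) \in L_{3\ell+1}$ by assumption, this common color lies in $L_1 \cap L_{3\ell+1}$, as required.

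There is no real obstacle here: the only thing to be careful about is that the argument uses only the containment $L_i \subseteq X$ (not that $L_i = X$), and that it applies uniformly whether $a = 2$ or $a = 3$, since the endpoint lists are used only to constrain where $\sigma(e_1)$ and $\sigma(e_{3\ell+1})$ can lie, not to determine the color on any intermediate edge. The values of $\abs{L_1}$ and $\abs{L_{3\ell+1}}$ (namely $\ge 2$) and the length $3\ell+1$ being congruent to $1$ modulo $3$ are the features that make the propagation reach all the way from $e_1$ to $e_{3\ell+1}$ without indeterminacy.
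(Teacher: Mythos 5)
Your proof is correct and takes essentially the same approach as the paper's: both arguments use the fact that any three consecutive edges must receive all three colors of $X$, which forces $\sigma(e_{i+3})=\sigma(e_i)$ and hence period-$3$ propagation yielding $\sigma(e_{3\ell+1})=\sigma(e_1)$.
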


\begin{proof}
	Without loss of generality, let $X=\{1,2,3\}$, $L_0=\{1,2\}$, and $\sigma(e_1)=1$. 
	Then, $\sigma(e_2),\sigma(e_3) \in \{2,3\}$, and hence $\sigma(e_4)=1$. 
	By induction, $\sigma(e_{3j+1}) = 1$ and $\sigma(e_{3j-1}), \sigma(e_{3j}) \in \{2,3\}$. 
	Thus, $\sigma(e_{3\ell+1}) = 1 = \sigma(e_1)$.
\end{proof}

%
%

Given an $n$-path $P=u_0u_1\dots u_n$, 
we define the graph $D_n$ as the graph obtained from $P$ by adding the edge $u_1u_{n-1}$ (see Figure~\ref{fig:eldiablo}).
In $D_n$, we will only color the edges of the initial path but with restrictions established also by the edge $u_1u_{n-1}$.
\begin{figure}[htp!]	
	$$
		\includegraphics{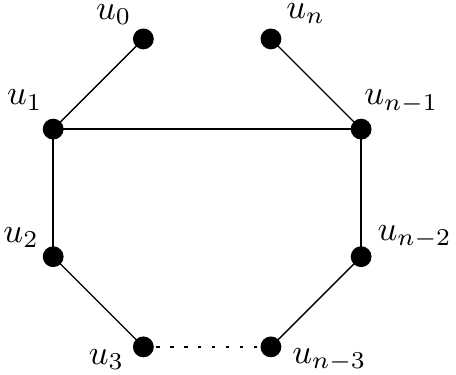}
	$$
	\caption{The graph $D_n$. Note that we do not color the edge $u_1u_{n-1}$ when considering this configuration.}
	\label{fig:eldiablo}
\end{figure}

Now, we define polynomials that will be used in proving strong choosability of $D_n$.
For an integer $n$, $n \ge 5$, we define the following:
$$
	C_n(X_1,\dots,X_{n})= (X_1-X_{n-1})(X_1-X_{n})(X_2-X_{n})(X_2-X_1) \mathlarger{\prod}_{i=3}^{n} \big( (X_{i}-X_{i-2})(X_{i}-X_{i-1}) \big),
$$
and
$$
	D_n(X_1,\dots,X_{n})=C_n \cdot (X_2-X_{n-1})\,.
$$
Note that they correspond to the coloring polynomial of the edges of $C_n$ and $D_n$.
For example, if $(x_1, \dots, x_{n})$ is a non zero solution of $C_n$, 
then it corresponds to a valid strong edge-coloring of the cycle of length $n$ where $c(e_i) = x_i$.
For a monomial $m$, by $c_n(m)$ we denote the coefficient of $m$ in $C_n$. 
Similarly, $d_n(m)$ is the coefficient of $m$ in $D_n$.

\begin{lemma}
	\label{lem:el_diablo}
	For every integer $n$, $n \geq 10$, $D_n$ is $(3,4,2,3,5,3,2,3^{n-9},4,2)$-choosable.
\end{lemma}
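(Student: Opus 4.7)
The plan is to apply Combinatorial Nullstellensatz (Theorem~\ref{thm:null}) to the coloring polynomial $D_n$, which has total degree $2n+1$. The candidate monomial is
\[
M_\star = X_1^{2} X_2^{3} X_3 X_4^{2} X_5^{4} X_6^{2} X_7 \paren{\prod_{i=8}^{n-2} X_i^{2}} X_{n-1}^{3} X_n,
\]
whose exponents are exactly one less than the prescribed list sizes $(3,4,2,3,5,3,2,3^{n-9},4,2)$ and whose total degree equals $2n+1=\deg D_n$. It will suffice to prove that its coefficient $d_n(M_\star)$ is nonzero for every $n\ge 10$.

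To compute this coefficient I would factor $D_n=\Phi\cdot P_{1,n}$, where $\Phi=(X_2-X_{n-1})(X_1-X_{n-1})(X_1-X_n)(X_2-X_n)$ collects precisely the four factors of $D_n$ introduced by the chord $u_1u_{n-1}$. Expanding $\Phi$ in $X_1,X_2,X_{n-1},X_n$ and discarding monomials whose exponents exceed those permitted by $M_\star$ leaves at most eleven admissible terms $c\cdot X_1^{a_1}X_2^{a_2}X_{n-1}^{a_{n-1}}X_n^{a_n}$ with $a_1+a_2+a_{n-1}+a_n=4$ and $(a_1,a_2,a_{n-1},a_n)\le(2,3,3,1)$ componentwise. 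Each such term contributes $c\cdot p_{1,n}(\mu)$ to $d_n(M_\star)$, where $\mu$ is the complement of the chosen $\Phi$-monomial inside $M_\star$. To evaluate $p_{1,n}(\mu)$ I would exploit that $X_5^4$ is the maximum possible degree of $X_5$ in $P_{1,n}$: realising it forces $X_5$ to be picked from all four factors $(X_5-X_3),(X_5-X_4),(X_6-X_5),(X_7-X_5)$ with overall sign $+1$, collapsing $P_{1,n}$ into the product $P_{1,4}(X_1,\dots,X_4)\cdot(X_6-X_4)\cdot P_{6,n}(X_6,\dots,X_n)$. Distributing the remaining degrees between the "prefix" $P_{1,4}$ (a concrete polynomial in four variables whose coefficients can be tabulated) and the "suffix" $P_{6,n}$ leaves at most two degree-compatible splits per $\Phi$-term. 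After the index shift $Y_j = X_{j+5}$, each suffix coefficient becomes a coefficient of $P_{1,m}$ with $m=n-5$ that is accessible through Proposition~\ref{claim:pkl-nullstellensatz}.

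The main obstacle is the combinatorial bookkeeping. The signs produced by the expansion of $\Phi$, by the prefix $P_{1,4}$ contributions, and by the suffix identities~\eqref{eq:xpath1}--\eqref{eq:xpath4} must combine without accidental cancellation in each of the three residue classes of $n$ modulo~$3$. Additionally, some suffix monomials that arise (for example, ones of shape $Y_1^2 Y_2 Y_3^2\cdots Y_{m-1}^2$ after the index shift) are not literally among the four shapes covered by Proposition~\ref{claim:pkl-nullstellensatz}; for these I would peel off the pair of factors involving the extremal variable and iterate the same recurrence that underlies~\eqref{eq:xpath1}--\eqref{eq:xpath4} to reduce them to the standard shapes. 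The final step is a routine but careful verification that, in each residue class of $n \bmod 3$ and for every $n\ge 10$, the sum of all admissible contributions is a nonzero integer, which by Theorem~\ref{thm:null} then produces the desired strong $L$-edge-coloring of $D_n$.
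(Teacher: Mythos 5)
Your setup is sound: the factorization $D_n=\Phi\cdot P_{1,n}$ with $\Phi=(X_2-X_{n-1})(X_1-X_{n-1})(X_1-X_n)(X_2-X_n)$ is correct, the target monomial and its degree count match Theorem~\ref{thm:null}, and the observation that $X_5^4$ forces $X_5$ to be taken from all four factors containing it (with overall sign $+1$), collapsing $P_{1,n}$ into $P_{1,4}\cdot(X_6-X_4)\cdot P_{6,n}$, is a legitimate reduction in the spirit of the paper's own manipulations. This is essentially the same Nullstellensatz strategy as the paper, with slightly different bookkeeping (the paper instead writes $D_n=C_n\cdot(X_2-X_{n-1})$ and $C_n=Q_n\cdot P_{1,n-2}$, peeling off the seven wrap-around factors at once).

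The genuine gap is that you never evaluate the coefficient: the entire content of the lemma is the non-vanishing of $d_n(M_\star)$, and your proposal ends by asserting that ``a routine but careful verification'' will show the sum of contributions is nonzero in each residue class of $n\bmod 3$. That assertion is exactly what needs to be proved, and it is not routine. The paper's computation splits $d_n(M_\star)=\alpha-\beta$, where $\alpha$ is the contribution coming from choosing $X_2$ in the chord factor $(X_2-X_{n-1})$, and it turns out that $\alpha=0$ identically for all $n\ge 10$ --- a massive cancellation among your eleven $\Phi$-terms that is so far from obvious that the authors could not establish it by direct expansion: they had to argue indirectly, by constructing, for each residue class of $n$, explicit list assignments from which the corresponding cycle admits no strong edge-coloring (their Claim~\ref{cl:alpha}). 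The surviving part $\beta$ then requires evaluating six coefficients $\beta_1,\dots,\beta_6$ of $P_{1,n-2}$ via Proposition~\ref{claim:pkl-nullstellensatz}, several of which are individually zero, before one finds $\beta\in\{1,-2,1\}$ according to $n\bmod 3$. So accidental cancellation is not a hypothetical risk in this computation --- it demonstrably happens to half of the coefficient --- and without carrying the arithmetic through to an explicit nonzero value your argument does not establish the lemma.
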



\begin{proof}
By Theorem~\ref{thm:null}, it suffices to show that 
$$
	d_n(X_1^{2}X_2^{3}X_3X_4^{2}X_5^{4}X_6^{2}X_7X_8^{2}\ldots X_{n-2}^{2}X_{n-1}^{3}X_{n}) \neq 0\,.
$$
By the definition of $D_n$, we have that
\begin{align*}
	& d_n(X_1^{2} X_2^{3} X_3 X_4^{2} X_5^{4} X_6^{2} X_7 X_8^{2} \ldots X_{n-2}^{2} X_{n-1}^{3} X_{n})  \\ 
	=& \ c_n(X_1^{2} X_2^{2} X_3 X_4^{2} X_5^{4} X_6^{2} X_7 X_8^{2} \ldots X_{n-2}^{2} X_{n-1}^{3} X_{n}) 
	   - c_n(X_1^{2} X_2^{3} X_3 X_4^{2} X_5^{4} X_6^{2} X_7 X_8^{2} \ldots X_{n-2}^{2} X_{n-1}^{2} X_{n})
\end{align*}
Let 
$$
	\alpha = c_n(X_1^{2} X_2^{2} X_3 X_4^{2} X_5^{4} X_6^{2} X_7 X_8^{2} \ldots X_{n-2}^{2} X_{n-1}^{3} X_{n})
$$ 
and 
$$
	\beta = \um c_n(X_1^{2} X_2^{3} X_3 X_4^{2} X_5^{4} X_6^{2} X_7 X_8^{2} \ldots X_{n-2}^{2} X_{n-1}^{2}X_{n}).
$$

We consider the values of $\alpha$ and $\beta$ separately.
\begin{claim}
	\label{cl:alpha}
	For every integer $n \ge 10$, we have that $\alpha = 0$.
\end{claim}

\begin{proofclaim}
Suppose to the contrary that there is $n$ for which $\alpha \neq 0$.
Then by Theorem~\ref{thm:null}, a $(3,3,2,3,5,3,2,3^{n-9},4,2)$-cycle is strongly choosable. 
To reach a contradiction, we will find a $(3,3,2,3,5,3,2,3^{n-9},4,2)$-cycle for every $n$, which is not strongly choosable. 


\begin{itemize}
	\item[$(a)$] Let $n \equiv 0 \bmod 3$. Set 
		$L_1=\{1,2,3\}$,
		$L_2=\{1,2,3\}$,
		$L_3=\{2,3\}$,
		$L_4=\{1,2,3\}$,
		$L_5=\{1,2,3,4,5\}$,
		$L_6=\{1,3,4\}$,
		$L_7=\{1,4\}$,
		$L_8=\{1,3,4\},\ldots,L_{n-3}=\{1,3,4\}$,
		$L_{n-2}=\{2,3,4\}$,
		$L_{n-1}=\{1,2,3,4\}$,
		$L_{n}=\{1,2\}$.

		By Proposition~\ref{prop:trick}, using the path $e_3,e_2,e_1,e_{n}$, we have $\sigma(e_{n})= \sigma(e_3) = 2$. 
		Then, the color $2$ is forbidden in $L_{n-2}$, so we may assume that $L_{n-2}=\{3,4\}$. 
		By using Proposition~\ref{prop:trick} on the path $e_7,e_8,\dots,e_{n-3},e_{n-2}$, we infer $\sigma(e_{n-2})= \sigma(e_7) = 4$. 
		Now, since there are only colors $1$ and $3$ to color $e_1$, $e_2$, and $e_4$, by symmetry, 
		we may set $\sigma(e_1) = \sigma(e_4) = 1$ and $\sigma(e_2) = 3$.
		This forces $\sigma(e_{n-1}) = 3$. Now the colors for the edges $e_{n-3},\dots,e_8$, and $e_6$ are forced and $\sigma(e_6) = 1$. But, $e_4$ and $e_6$ are at distance $2$, so the cycle cannot be colored.

	\item[$(b)$] Let $n \equiv 1 \bmod 3$. Set 
		$L_1=\{1,2,3\}$,
		$L_2=\{1,2,3\}$,
		$L_3=\{2,3\}$,
		$L_4=\{2,3,5\}$,
		$L_5=\{1,2,3,4,5\}$,
		$L_6=\{1,4,5\}$,
		$L_7=\{1,4\}$,
		$L_8=\{1,3,4\},\ldots,L_{n-3}=\{1,3,4\}$,
		$L_{n-2}=\{1,2,4\}$,
		$L_{n-1}=\{1,2,3,4\}$, and
		$L_{n}=\{1,2\}$.

		Again, using the path $e_3,e_2,e_1,e_{n}$, by Proposition~\ref{prop:trick}, we have that $\sigma(e_{n})= c(e_3) = 2$.
		Similarly, using the path $e_7,e_8,\dots,e_{n-3}$, we infer $\sigma(e_{n-3}) \in \{1,4\}$. 
		Now, $\sigma(e_{n-2})$ and $\sigma(e_{n-3})$ both belong to $\{1,4\}$ and thus $\sigma(e_{n-1}) = 3$. 
		Hence, we deduce the following colors: $\sigma(e_1) = 1$, $\sigma(e_2) = 3$, $\sigma(e_4) = 5$, $\sigma(e_{n-4}) = 3,\dots,\sigma(e_{9}) = 3$,
		$\sigma(e_{7}),\sigma(e_{8}) \in \{1,4\}$, and $\sigma(e_{6}) = 5$. 
		But, $e_4$ and $e_6$ are at distance $2$, and so the cycle cannot be colored.

	\item[$(c)$] Let $n \equiv 2 \bmod 3$. Set 
		$L_1=\{1,2,3\}$,
		$L_2=\{1,2,3\}$,
		$L_3=\{2,3\}$,
		$L_4=\{1,2,3\}$,
		$L_5=\{1,2,3,4,5\}$,
		$L_6=\{1,3,4\}$,
		$L_7=\{1,3\}$,
		$L_{8}=\{1,3,4\},\ldots,L_{n-4}=\{1,3,4\}$,
		$L_{n-2}=\{1,3,4\}$,
		$L_{n-2}=\{1,2,3\}$,
		$L_{n-1}=\{1,2,3,4\}$, and
		$L_{n}=\{1,2\}$.

		Using the path $e_3,e_2,e_1,e_{n}$, by Proposition~\ref{prop:trick}, we again have that $\sigma(e_{n})= c(e_3) = 2$, and hence $\sigma(e_{n-2}) \in \set{1,3}$.
		Then, using the path $e_7,e_8,\dots,e_{n-4}$, we infer that $\sigma(e_{n-4}) \in \{1,3\}$. 
		Now, $\sigma(e_{n-2})$ and $\sigma(e_{n-4})$ both belong to $\{1,3\}$ thus $\sigma(e_{n-3}) = 4$. 
		After coloring these three edges, we can assume that $\sigma(e_{n-1}) = 1$. 
		It follows that $\sigma(e_1) = 3$, $\sigma(e_2) = 1$, $\sigma(e_4) = 3$, $\sigma(e_{n-2}) = 3$, 
		$\sigma(e_{n-4}) = 1,\dots,\sigma(e_{9}) = 3$, $\sigma(e_{8}) =4$, $\sigma(e_{7}) = 1$, and $\sigma(e_{6}) = 3$. 
		But $e_4$ and $e_6$ are at distance $2$, and so the cycle cannot be colored.
\end{itemize}
Since there are non-colorable cycles for every length $n$, we obtain a contradiction. Thus $\alpha = 0$.
\end{proofclaim}

\begin{claim}
	\label{cl:beta}
	For every integer $n \ge 12$, we have that $\beta \neq 0$.
\end{claim}

\begin{proofclaim}
	We first define the polynomial $Q_n$ as
	\begin{align*}
		Q_n = &(X_{n-1} - X_{n-2})(X_{n-1} - X_{n-3})(X_{n} - X_{n-1})(X_{n} - X_{n-2})\\
		&(X_{1} - X_{n-1})(X_{1} - X_{n})(X_{2} - X_{n})
	\end{align*}
	Observe that $C_n = Q_n \times P_{1,n-2}$.	
	Since $P_{1,n-2}$ does not contain $X_{n-1}$ and $X_{n}$, 
	it suffices to find the coefficients of $Q_n$ having $X_{n-1}^2 X_{n}$ as a factor to calculate $\beta$.
	Let us write $Q_n = \sum\limits_{i,j} X_{n-1}^i X_{n}^j R_{n,i,j}$, where $R_{n,i,j}$ is a polynomial in $X_1, \dots, X_{n-2}$. 
	We have
	\begin{align*}
		R_{n,2,1} =&  X_1^2 X_2 X_{n-3} + X_1^2 X_{n-3}X_{n-2} + X_1 X_2 X_{n-3} X_{n-2} \\
		&+ X_1^2 X_{n-2}^2  + X_1 X_{n-3} X_{n-2}^2 + X_2 X_{n-3} X_{n-2}^2\,.
	\end{align*}
	For a monomial $m$, we denote by $r_{n,2,1}(m)$ the coefficient of $m$ in $R_{n,2,1}(m)$.
	So, we may write $\beta$ as
	\begin{align}
		\begin{split}		
		\beta &= 
			\um c_n(X_1^{2} X_2^{3} X_3 X_4^{2} X_5^{4} X_6^{2} X_7 X_8^{2} \ldots X_{n-2}^{2} X_{n-1}^{2}X_{n}) \label{eq:beta} \\
			  &= - r_{n,2,1}(X_1^2 X_2 X_{n-3}) \cdot \overbrace{p_{1,n-2}(X_2^{2} X_3 X_4^{2} X_5^{4} X_6^{2} X_7 X_8^{2} \ldots X_{n-4}^{2} X_{n-3} X_{n-2}^{2})}^{\beta_1} \\
			  &-  r_{n,2,1}(X_1^2 X_{n-3} X_{n-2}) \cdot \overbrace{p_{1,n-2}(X_2^{3} X_3 X_4^{2} X_5^{4} X_6^{2} X_7 X_8^{2} \ldots X_{n-4}^{2} X_{n-3} X_{n-2})}^{\beta_2} \\
			  &-  r_{n,2,1}(X_1 X_2 X_{n-3} X_{n-2}) \cdot \overbrace{p_{1,n-2}(X_1 X_2^{2} X_3 X_4^{2} X_5^{4} X_6^{2} X_7 X_8^{2} \ldots X_{n-4}^{2} X_{n-3} X_{n-2})}^{\beta_3} \\
			  &-  r_{n,2,1}(X_1^2 X_{n-2}^2) \cdot \overbrace{p_{1,n-2}(X_2^{3} X_3 X_4^{2} X_5^{4} X_6^{2} X_7 X_8^{2} \ldots X_{n-4}^{2} X_{n-3}^2)}^{\beta_4} \\
			  &-  r_{n,2,1}(X_1 X_{n-3} X_{n-2}^2) \cdot \overbrace{p_{1,n-2}(X_1 X_2^{3} X_3 X_4^{2} X_5^{4} X_6^{2} X_7 X_8^{2} \ldots X_{n-4}^{2} X_{n-3})}^{\beta_5} \\
			  &-  r_{n,2,1}(X_2 X_{n-3} X_{n-2}^2) \cdot \overbrace{p_{1,n-2}(X_1^2 X_2^{2} X_3 X_4^{2} X_5^{4} X_6^{2} X_7 X_8^{2} \ldots X_{n-4}^{2} X_{n-3})}^{\beta_6}
		\end{split}
	\end{align}

	It remains to determine the coefficients of $P_{1,n-2}$ for the monomials appearing in~\eqref{eq:beta}.
	We compute them by reducing them in simpler forms.
	In particular, we make use of the facts that 
	$$
		P_{k,\ell}(X_k,\dots,X_\ell) = (X_{k+1}-X_k)(X_{k+2} - X_k) \cdot P_{k+1,\ell}(X_{k+1},\dots,X_\ell),
	$$
	$$
		P_{k,\ell}(X_k,\dots,X_\ell) = (X_{\ell}-X_{\ell-1})(X_{\ell} - X_{\ell-2}) \cdot P_{k,\ell-1}(X_{k},\dots,X_{\ell-1})\,.
	$$	
	and that $X_5$ must appear in four terms of $P_{1,n-2}$, meaning that  $X_5$ must be chosen in each of these terms when we expand the polynomial. 
	The same is true for $X_1$ when it is raised to the power $3$ as it appears in three terms.

	Let us now compute the six coefficients.
	\begin{align*}
		\beta_1 & = p_{1,n-2}(X_2^{2} X_3 X_4^{2} X_5^{4} X_6^{2} X_7 X_8^{2} \ldots X_{n-4}^{2} X_{n-3} X_{n-2}^{2}) \\
		& = p_{2,n-2}(X_2 X_4^{2} X_5^{4} X_6^{2} X_7 X_8^{2} \ldots X_{n-4}^{2} X_{n-3} X_{n-2}^{2}) \\
		& = \um p_{3,n-2}(X_4 X_5^{4} X_6^{2} X_7 X_8^{2} \ldots X_{n-4}^{2} X_{n-3} X_{n-2}^{2}) \\
		& = \um p_{4,n-2}(X_5^{3} X_6^{2} X_7 X_8^{2} \ldots X_{n-4}^{2} X_{n-3} X_{n-2}^{2}) \\
		& = \um p_{5,n-2}(X_5^{2} X_6 X_7 X_8^{2} \ldots X_{n-4}^{2} X_{n-3} X_{n-2}^{2}) \\
		&= \um p_{6,n-2}(X_6 X_7 X_8^{2} \ldots X_{n-4}^{2} X_{n-3} X_{n-2}^{2}) \\
		& = \um p_{6,n-3}(X_6 X_7 X_8^{2} \ldots X_{n-4}^{2} X_{n-3}) \\
		& = 
		\left \{ 
			\begin{array}{cl}
				\um 1 	&\mbox{if}\; n \equiv 0 \bmod 3; \\        
				1 		&\mbox{if}\; n \equiv 1 \bmod 3; \hspace{1cm} \text{by Proposition~\ref{claim:pkl-nullstellensatz}\eqref{eq:xpath3} and \eqref{eq:xpath1}}\\
				0 		&\mbox{if}\; n \equiv 2 \bmod 3.
		  	\end{array}
		\right.
	\end{align*}

	\begin{align*}
		\beta_2 &= p_{1,n-2}(X_2^{3} X_3 X_4^{2} X_5^{4} X_6^{2} X_7 X_8^{2} \ldots X_{n-4}^{2} X_{n-3} X_{n-2}) \\
		&= p_{2,n-2}(X_2^{2} X_4^{2} X_5^{4} X_6^{2} X_7 X_8^{2} \ldots X_{n-4}^{2} X_{n-3} X_{n-2}) \\
		&= p_{3,n-2}(X_4^{2} X_5^{4} X_6^{2} X_7 X_8^{2} \ldots X_{n-4}^{2} X_{n-3} X_{n-2}) \\
		&= p_{4,n-2}(X_4 X_5^{3} X_6^{2} X_7 X_8^{2} \ldots X_{n-4}^{2} X_{n-3} X_{n-2}) \\
		&= \um p_{5,n-2}(X_5^{2} X_6^{2} X_7 X_8^{2} \ldots X_{n-4}^{2} X_{n-3} X_{n-2}) \\
		&= \um p_{6,n-2}(X_6^{2} X_7 X_8^{2} \ldots X_{n-4}^{2} X_{n-3} X_{n-2}) \\
		&= \um p_{7,n-2}(X_7 X_8^{2} \ldots X_{n-4}^{2} X_{n-3} X_{n-2}) \\
		&=
		\left \{ 
			\begin{array}{cl}
				1 		&\mbox{if}\; n \equiv 0 \bmod 3; \\        
				\um 1 	&\mbox{if}\; n \equiv 1 \bmod 3; \hspace{1cm} \text{by Proposition~\ref{claim:pkl-nullstellensatz}\eqref{eq:xpath1}}\\
				0 		&\mbox{if}\; n \equiv 2 \bmod 3.
		  	\end{array}
		\right.		
	\end{align*}

	\begin{align*}
		\beta_3 &= p_{1,n-2}(X_1 X_2^{2} X_3 X_4^{2} X_5^{4} X_6^{2} X_7 X_8^{2} \ldots X_{n-4}^{2} X_{n-3} X_{n-2}) \\
		&= \um p_{2,n-2}(X_2^{2} X_4^{2} X_5^{4} X_6^{2} X_7 X_8^{2} \ldots X_{n-4}^{2} X_{n-3} X_{n-2}) \\
		&- p_{2,n-2}(X_2 X_3 X_4^{2} X_5^{4} X_6^{2} X_7 X_8^{2} \ldots X_{n-4}^{2} X_{n-3} X_{n-2}) \\
		&= \um \beta_2
			 + p_{3,n-2}(X_3 X_4 X_5^{4} X_6^{2} X_7 X_8^{2} \ldots X_{n-4}^{2} X_{n-3} X_{n-2}) \\
			&+ p_{3,n-2}(X_4^{2} X_5^{4} X_6^{2} X_7 X_8^{2} \ldots X_{n-4}^{2} X_{n-3} X_{n-2}) \\
		&= \um \beta_2
			- p_{4,n-2}(X_4 X_5^{3} X_6^{2} X_7 X_8^{2} \ldots X_{n-4}^{2} X_{n-3} X_{n-2})
			+ \beta_2 \\			
		&= p_{5,n-2}(X_5^{2} X_6^{2} X_7 X_8^{2} \ldots X_{n-4}^{2} X_{n-3} X_{n-2}) \\
		&= p_{6,n-2}(X_6^{2} X_7 X_8^{2} \ldots X_{n-4}^{2} X_{n-3} X_{n-2}) \\
		&= p_{7,n-2}(X_7 X_8^{2} \ldots X_{n-4}^{2} X_{n-3} X_{n-2}) \\
		&=
		\left \{ 
			\begin{array}{cl}
				\um 1 	&\mbox{if}\; n \equiv 0 \bmod 3; \\        
				1 		&\mbox{if}\; n \equiv 1 \bmod 3; \hspace{1cm} \text{by Proposition~\ref{claim:pkl-nullstellensatz}\eqref{eq:xpath1}}\\
				0 		&\mbox{if}\; n \equiv 2 \bmod 3.
		  	\end{array}
		\right.		
	\end{align*}
	Note that when reducing $p_{4,n-2}$, we used the fact that $X_5^4$ cannot appear in any monomial of $P_{4,n-2}$.
	Similarly, $X_k$ has at most power $2$ in monomials of $P_{k,\ell}$.

	\begin{align*}
		\beta_4 &= p_{1,n-2}(X_2^{3} X_3 X_4^{2} X_5^{4} X_6^{2} X_7 X_8^{2} \ldots X_{n-4}^{2} X_{n-3}^2) \\
		&= p_{2,n-2}(X_2^{2} X_4^{2} X_5^{4} X_6^{2} X_7 X_8^{2} \ldots X_{n-4}^{2} X_{n-3}^2) \\
		&= p_{3,n-2}(X_4^{2} X_5^{4} X_6^{2} X_7 X_8^{2} \ldots X_{n-4}^{2} X_{n-3}^2) \\
		&= p_{4,n-2}(X_4 X_5^{3} X_6^{2} X_7 X_8^{2} \ldots X_{n-4}^{2} X_{n-3}^2) \\
		&= \um p_{5,n-2}(X_5^{2} X_6^{2} X_7 X_8^{2} \ldots X_{n-4}^{2} X_{n-3}^2) \\
		&= \um p_{6,n-2}(X_6^{2} X_7 X_8^{2} \ldots X_{n-4}^{2} X_{n-3}^2) \\
		&= \um p_{7,n-2}(X_7 X_8^{2} \ldots X_{n-4}^{2} X_{n-3}^2) \\
		&= \um p_{7,n-3}(X_7 X_8^{2} \ldots X_{n-5}^2 X_{n-4} X_{n-3}) \\
		&=
		\left \{ 
			\begin{array}{cl}
				0 		&\mbox{if}\; n \equiv 0 \bmod 3; \\        
				1 		&\mbox{if}\; n \equiv 1 \bmod 3; \hspace{1cm} \text{by Proposition~\ref{claim:pkl-nullstellensatz}\eqref{eq:xpath1}}\\
				\um 1	&\mbox{if}\; n \equiv 2 \bmod 3.
		  	\end{array}
		\right.				
	\end{align*}

	\begin{align*}
		\beta_5 &= p_{1,n-2}(X_1 X_2^{3} X_3 X_4^{2} X_5^{4} X_6^{2} X_7 X_8^{2} \ldots X_{n-4}^{2} X_{n-3}) \\
		&= \um p_{2,n-2}(X_2^{2} X_3 X_4^{2} X_5^{4} X_6^{2} X_7 X_8^{2} \ldots X_{n-4}^{2} X_{n-3}) \\
		&= \um p_{3,n-2}(X_3 X_4^{2} X_5^{4} X_6^{2} X_7 X_8^{2} \ldots X_{n-4}^{2} X_{n-3}) \\
		&= \um p_{4,n-2}(X_4^{2} X_5^{3} X_6^{2} X_7 X_8^{2} \ldots X_{n-4}^{2} X_{n-3}) \\
		&= \um p_{5,n-2}(X_5^{3} X_6^{2} X_7 X_8^{2} \ldots X_{n-4}^{2} X_{n-3}) \\
		&=\; 0 \hspace{1cm} \text{as there is no monomial with $X_5^3$ in $P_{5,n-2}$.}\\
	\end{align*}

	\begin{align*}
		\beta_6 &= p_{1,n-2}(X_1^2 X_2^{2} X_3 X_4^{2} X_5^{4} X_6^{2} X_7 X_8^{2} \ldots X_{n-4}^{2} X_{n-3}) \\
		&= p_{2,n-2}(X_2^{2} X_3 X_4^{2} X_5^{4} X_6^{2} X_7 X_8^{2} \ldots X_{n-4}^{2} X_{n-3}) \\
		&= p_{3,n-2}(X_3 X_4^{2} X_5^{4} X_6^{2} X_7 X_8^{2} \ldots X_{n-4}^{2} X_{n-3}) \\
		&= \um \beta_5 \\
		&= 0 \\
	\end{align*}

	Hence, inserting the values in~\eqref{eq:beta}, we obtain
	$$
		\beta = 
			\left \{ 
				\begin{array}{cl}
					1		&\mbox{if}\; n \equiv 0 \bmod 3; \\        
					\um 2 	&\mbox{if}\; n \equiv 1 \bmod 3; \\
					1		&\mbox{if}\; n \equiv 2 \bmod 3.
				\end{array}
			\right.					
	$$
\end{proofclaim}

From Claims~\ref{cl:alpha} and~\ref{cl:beta} it follows that
$$
	d_n(X_1^{2} X_2^{3} X_3 X_4^{2} X_5^{4} X_6^{2} X_7 X_8^{2} \ldots X_{n-2}^{2} X_{n-1}^{3} X_{n}) = \alpha - \beta \neq 0\,.
$$
This establishes the lemma.
\end{proof}

For smaller values of $n$, we prove another result.

\begin{lemma}
	\label{lem:el_diablo-small}
	For every integer $n$, $n \in \set{6,7,9,10}$, $D_n$ is $(3,4,3^{n-4},4,3)$-choosable.
	For $n \in \set{8,11}$, $D_n$ is $(3,3,3,5,3^{n-7},2,4,3)$-choosable.
\end{lemma}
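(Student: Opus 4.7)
The plan is to follow the same template as the proof of Lemma~\ref{lem:el_diablo}: for each of the six values of $n$ appearing in the statement, apply Combinatorial Nullstellensatz (Theorem~\ref{thm:null}) to the polynomial $D_n$ by exhibiting a monomial $X_1^{k_1}\cdots X_n^{k_n}$ of total degree $2n+1 = \deg(D_n)$, with $k_i \le |L_i|-1$ for every $i$, whose coefficient in $D_n$ is non-zero.

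For the first statement, the list sizes $(3,4,3^{n-4},4,3)$ impose the pointwise upper bounds $(2,3,2^{n-4},3,2)$ on the exponents, which sum to $2n+2$. Since the target degree is $2n+1$, exactly one exponent must be lowered by one, producing a small family of candidate monomials; for each $n \in \{6,7,9,10\}$ I would enumerate these and show that at least one has a non-zero coefficient in $D_n$. For the second statement, the list sizes $(3,3,3,5,3^{n-7},2,4,3)$ give the analogous bounds $(2,2,2,4,2^{n-7},1,3,2)$ with the same unit of slack, so the same scheme applies for $n \in \{8,11\}$.

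Each coefficient is computed by the same mechanism as in Claim~\ref{cl:beta}: write $D_n = C_n \cdot (X_2-X_{n-1})$ and factor out from $C_n$ the terms that involve the ``boundary'' variables $X_1, X_2, X_{n-3}, X_{n-2}, X_{n-1}, X_n$ (i.e., the polynomial $Q_n$ from the proof of Claim~\ref{cl:beta}); expanding this factor reduces the task to extracting specific coefficients of the ``backbone'' polynomial $P_{1,n-2}$, each of which can be obtained by iterated application of the identities $P_{k,\ell} = (X_{k+1}-X_k)(X_{k+2}-X_k)\,P_{k+1,\ell}$ and $P_{k,\ell} = (X_\ell-X_{\ell-1})(X_\ell-X_{\ell-2})\,P_{k,\ell-1}$, together with Proposition~\ref{claim:pkl-nullstellensatz}.

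The main obstacle is the absence of a single uniform choice of monomial. Unlike Lemma~\ref{lem:el_diablo}, where one monomial sufficed for all $n \ge 10$, the small-$n$ regime forces a separate analysis for each value, and within each $n$ several candidates may need to be tested before one yields a non-zero coefficient; the residue of $n$ modulo $3$ still drives the sign pattern, so the bookkeeping remains nontrivial but strictly finite. Since only six values of $n$ are involved, the calculations can be carried out either by hand, exactly as in Claim~\ref{cl:beta}, or certified by a direct computer algebra verification.
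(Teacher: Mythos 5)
Your proposal follows essentially the same route as the paper: its proof of this lemma consists exactly of exhibiting, for each $n \in \set{6,7,8,9,10,11}$, one monomial of degree $2n+1$ respecting the list-size caps whose coefficient in $D_n$ is non-zero (e.g.\ $d_6(X_1^2X_2^3X_3^2X_4^2X_5^2X_6^2)=2$), and invoking Theorem~\ref{thm:null}. Your degree bookkeeping (one unit of slack below the pointwise exponent bounds, hence a small finite family of candidate monomials) is correct, and the only piece left open is the actual exhibition of the six winning monomials with their non-zero coefficients --- which the paper likewise simply states as the outcome of a direct computation.
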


\begin{proof}
	To prove the lemma, we again use Theorem~\ref{thm:null} and for each $n \in \set{6,7,8,9}$
	provide a monomial $m_n$ such that $d_n(m_n) \neq 0$ in $D_n$.
	\begin{itemize}
		\item{} If $n=6$, we have $d_6(X_1^2 X_2^3 X_3^2 X_4^2 X_5^2 X_6^2) = 2$;
		\item{} If $n=7$, we have $d_7(X_1^2 X_2^3 X_3^2 X_4^2 X_5^2 X_6^2 X_7^2) = 1$;
		\item{} If $n=8$, we have $d_8(X_1^2 X_2^2 X_3^2 X_4^3 X_5^2 X_6   X_7^3 X_8^2) = -1$;
		\item{} If $n=9$, we have $d_9(X_1^2 X_2^3 X_3^2 X_4^2 X_5^2 X_6^2 X_7^2 X_8^2 X_9^2) = 2$;
		\item{} If $n=10$, we have  $d_{10}(X_1^2 X_2^3 X_3^2 X_4^2 X_5^2 X_6^2 X_7^2 X_8^2 X_9^2X_{10}^2) = 1$;
		\item{} If $n=11$, we have $d_{11}(X_1^2 X_2^2 X_3^2 X_4^3 X_5^2 X_6^2 X_7^2 X_8^2 X_9   X_{10}^3 X_{11}^2) = -1$.
	\end{itemize}
	It is easy to verify that the degree of every variable is less than the number of available colors assumed by the lemma,
	and thus we infer the desired choosabilities of $D_n$.
\end{proof}

\subsection{Proof}

Recall that in the case $(d)$ of Theorem~\ref{thm:main}, we assume the graph is in class I.
In our proof, this is an important feature which enables us to confirm Conjecture~\ref{conj:main}$(b)$ for this class of graphs.
We again prove a stronger version of the theorem.
\begin{theorem}
	\label{th:127}
	Let $G$ be a graph of class I. 
	Then for every proper $3$-edge-coloring $\pi$ with colors $a$, $b$, and $c$, 
	and for every color $\alpha \in \set{a,b,c}$
	there exists a $(1,2^7)$-packing edge-coloring $\sigma$ such that the edges of color $\alpha$ in $\pi$ are colored with $0$ in $\sigma$.
\end{theorem}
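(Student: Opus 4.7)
The plan is a minimum counterexample argument, in the same spirit as the proof of Theorem~\ref{thm:128_good}. Take a triple $(G,\pi,\alpha)$ violating the theorem with $\abs{V(G)}+\abs{E(G)}$ minimum. The first batch of reductions would establish that $G$ is simple, cubic and $2$-connected: every parallel edge, loop, or vertex of degree at most two can be absorbed, and every bridge can be split, since in each case the missing edge has at most seven forbidden $2$-colors in its $2$-edge-neighborhood or, for the bridge, the seven $2$-colors leave enough room to permute and glue the two smaller colorings together. Throughout these reductions I carry $\pi$ along, using that class I and the condition ``$\pi$ is a proper $3$-edge-coloring with distinguished color $\alpha$'' are inherited by the smaller instance.

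Once $G$ is cubic, the given $\pi$ exhibits $M:=\pi^{-1}(\alpha)$ as a perfect matching of $G$, and coloring every edge of $M$ with the $1$-color $0$ reduces the problem to strongly list-edge-coloring the $2$-factor $F:=G\setminus M$ from the palette $\set{1,\dots,7}$, where the list of $e\in F$ forbids any $2$-color already used at distance at most $2$ from $e$ in $G$. The interaction between the cycles of $F$ is mediated entirely by $M$: a matching edge with endpoints on two different cycles forces the four $F$-edges it touches to carry pairwise distinct $2$-colors, while a matching edge whose both endpoints lie on the same cycle $C$ produces exactly the configuration $D_n$ of Lemmas~\ref{lem:el_diablo} and~\ref{lem:el_diablo-small}. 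A second batch of structural claims would forbid short cycles in $F$ and close-by chord patterns on the cycles of $F$, in the usual way: delete or contract the offending configuration, apply minimality to color the smaller graph, then extend using Theorem~\ref{thm:listcyc} or Lemma~\ref{lem:path_choose}, with the length thresholds chosen so that the hypotheses of the $D_n$-choosability results are met on what remains.

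The final step assembles a global coloring of $F$. Cycles of $F$ without self-chords are handled by Theorem~\ref{thm:listcyc} after checking that each edge sees a list of size at least $3$ (or $4$ for the short cycles that are still permitted); cycles that carry a self-chord are handled by the strong choosability of $D_n$ from Lemmas~\ref{lem:el_diablo} and~\ref{lem:el_diablo-small}, whose list profiles $(3,4,2,3,5,3,2,3^{n-9},4,2)$ and the small-$n$ variants were designed precisely to match the number of $2$-colors that the $2$-edge-neighborhood of a chord-endpoint on a cycle can ever forbid. The main obstacle will be this assembly step: one has to schedule the cycles, and the starting positions within each cycle, so that when a cycle $C$ is processed, its already-colored neighbors have used up exactly the right number of $2$-colors at exactly the right edges of $C$ to fit one of the available $D_n$-profiles or to leave the list sizes demanded by Theorem~\ref{thm:listcyc} and Lemma~\ref{lem:path_choose}. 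This scheduling, together with a detailed case split on the residues of the cycle lengths modulo $3$ and on the relative positions of the two endpoints of each self-chord, is where the bulk of the technical casework will lie; everything else is then either a direct application of the choosability lemmas or a clean extension of $\pi$ across the reduced configuration.
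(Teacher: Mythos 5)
Your skeleton matches the paper's up to the point where the matching $M=\pi^{-1}(\alpha)$ is colored $0$ and the problem becomes a strong list-coloring of the $2$-factor $F$ with seven colors, and your local reductions (simple, cubic, no short cycles) all appear in the paper. The genuine gap is in the endgame. Once everything outside a chordless cycle $C$ of $F$ is colored, an edge of $C$ sees at most four $2$-colors from outside $C$ (two at each of the two matching-neighbours of its endpoints), so its list has size at least $3$ --- and exactly $3$ in the worst case. Theorem~\ref{thm:listcyc} then colors $C$ only when $|C|\equiv 0\bmod 3$; for the other residues one needs lists of size $4$ (size $5$ for a $5$-cycle), which the configuration does not supply. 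Your proposed fix --- scheduling the cycles so that each is processed while some neighbour is still uncolored --- cannot work as stated: in any order the last cycle processed has all of its neighbours colored and is stuck with size-$3$ lists. The same problem hits the chorded cycles: the list profile you get ``for free'' is $(3,4,3^{n-4},4,3)$, and Lemma~\ref{lem:el_diablo-small} covers that only for $n\in\{6,7,9,10\}$; the general-$n$ result, Lemma~\ref{lem:el_diablo}, demands a list of size $5$ on one edge, which is simply not available without further work.

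The missing idea is the paper's $P$-crossing surgery: for a path $P=u_0u_1\dots u_5$ on a $bc$-cycle one deletes $E(P)$ and the matching edges $u_ju_j'$ ($1\le j\le 4$) and adds the edges $u_1'u_3'$ and $u_2'u_4'$, colored $a$. This keeps the smaller graph class~I with a compatible $3$-edge-coloring in which the new edges lie in the distinguished color class, and --- crucially --- makes $u_1'$ adjacent to $u_3'$ and $u_2'$ adjacent to $u_4'$, forcing the pairs of $2$-colors appearing at those vertices to be disjoint. Claim~\ref{cl:127_crossing} converts this into a trichotomy on the resulting lists (a size-$5$ list in the middle, or a color that can be removed from two neighbouring lists at no cost, or disjoint lists two steps apart), and each branch supplies exactly the slack needed to invoke Lemma~\ref{lem:el_diablo}, Lemma~\ref{lem:el_diablo-small}, Lemma~\ref{lem:path_choose} or Theorem~\ref{thm:listcyc}. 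With this device the paper shows that a minimal counterexample contains no $bc$-cycle at all (Claims~\ref{cl:127_chords} and~\ref{cl:127_nochords}), contradicting the $2$-regularity of the $bc$-subgraph; no global assembly or scheduling of cycles is ever needed. Without the crossing, or an equivalent mechanism for manufacturing list slack, your plan stalls exactly at the step you yourself flag as ``the main obstacle.''
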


For simplicity, we will refer to a $(1,2^7)$-packing edge-coloring $\sigma$ obtained from a proper $3$-edge-coloring $\pi$, 
in which the edges of color $\alpha$ in $\pi$ are colored with $0$ in $\sigma$, 
as an \textit{$\alpha$-induced coloring $\sigma_\pi^\alpha$}.

\begin{proof}
	We prove the theorem by contradiction.
	Let $G$ be a minimal counterexample to the theorem minimizing the sum $|V(G)| + |E(G)|$.
	Let $\pi$ be a proper $3$-edge-coloring (using colors $a$, $b$, and $c$) 
	and let the color $a$ be the color class for which there is no $(1,2^7)$-packing edge-coloring $\sigma$ (using colors in $\set{0,1,\dots,7}$ and $0$ being the $1$-color) 
	of $G$ such that all edges colored $a$ in $\pi$ are colored $0$ in $\sigma$. 
	
	We begin by establishing some structural properties of $G$.

\begin{claim}
	\label{cl:127_simple}
	$G$ is simple.
\end{claim}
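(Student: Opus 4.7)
The plan is to use the standard minimal-counterexample reduction. Assuming $G$ contains a loop or two parallel edges, I would delete one such edge to obtain a proper subgraph $G'$, invoke the minimality of $G$ on $G'$ equipped with the restriction of $\pi$ (which remains a proper $3$-edge-coloring of a class~I subcubic graph), and extend the resulting $a$-induced $(1,2^7)$-packing edge-coloring $\sigma'$ back to $G$ by assigning a single color to the deleted edge. The key feature of the inductive hypothesis is that every edge of $G'$ colored $a$ under the restriction of $\pi$ is already assigned the $1$-color $0$ in $\sigma'$.

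First I would handle multiple edges. Let $e_1, e_2$ be two parallel edges joining vertices $u$ and $v$ in $G$. If a third parallel edge is present between $u$ and $v$, the component of $G$ containing them is easily colored directly, so I may assume $u$ has a third neighbor $u'$ and $v$ has a third neighbor $v'$. Since $\pi(e_1) \ne \pi(e_2)$, I label them so that $\pi(e_2) \in \set{b, c}$, delete $e_2$ to form $G'$, and let $\sigma'$ be the $a$-induced coloring granted by minimality; it remains to assign a $2$-color to $e_2$. The $2$-edge-neighborhood of $e_2$ in $G$ has at most seven edges, namely $e_1, uu', vv'$ at distance one and at most four edges at $u'$ and $v'$ at distance two. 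If $\pi(e_1) = a$, then $\sigma'(e_1) = 0$ is a $1$-color, so the remaining six edges block at most six of the seven $2$-colors; otherwise $\set{\pi(e_1), \pi(e_2)} = \set{b, c}$ forces $\pi(uu') = \pi(vv') = a$ and hence $\sigma'(uu') = \sigma'(vv') = 0$, leaving at most five $2$-colored edges in the neighborhood. In either case a free $2$-color for $e_2$ exists.

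Next I would remove loops. A loop $\ell$ at $v$ uses two of the three degree slots at $v$, so $v$ is incident to at most one other edge $vw$. Deleting $\ell$ and applying the inductive hypothesis to $G - \ell$ yields $\sigma'$; the $2$-edge-neighborhood of $\ell$ consists only of $vw$ and at most two edges at $w$, so at most three colors are forbidden. If $\pi(\ell) = a$, I color $\ell$ with $0$, which is legal because $\pi(vw) \ne a$ implies $\sigma'(vw)$ is a $2$-color, and otherwise at least four $2$-colors remain free. In both subcases $\sigma'$ extends to an $a$-induced $(1,2^7)$-packing edge-coloring of $G$, contradicting the minimality of $G$.

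I do not anticipate a genuine obstacle in this claim: the only substantive work is the enumeration of the $2$-edge-neighborhood and the observation that the $a$-induced structure forces neighboring edges colored $a$ under $\pi$ to be recolored with $0$, hence not consuming any of the seven $2$-colors. Degenerate configurations such as $u' = v'$ or $u$ being a $2$-vertex only shrink the relevant neighborhood and further simplify the count. The argument is essentially a refinement of Claim~\ref{128_simple} with the extra bookkeeping required by the $a$-induced constraint.
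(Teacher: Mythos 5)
Your proposal is correct and follows essentially the same route as the paper: delete one parallel edge (or loop), apply minimality to get an $a$-induced coloring of the smaller graph, and extend by counting the $2$-edge-neighborhood, using the fact that edges colored $a$ under $\pi$ receive the $1$-color $0$ and therefore do not consume any of the seven $2$-colors. The only cosmetic difference is that the paper prefers to delete the $a$-colored parallel edge (recoloring it with $0$) whereas you always delete a $b$- or $c$-colored one and find a free $2$-color; both counts go through.
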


\begin{proofclaim}
	Suppose there are vertices $u$ and $v$ in $G$ connected by at least two parallel edges.
	Remove one of the edges, call it $e$, between them (if possible, take the one colored with $a$) to obtain a smaller graph $G'$. 
	By the minimality of $G$, there is an $a$-induced coloring $\sigma_\pi^a$ of $G'$.
	If $e$ is colored by $a$ in $\pi$, then extend $\sigma_\pi^a$ by coloring $e$ with $0$. 
	Otherwise there is no edge colored $a$ between $u$ and $v$.
	Therefore, if $u'$ and $v'$ are the respective other neighbors of $u$ and $v$, when they exist, 
	then $uu'$ and $vv'$ are colored $a$. 
	Thus, $A_2(e) \geq 2$ and we can extend $\sigma_\pi^a$ to $G$, a contradiction.
	Loops can be treated similarly.
\end{proofclaim}

\begin{claim}
	\label{cl:127_min3}
	$G$ is cubic.
\end{claim}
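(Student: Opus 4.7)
The plan is the standard minimality argument paralleling Claim~\ref{128_deg2-}: assume $G$ has a vertex of degree at most $2$, delete it, invoke minimality on the smaller graph (equipped with the restriction of $\pi$) to obtain an $a$-induced coloring, and extend the coloring back to $G$. The twist here, compared to Section~\ref{sec:1all}, is that we must respect the fixed proper $3$-edge-coloring $\pi$ and the distinguished color $a$, so the color $0$ on the deleted edges is forced whenever their $\pi$-color is $a$.

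For the $1$-vertex case, if $v$ is a $1$-vertex with neighbor $u$, set $G' = G - v$ and let $\sigma'$ be an $a$-induced coloring of $G'$ (with respect to $\pi|_{G'}$) provided by minimality. To extend to $uv$: if $\pi(uv) = a$, assign $\sigma(uv) = 0$, which is safe because the other edges at $u$ have $\pi$-color in $\{b,c\}$ and hence $\sigma' \neq 0$ on them; otherwise the $2$-edge-neighborhood of $uv$ contains at most $6$ other edges, so at most $6$ of the $7$ available $2$-colors are forbidden, leaving one available.

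The main obstacle is the $2$-vertex case. Let $v$ be a $2$-vertex with neighbors $u$ and $w$, set $G' = G - v$, and obtain an $a$-induced coloring $\sigma'$ of $G'$ by minimality; both $uv$ and $vw$ must now be colored. The key observation, used throughout, is that at every $3$-vertex of $G$ exactly one incident edge has $\pi$-color $a$, and that edge is colored $0$ in $\sigma'$, so it contributes no forbidden $2$-color to the neighborhood count. If $\pi(uv) = a$ (the case $\pi(vw) = a$ being symmetric), assign $\sigma(uv) = 0$ and choose a $2$-color for $vw$: the color-saving observation applied to the $3$-vertices $u, w$, and the two neighbors of $w$ different from $v$ that appear in the $2$-edge-neighborhood of $vw$ yields at most $6$ distinct forbidden $2$-colors, leaving at least one choice. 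If neither of $\pi(uv), \pi(vw)$ equals $a$, then at each of the $3$-vertices $u, w$ there is already a $0$-edge among the edges other than $uv, vw$, so neither of $uv, vw$ can receive color $0$ and both must be given $2$-colors; an analogous count gives at least two $2$-colors available for $uv$ and, after coloring $uv$, at least one remaining for $vw$.

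The hard part is the bookkeeping in the $2$-vertex case, where the bound is tight, so one must verify the count of forbidden $2$-colors in each configuration of $\pi$-colors around $v$. Configurations in which $u$, $w$, or their neighbors are not $3$-vertices only shrink the $2$-edge-neighborhood and reduce the count of forbidden colors, so the generic case where all these vertices have degree $3$ is the tightest. In every case the extension yields a valid $a$-induced coloring of $G$, contradicting the choice of $G$ as a counterexample.
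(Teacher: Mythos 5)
Your proposal is correct and takes essentially the same route as the paper: delete the $2^-$-vertex, invoke minimality to get an $a$-induced coloring of the smaller graph, and extend by counting forbidden $2$-colors, using the fact that the $a$-colored (hence $0$-colored) edge at each $3$-vertex forbids no $2$-color. Your counts in the two-vertex case (at least two available $2$-colors for the first edge, then at least one for the second, and at least one after placing a $0$ when one of the two edges has $\pi$-color $a$) agree with the paper's bounds $A_2(uv)\ge 2$, $A_2(uw)\ge 2$ and $A_2(uw)\ge 1$.
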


\begin{proofclaim}
	Suppose the contrary and let $u$ be a $2^-$-vertex. 
	By the minimality, there is an $a$-induced coloring $\sigma_\pi^a$ of $G-u$.
	
	Suppose first that $u$ is a $1$-vertex with a neighbor $v$. 
	If $uv$ is colored with $a$ in $\pi$, then we color $uv$ with $0$ and hence extend $\sigma_\pi^a$ to all the edges of $G$.
	If $uv$ is not colored $a$ in $\pi$, then there are at least three available $2$-colors for $uv$ and $\sigma_\pi^a$ can be extended to $G$. 
	
	So, we may assume that $u$ is a $2$-vertex and let $v$ and $w$ be its two neighbors.
	We consider two subcases. 
	First, if in $\pi$ none of $uv$ and $uw$ are colored with $a$, then $A_2(uv) \geq 2$ and $A_2(uw) \geq 2$, and so we can color the two edges.
	Second, if in $\pi$ one of $uv$ and $uw$ is colored with $a$, say $uv$, 
	then we color $uv$ with $0$ and we obtain $A_2(uw) \geq 1$. Hence we can extend $\sigma_\pi$ to all the edges of $G$.	
\end{proofclaim}

Recall that $G$ being cubic implies that in $\pi$ every color appears at every vertex.

\begin{claim}
	\label{cl:127_triangle}
	$G$ does not contain $3$-cycles.
\end{claim}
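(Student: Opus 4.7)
Assume for contradiction that $G$ contains a $3$-cycle $C = uvw$. By Claims~\ref{cl:127_simple} and~\ref{cl:127_min3}, $G$ is simple and cubic, so each of $u, v, w$ has a unique third neighbor $u', v', w'$. Since $\pi$ is a proper $3$-edge-coloring, the edges of $C$ carry three distinct colors, and I may assume $\pi(vw) = a$, $\pi(uw) = b$, $\pi(uv) = c$, which forces $\pi(uu') = a$, $\pi(vv') = b$, $\pi(ww') = c$. Every $a$-induced coloring $\sigma$ must therefore satisfy $\sigma(uu') = \sigma(vw) = 0$.

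The main case is when $u', v', w'$ are pairwise distinct. The plan is to contract $C$ to a single new vertex $t$, producing a smaller simple graph $G^*$ in which $t$ is adjacent to $u', v', w'$. The inherited edge-coloring $\pi^*$, with $\pi^*(tu') = a$, $\pi^*(tv') = b$, $\pi^*(tw') = c$, is proper, so $G^*$ is class I and by minimality admits an $a$-induced coloring $\sigma^*$. Let $x = \sigma^*(tv')$ and $y = \sigma^*(tw')$ (both $2$-colors, with $x \ne y$). Lift $\sigma^*$ to $G$ by preserving it on the common edges and setting $\sigma(uu') = 0$, $\sigma(vv') = x$, $\sigma(ww') = y$, $\sigma(vw) = 0$, leaving only $uv$ and $uw$ to be colored; both must take $2$-colors since each is adjacent to a $0$-edge. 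Here the class I hypothesis does the heavy lifting: at $u'$ the two non-$t$ edges have $\pi^*$-colors $b$ and $c$, contributing two $2$-colors $\alpha_1, \alpha_2$ in $\sigma^*$; but at $v'$ one of the two non-$t$ edges is $\pi^*$-colored $a$ and hence $0$ in $\sigma^*$, with only the other (of color $c$) contributing a single $2$-color $\beta$; symmetrically at $w'$ only a single $2$-color $\gamma$ is contributed. Consequently the $2$-colors appearing in the $2$-edge-neighborhood of $uv$ apart from $\sigma(uw)$ are confined to $\{x, y, \alpha_1, \alpha_2, \beta\}$, a set of size at most $5$, leaving at least $7 - 5 = 2$ available $2$-colors for $uv$; symmetrically $uw$ has at least $2$ available $2$-colors forbidden only by $\{x, y, \alpha_1, \alpha_2, \gamma\}$. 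Any two lists of size at least $2$ on two adjacent edges admit distinct choices, producing a valid $\sigma$ and contradicting the counterexample status of $G$.

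The coincidence case arises when two of $u', v', w'$ coincide. If all three coincide then $G \cong K_4$, which admits a direct $a$-induced coloring (the two $a$-edges form a perfect matching and take $0$; the remaining four edges lie within pairwise distance $2$ and take four distinct $2$-colors, which the seven available colors easily accommodate). Otherwise, say $u' = v' = z$: then $uvz$ is a second $3$-cycle sharing the edge $uv$ with $C$, and inspecting $\pi$ shows $\pi(ww') = \pi(zz^*) = c$, forcing $w' \ne z^*$ by properness at the putative common vertex. I would then form $G' = G - \{u,v,w,z\} + w'z^*$ (with a parallel-edge workaround if $w'z^* \in E(G)$ already), extend $\pi$ to a proper $3$-coloring of $G'$ with $\pi'(w'z^*) = c$, apply minimality to obtain an $a$-induced coloring $\sigma'$, and lift by setting $\sigma(ww') = \sigma(zz^*) = \sigma'(w'z^*)$ — admissible since $ww'$ and $zz^*$ are at distance at least $3$ in $G$ — together with a greedy counting argument for the three remaining uncolored edges of the diamond (each has only a bounded number of forbidden $2$-colors, thanks again to the class I hypothesis). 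The main obstacle is this coincidence case, where one must verify carefully that $G'$ remains simple and class I and that the lifted coloring never violates the packing conditions; the pristine main case, by contrast, works cleanly because the class I structure guarantees that $v'$ and $w'$ each contribute only one $2$-color to the forbidden set instead of two, providing exactly the slack needed to compensate for the missing eighth $2$-color.
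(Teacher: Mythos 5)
Your argument is correct in substance, but it takes a noticeably heavier route than the paper's. The paper does not contract the triangle: it simply deletes the three edges of $C$ (keeping all vertices), applies minimality to $G \setminus E(C)$ with the restriction of $\pi$, and then observes that exactly one edge of $C$ (the one opposite the pendent $a$-edge) is colored $a$ and hence receives $0$, while each of the other two cycle edges sees at most five $2$-colors in its $2$-edge-neighborhood --- the same count you perform, exploiting that $u'$ contributes two $2$-colors but each of $v'$ and $w'$ contributes only one because one of its edges is an $a$-edge colored $0$. Because no vertices are identified, the paper never has to worry about $u'$, $v'$, $w'$ coinciding: the count only improves if they do. Your contraction to a vertex $t$ buys nothing over this and forces you into the coincidence analysis ($K_4$ and the diamond $u'=v'=z$), which you only sketch; it does go through (after coloring $ww'$ and $zz^*$ alike and the two $a$-edges $vw$, $uz$ with $0$, the remaining \emph{three} edges $uv$, $uw$, $vz$ --- note you have five uncolored diamond edges before assigning $0$ to the two $a$-edges, not three --- have at least $6$, $5$, $5$ available $2$-colors respectively, so a greedy choice works), but verifying simplicity and properness of $\pi'$ on $G'$, and that $\sigma'(w'z^*)$ is genuinely a $2$-color, are exactly the kinds of checks the edge-deletion reduction makes unnecessary. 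In short: correct, but the key simplification you missed is that removing $E(C)$ rather than $V(C)$ lets the pendent edges keep their colors and eliminates all degenerate cases.
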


\begin{proofclaim}
	Suppose the contrary and let $C = uvw$ be a $3$-cycle in $G$.
	Let $u'$, $v'$, and $w'$ be the neighbors of $u$, $v$, and $w$, respectively, not on $C$.
	Since $G$ is cubic, by Claim~\ref{cl:127_min3}, $u'$, $v'$, and $w'$ are $3$-vertices and in the coloring $\pi$
	exactly one of the edges $uu'$, $vv'$, and $ww'$ is colored with $a$, say $uu'$. 
	By the minimality, there is a coloring $\sigma_\pi^a$ of $G \setminus E(C)$ induced by $\pi$.
	We can extend it to the edges of $C$ in the following way.
	First, we color $vw$ with $0$. Next, observe there are at least two available $2$-colors for each of the edges $uv$ and $uw$, 
	hence we can always color them, a contradiction.
\end{proofclaim}

\begin{claim}
	\label{cl:127_square}
	$G$ does not contain $4$-cycles.
\end{claim}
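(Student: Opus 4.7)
Suppose for contradiction that $G$ contains a $4$-cycle $C = uvwz$, and let $u', v', w', z'$ be the external neighbors of $u, v, w, z$. Since $G$ is cubic by Claim~\ref{cl:127_min3} and triangle-free by Claim~\ref{cl:127_triangle}, the vertices $u', v', w', z'$ are pairwise distinct: any coincidence between two consecutive ones would create a triangle, while $u' = w'$ or $v' = z'$ would force two edges of color $a$ at the common vertex under the proper $3$-edge-coloring $\pi$. The four edges of $C$ alternate two of the three colors of $\pi$, necessarily $b$ and $c$, so each pendant edge $uu', vv', ww', zz'$ receives color $a$.

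The plan is to reduce to the smaller graph
$$
G' = \paren{G \setminus V(C)} \cup \set{u'w', v'z'},
$$
where we reuse a pre-existing edge, if any, so that $G'$ remains simple. We extend $\pi$ to a proper $3$-edge-coloring $\pi'$ of $G'$ by setting $\pi'(u'w') = \pi'(v'z') = a$; this is consistent because the two edges remaining at each external vertex are colored $b$ and $c$ in $\pi$. By minimality, $G'$ admits an $a$-induced coloring $\sigma'$, hence $\sigma'(u'w') = \sigma'(v'z') = 0$. We build a partial coloring $\sigma$ of $G$ by copying $\sigma'$ on the common edges and setting $\sigma(uu') = \sigma(vv') = \sigma(ww') = \sigma(zz') = 0$. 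The $1$-edges of $\sigma$ still form a matching, since the four pendants are pairwise non-adjacent in $G$, and the only $1$-edges of $\sigma'$ incident with $u', v', w', z'$ were $u'w'$ and $v'z'$, which have been discarded.

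It remains to color the four cycle edges with four distinct $2$-colors, since any two of them are at distance at most $2$. For each cycle edge $e$, only the four edges incident with the external vertices adjacent to $e$ (for instance $u'u'', u'u''', v'v'', v'v'''$ for $uv$, where $u''$ and $u'''$ are the two neighbors of $u'$ in $G$ distinct from $u$) contribute forbidden $2$-colors, so its list $L_e$ of available $2$-colors has size at least $3$. By Hall's Theorem (Theorem~\ref{th:Hall}) the extension is possible as soon as $\abs{L_{uv} \cup L_{vw} \cup L_{wz} \cup L_{zu}} \geq 4$. The key observation is that in $G'$ the four edges $u'u'', u'u''', w'w'', w'w'''$ are pairwise at line-graph distance at most $2$ (via the added edge $u'w'$), so they receive pairwise distinct $2$-colors in $\sigma'$; in particular $\set{\sigma'(u'u''), \sigma'(u'u''')}$ and $\set{\sigma'(w'w''), \sigma'(w'w''')}$ are disjoint, and analogously at $v', z'$. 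A short set-theoretic computation on $F_{uv} \cap F_{vw} \cap F_{wz} \cap F_{zu}$ then shows that no $2$-color is forbidden by all four cycle edges, so the union of the lists is the whole set of seven $2$-colors and Hall's condition is satisfied. The resulting extension of $\sigma'$ is an $a$-induced coloring of $G$, the desired contradiction.

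The main technical hurdle is treating the situations where $u'w'$ or $v'z'$ already belongs to $G$ (which also forces a mild recoloring of such an edge in passing from $\pi$ to $\pi'$); however, such a pre-existing edge already enforces inside $G$ itself the same distance-$2$ constraint between the corresponding four edges, so the disjointness argument is unaffected.
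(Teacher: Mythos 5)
There is a genuine gap at the very first step: you assert that the four edges of $C$ ``alternate two of the three colors of $\pi$, necessarily $b$ and $c$,'' so that all four pendant edges receive color $a$. This is unjustified and false in general. The coloring $\pi$ is \emph{given}, not chosen, and a proper $3$-edge-coloring of a cubic graph restricted to a $4$-cycle need not avoid the color $a$: for instance $\pi(uv)=a$, $\pi(vw)=b$, $\pi(wz)=c$, $\pi(zu)=b$ is proper and forces $\pi(ww')=\pi(zz')=a$ with only two pendant edges colored $a$; likewise $\pi(uv)=\pi(wz)=a$ with $\pi(vw)=b$, $\pi(zu)=c$ puts $a$ on two opposite cycle edges and on no pendant edge. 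The paper's proof of this claim splits into exactly these three cases: (a) all four pendant edges colored $a$, (b) two pendant edges and one cycle edge colored $a$, and (c) two opposite cycle edges colored $a$. Your argument covers only case (a); cases (b) and (c) require different reductions (deleting only $\set{u,v}$ and adding $u'v'$ in case (b), deleting $V(C)$ and invoking the Combinatorial Nullstellensatz on a six-variable polynomial in case (c)), and these are where the real work of the claim lies. In particular, in case (c) the four pendant edges cannot be colored $0$ at all, so the whole structure of your extension step breaks down there. Note also that your justification that $u'\neq w'$ (``two edges of color $a$ at the common vertex'') presupposes the very color pattern you failed to establish.

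Within case (a) itself your argument is correct and in fact slightly cleaner than the paper's: the disjointness of the color pairs at $u'$ and $w'$ (and at $v'$ and $z'$), forced by the added edges $u'w'$ and $v'z'$ in $G'$, shows directly that no $2$-color is forbidden for all four cycle edges, so the union of the four lists has size $7$ and Hall's condition holds; the paper instead argues by contradiction that the four lists cannot all coincide. But this only disposes of one of the three cases, so the proof as written does not establish the claim.
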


\begin{proofclaim}
	Suppose the contrary and let $C = uvwz$ be a $4$-cycle in $G$. 
	Let $u'$, $v'$, $w'$, and $z'$ be the neighbors of $u$, $v$, $w$, and $z$, respectively, not on $C$ (see Fig.~\ref{fig:127:4cycle}).
	\begin{figure}[ht]
		$$
			\includegraphics{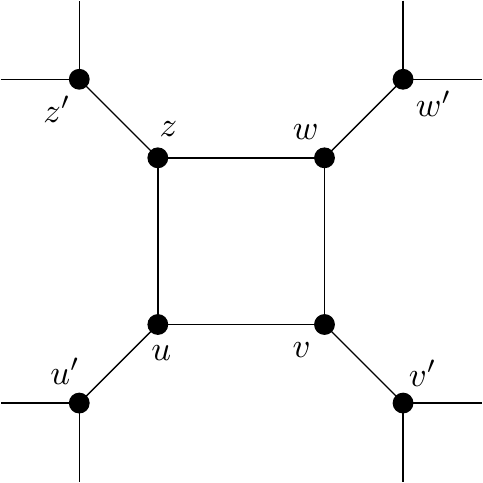}
		$$
		\caption{A $4$-cycle with its neighborhood in $G$.}
		\label{fig:127:4cycle}
	\end{figure}	
	By Claim~\ref{cl:127_min3} and Claim~\ref{cl:127_triangle}, 
	the vertices $u'$, $v'$, $w'$, and $z'$ are all of degree $3$ and the vertices $u'$ and $w'$ (resp. $v'$ and $z'$) are distinct. 
	Note that it is possible that $u' = w'$ (resp. $v'= z'$)
	but in such a case one would have even more $2$-colors available to color the cycle. 
	Hence, we may assume that $u'$, $v'$, $w'$, and $z'$ are distinct. 
	There are three non-symmetric possibilities for $\pi$ to assign colors to the edges 
	$uu'$, $vv'$, $ww'$, and $zz'$. We consider each of them separately.
	Before the case analysis, observe that it is not possible to have two opposite  pendent edges to $C$
	colored with $a$, and at least one edge of the other pair of the pendent edges not colored with $a$, 
	since the edges of $C$ could not be colored with three colors.
	\begin{itemize}
		\item[$(a)$] $uu'$, $vv'$, $ww'$, and $zz'$ are all colored with $a$ in $\pi$. \quad
			By the minimality, there is an $a$-induced coloring $\sigma_{\pi'}^a$ of $G'=(G \setminus V(C_4)) \cup \set{u'w',v'z'}$, 
			where $\pi'$ is a proper $3$-edge-coloring of $G'$ obtained from $\pi$ by coloring $u'w'$ and $v'z'$ by $a$.
			Now, consider the partial coloring of $G$ induced by $\sigma_{\pi'}^a$ 
			and color the edges $uu'$, $vv'$, $ww'$, and $zz'$ with $0$.
			In this way, every edge of $C$ has at most four $2$-colors in its $2$-edge-neighborhood 
			and thus at least three available $2$-colors. 
			If the union of the available colors of all three edges contains at least four colors, we can color the edges by the Hall's Theorem.
			
			So, we may assume that all four edges have the same three available colors, say $5$, $6$, and $7$.
			This implies that on the edges incident to $u'$ and $v'$ there are colors $1$, $2$, $3$, and $4$ (together with $0$ on the two edges incident with $C$).
			The same four colors must appear on the edges incident to $v'$ and $w'$. 
			But this means that at least two pairs of edges of the same $2$-color are at distance $2$ in $G'$, since $u'w' \in E(G')$, a contradiction.		
			
		\item[$(b)$] Two edges pendent to $C$ and one edge of $C$ are colored with $a$ in $\pi$, say $uu'$, $vv'$, and $wz$. \quad
			Consider the graph $G' = (G \setminus \set{u,v}) \cup \set{u'v'}$ and a proper $3$-edge-coloring $\pi'$ of $G'$ induced by $\pi$ by coloring $u'v'$ by $a$.
			By the minimality, there is an $a$-induced coloring $\sigma_{\pi'}^a$ of $G'$.
			
			Now, consider the coloring $\sigma$ of $G$ induced by $\sigma_{\pi'}$, 
			where only the edges $uz$, $vw$, and $uv$ remain non-colored.
			There are at most six $2$-colors in the $2$-edge-neighborhood of $uv$, so we may color it.
			After that, there are at most six $2$-colors in the $2$-edge-neighborhoods of $uz$ and $vw$, since $z'$ and $w'$ are each incident with one edge of color $0$.
			However, there are four distinct $2$-colors incident with the vertices $u'$ and $v'$ (recall, $u'v' \in E(G')$),
			and thus the union of available colors for $uz$ and $vw$ contains at least two colors, meaning that we can complete the coloring, a contradiction.
			
		\item[$(c)$] Two edges of $C$ are colored with $a$ in $\pi$, say $uv$ and $wz$. \quad
			Let $G' = G \setminus V(C)$ and let $\pi'$ be a proper $3$-edge-coloring of $G'$ induced by $\pi$.
			By the minimality, there is an $a$-induced coloring $\sigma_{\pi'}^a$ of $G'$.
			Now, consider the coloring $\sigma$ of $G$ induced by $\sigma_{\pi'}^a$ and color the edges $uv$ and $wz$ with $0$.
			For the non-colored edges, we have the following numbers of available colors:
			$A_2(uu') = A_2(vv') = A_2(ww') = A_2(zz') = 3$ and $A_2(uz) = A_2(vw) = 5$.
			To show that $\sigma$ can be extended to non-colored edges, we apply Theorem~\ref{thm:null} in the following way.
			First, associate the variables $X_1$, $X_2$, $X_3$, $X_4$, $X_5$, and $X_6$ 
			to the edges $uu'$, $uz$, $zz'$, $vv'$, $vw$, and $ww'$, respectively.
			The chromatic polynomial of a subgraph induced by the non-colored edges and setting adjacencies whenever two edges are at distance at most $2$
			is
			\begin{eqnarray*}
			f(X_1,\ldots,X_6) 	&=& 	(X_1-X_2)(X_1-X_3)(X_1-X_4)(X_1-X_5) \\
							  	&\times& (X_2-X_3)(X_2-X_4)(X_2-X_5)(X_2-X_6) \\
							  	&\times& (X_3-X_5)(X_3-X_6)(X_4-X_6)(X_4-X_5)(X_5-X_6)
			\end{eqnarray*}		
			Expanding the polynomial, we infer that the coefficient of the monomial 
			$X_1^2X_2^4X_3^2X_4^2X_5^3$ in $f$ equals $+1$. 
			Thus, by Theorem~\ref{thm:null}, we can extend $\sigma$ to all the edges of $G$, a contradiction.
	\end{itemize}
	This establishes the claim.
\end{proofclaim}

Hence, $G$ has girth at least $5$. We continue by considering properties of longer cycles in $G$.
First, we introduce some additional definitions.
A cycle colored only with the colors $b$ and $c$ is called a \textit{$bc$-cycle}.
Let $P = u_{0}u_{1}u_{2}u_{3}u_{4}u_{5}$ be a path of distinct vertices on a $bc$-cycle. 
Let $u_i'$ be the neighbor of $u_i$ such that $u_iu_i'$ is colored with $a$ in $\pi$ for $i \in \{0,\dots,5\}$. 
A \textit{$P$-crossing} is a pair $(G',\pi')$ such that $G' = G - E(P) - \{u_ju_j'\}_{1 \le j \le 4} + \{u_{1}'u_{3}', u_{2}'u_{4}'\}$  
and $\pi'$ is obtained from $\pi$ by coloring $u_{1}'u_{3}'$ and $u_{2}'u_{4}'$ with the color $a$ (see Figure~\ref{fig:crossing}).
\begin{figure}[htp!]
	$$
		\includegraphics{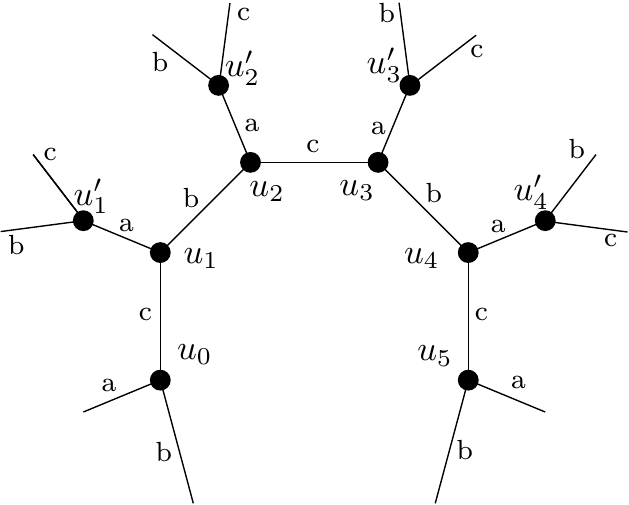} \quad\quad
		\includegraphics{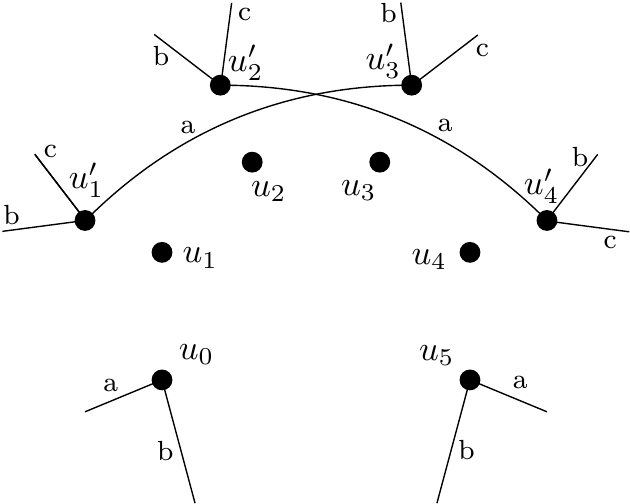}
	$$
	\caption{A path $P$ on which we perform a crossing in $G$ (left), 
		and the configuration in $G'$ (right).}
	\label{fig:crossing}
\end{figure}
By the minimality of $G$, there exists an $a$-induced coloring $\sigma_{\pi'}^a$ of $G'$.
The partial coloring $\sigma_{\pi}^a$ of $G$ induced by $\sigma_{\pi'}^a$ leaves uncolored the edges of $P$
and the four edges $u_ju_j'$ for $1 \le j \le 4$. 
Clearly, we can color the latter four edges with $0$, and so only the edges of $P$ need to be colored.
In the next claim, we give a useful property about their lists of available colors.

\begin{claim}
	\label{cl:127_crossing}
	Let $L_i$ be the set of $2$-colors available for the edge $u_iu_{i+1}$ of $P$, where $1 \le i \le 3$.
	In the coloring $\sigma_{\pi}^a$, one of the following properties hold:
	\begin{itemize}
		\item[$(1)$] $\abs{L_2} = 5$; or
		\item[$(2)$] there exists a color $x \in L_2$ such that $\abs{L_1 \setminus x} \ge 3$ and $\abs{L_3 \setminus x} \ge 3$; or
		\item[$(3)$] $\abs{L_2} \ge 4$ and $L_1 \cap L_3 = \emptyset$. 
			Moreover, there is a color $x \in L_1$ such that $x \notin L_2$, and a color $y \in L_3$ such that $y \notin L_2$.
	\end{itemize}
\end{claim}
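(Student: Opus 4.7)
The plan is to proceed by case analysis on the overlap between the two 2-color pairs appearing at $u_2'$ and $u_3'$. Let $A=\set{\alpha_1,\alpha_2}$ denote the set of 2-colors on the two edges at $u_2'$ distinct from $u_2u_2'$, and define $B$, $C$, $D$ analogously at $u_3'$, $u_1'$, $u_4'$; let $\epsilon$ and $\zeta$ be the 2-colors carried in $\sigma_{\pi'}^a$ by the two $bc$-cycle edges incident to $u_0$ and $u_5$ outside of $P$. A direct inspection of the 2-edge-neighborhoods of the three edges $u_1u_2$, $u_2u_3$, $u_3u_4$ yields $F_1=A\cup C\cup\set{\epsilon}$, $F_2=A\cup B$, and $F_3=B\cup D\cup\set{\zeta}$, where $L_i=\set{1,\dots,7}\setminus F_i$. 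The crucial structural input is that the new $0$-edges $u_1'u_3'$ and $u_2'u_4'$ in $G'$ place the corresponding pairs at distance $2$ in $G'$; combined with the strong constraint for 2-colors in $\sigma_{\pi'}^a$, this forces $B\cap C=\emptyset$ and $A\cap D=\emptyset$.

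I then split on the value of $\abs{A\cap B}\in\set{0,1,2}$. If $A=B$, then $\abs{F_2}=2$, $\abs{L_2}=5$, and property $(1)$ holds immediately. If $\abs{A\cap B}=1$, so $\abs{L_2}=4$, I would first try to find a color $x\in L_2$ avoiding both $L_1$ and $L_3$, which directly supplies $(2)$; failing that, every color of $L_2$ must lie in $L_1\cup L_3$, and the relations $B\cap C=\emptyset$ and $A\cap D=\emptyset$ force $L_1$ and $L_3$ to be disjoint and each to contain at least one color outside $L_2$, yielding $(3)$. Finally, if $A\cap B=\emptyset$, then $\abs{L_2}=3$, so $(3)$ cannot apply and I must establish $(2)$; here the inclusions $C\subseteq A\cup L_2$ and $D\subseteq B\cup L_2$ tightly constrain $L_1$ and $L_3$, and a sub-case analysis on whether $\epsilon$ (respectively $\zeta$) lies in $A$, in $L_2$, or coincides with a $\gamma_i$ (respectively $\delta_i$), should in each subcase exhibit a color $x\in L_2$ with $\abs{L_1\setminus\set{x}}\ge 3$ and $\abs{L_3\setminus\set{x}}\ge 3$.

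The main obstacle is the last case. When $A\cap B=\emptyset$ the budget for $L_2$ is already minimal, and one must carefully track where $C$, $D$, $\epsilon$, $\zeta$ sit inside $A\cup L_2$ and $B\cup L_2$ to avoid subcases in which $\abs{L_1}$ or $\abs{L_3}$ drops to $2$. Reliably handling these borderline configurations will depend on combining the disjointness constraints $B\cap C=A\cap D=\emptyset$ with the prior structural facts—namely that $G$ is cubic, class I, has girth at least $5$, and that the vertices $u_0,\dots,u_5$ and their $a$-neighbors $u_0',\dots,u_5'$ are pairwise distinct—to exclude the worst arrangements and to supply the required witness color $x$.
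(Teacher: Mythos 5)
Your overall strategy---exploit the disjointness forced by the new edges $u_1'u_3'$ and $u_2'u_4'$ and then do a case analysis on how the four color-pairs overlap---is the same mechanism the paper uses, but as outlined the argument has genuine gaps. First, your forbidden sets are too large: the paper's proof (and every application of the claim) treats the cycle edges adjacent to $P$ at $u_0$ and $u_5$ as \emph{uncolored} (they are uncolored together with the rest of the long path or cycle before these lists are computed), so the only $2$-colored edges constraining $L_1,L_2,L_3$ are the four pairs at $u_1',u_2',u_3',u_4'$; in particular $\abs{L_1},\abs{L_3}\ge 3$ always. If you keep $\epsilon$ and $\zeta$ in $F_1$ and $F_3$, the statement you are trying to prove is actually false: take $C=A=\set{1,2}$, $B=\set{3,4}$, $D=\set{5,6}$, $\zeta=7$; then $L_2=\set{5,6,7}$ has size $3$ and $L_3=\set{1,2}$ has size $2$, so none of $(1)$--$(3)$ can hold. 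So your case $A\cap B=\emptyset$ cannot be closed under your reading, no matter how the subcases on $\epsilon,\zeta$ are organized, and the appeal to girth and distinctness will not rescue it---the paper's proof is pure bookkeeping on which of the seven colors $A,B,C,D$ occupy and uses no further structure.

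Second, in the case $\abs{A\cap B}=1$ your dichotomy is wrong. The condition ``$x\in L_2$ avoids both $L_1$ and $L_3$'' is not the right sufficient condition for $(2)$, and when it fails $(3)$ need not hold. Concretely (with $\epsilon,\zeta$ dropped): $C=\set{1,2}$, $B=\set{3,4}$, $A=\set{1,3}$, $D=\set{4,5}$ gives $L_1=\set{4,5,6,7}$, $L_2=\set{2,5,6,7}$, $L_3=\set{1,2,6,7}$; every color of $L_2$ lies in $L_1\cup L_3$ and $L_1\cap L_3=\set{6,7}\neq\emptyset$, so your fallback to $(3)$ fails---yet $(2)$ holds trivially because $\abs{L_1},\abs{L_3}\ge 4$ makes \emph{any} $x\in L_2$ admissible. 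This observation (``if $\abs{L_1}\ge 4$ and $\abs{L_3}\ge 4$ then $(2)$ is automatic'') is the first thing the paper records, and it is what lets the remaining analysis reduce to a handful of positions of $A$ relative to $C=\set{1,2}$, $B=\set{3,4}$ and $\set{5,6,7}$ (namely $A\in\set{\set{1,2},\set{1,3},\set{1,5},\set{3,4},\set{3,5},\set{5,6}}$ up to symmetry), with $(3)$ needed only in the single subcase $A=\set{3,5}$, $D=\set{6,7}$. Without that reduction and with the extra colors $\epsilon,\zeta$ in play, your case analysis does not go through.
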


\begin{proofclaim}
	Note first that the edges $u_1u_2$, $u_2u_3$, and $u_3u_4$ all 
	have four colored edges in their $2$-edge-neighborhoods.
	Without loss of generality, we may assume that $u_1'$ is incident with edges colored $1$ and $2$,
	$u_3'$ is incident with edges colored $3$ and $4$ (the colors are distinct as $u_1'$ and $u_3'$ are adjacent in $G'$).
	Denote by $S_1 = \set{x_1,x_2}$ the set of $2$-colors on the edges incident with $u_2'$,
	and by $S_2 = \set{x_3,x_4}$ the set of $2$-colors on the edges incident with $u_4'$ (see Figure~\ref{fig:claim7}). 
	Again, $S_1 \cap S_2 = \emptyset$.
	\begin{figure}[htp!]
		$$
			\includegraphics{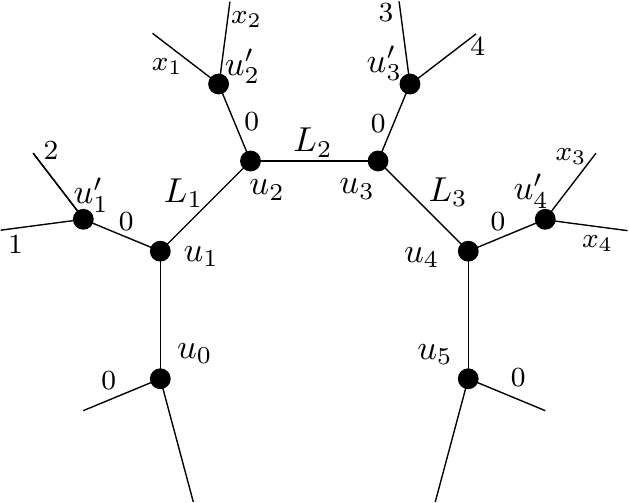}
		$$
		\caption{Configuration in $G$ for considering properties of $L_1$, $L_2$, and $L_3$.}
		\label{fig:claim7}
	\end{figure}
	
	We consider possibilities regarding the sets $S_1$ and $S_2$.
	Note that if $|L_1| \ge 4$ and $|L_3| \ge 4$, then we always can choose such a color in $L_2$ to satisfy condition $(2)$.
	Suppose first that $S_1 = \set{1,2}$. Then $|L_1| = 5$ and $L_2 = \set{5,6,7}$. 
	Since $S_2$ does not contain $1$ or $2$, $\set{1,2} \subset L_3$. 
	Thus, either $|L_3| \ge 4$ or $L_2$ contains a color which is not in $L_3$.
	
	Next, suppose that $S_1 = \set{1,3}$. Then $|L_1| \ge 4$. If $4 \in S_2$, then also $L_3 \ge 4$ and we are done.
	Otherwise, by symmetry, we may assume that $x_3 = 5$ and $x_4 \in \set{2,6}$. 
	Observe that $x_4 \in L_2$ and $x_4 \notin L_3$, thus setting $x = x_4$ gives us condition $(2)$ of the claim.
	
	Suppose now that $S_1 = \set{1,5}$. Again, $|L_1| \ge 4$. If $3 \in S_2$, then $L_3 \ge 4$ and we are done. 
	Thus, we may assume that $x_3 = 6$ and $x_4 \in \set{2,7}$.  
	As above, observe that $x_4 \in L_2$ and $x_4 \notin L_3$, thus setting $x = x_4$ gives us condition $(2)$.
	
	If $S_1 = \set{3,4}$, we are done as $|L_2| = 5$.
	
	If $S_1 = \set{3,5}$, we have $L_1 = \set{4,6,7}$ and $L_2 = \set{1,2,6,7}$.
	If $1 \in S_2$ or $2 \in S_2$, then we set $x=2$ or $x=1$, respectively, to obtain condition $(2)$.
	Thus, we may assume $x_3 = 6$ and $x_4 \in \set{4,7}$. If $x_4 = 4$, then we set $x=1$, 
	and if $x_4 = 7$, then $L_3 = \set{1,2,5}$, and we have condition $(3)$.
	
	Finally, if $S_1 = \set{5,6}$, then, by symmetry, $S_2 = \set{1,2}$, and setting $x=1$ gives us condition $(2)$. 
	This completes the proof.
\end{proofclaim}

\begin{claim}
	\label{cl:127_chords}
	There is no $bc$-cycle with chords in $G$.
\end{claim}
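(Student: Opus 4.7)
I would argue by contradiction. Assume $G$ contains a $bc$-cycle $C$ with a chord $e=uv$. At each of $u$ and $v$, the two cycle-edges of $C$ already carry colours $b$ and $c$, hence the chord is forced to be coloured $a$ in $\pi$. The chord splits $C$ into two arcs of lengths $\ell_1\le\ell_2$, and by the girth bound from Claims~\ref{cl:127_triangle}--\ref{cl:127_square} both satisfy $\ell_i\ge 4$, so $|V(C)|\ge 8$.

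The main case is $\ell_2\ge 5$, in which I can pick six consecutive vertices $u_0,\dots,u_5$ of $C$ whose four middle ones lie in the interior of the long arc, hence disjoint from $\{u,v\}$ (the endpoints $u_0,u_5$ may coincide with $u$ or $v$, but this is harmless because the $P$-crossing only removes the pendants $u_ju_j'$ for $1\le j\le 4$). Applying the $P$-crossing introduced above Claim~\ref{cl:127_crossing} to $P=u_0u_1\dots u_5$ and invoking the minimality of $G$ on the resulting smaller class-I subcubic graph $G'$ yields an $a$-induced coloring $\sigma_{\pi'}^a$. Transferring back to $G$, I colour the four freed pendants with $0$, use the list structure of Claim~\ref{cl:127_crossing} to $2$-colour the three middle edges of $P$, and complete the two outer edges via a Hall-type argument based on the sizes of their remaining lists. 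This extends the coloring to $G$, contradicting the minimality.

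The delicate subcase is $\ell_1=\ell_2=4$, so $|V(C)|=8$, where no 5-path on $C$ has its four middle vertices disjoint from $\{u,v\}$. Pushing $u$ or $v$ into the middle of $P$ and performing the crossing would isolate the other chord endpoint in $G'$, since both of its cycle-edges would lie in $P$ and its only $a$-edge is the (removed) chord. For this case I would design a bespoke local reduction: delete a carefully chosen piece of the $8$-cycle (typically the chord together with a few interior vertices of one arc) and reconnect the freed $a$-pendants with one or two new $a$-coloured edges, so that the resulting graph is a smaller simple class-I subcubic graph whose proper $3$-edge-colouring extends $\pi$ outside the modified region. Minimality then provides an $a$-induced coloring of the reduction, and the few uncoloured edges of $C$ are re-coloured from the newly enlarged lists, invoking Lemmas~\ref{lem:path_choose}, \ref{lem:el_diablo}, \ref{lem:el_diablo-small} and Proposition~\ref{prop:trick} where needed.

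The main obstacle is precisely this length-$8$ subcase: the reduction must be strictly smaller in $|V|+|E|$, simple and class I, while leaving enough slack in the $2$-edge-neighbourhoods so that the handful of re-introduced edges of $C$ can be $2$-coloured compatibly with the $0$-coloured chord lying in their $2$-neighbourhood. Pinning down the right reduction, checking that the modified $\pi'$ remains a proper $3$-edge-colouring with the intended $a$-class, and verifying extendability through a careful case analysis on the positions of the deleted vertices is the technical heart of the argument.
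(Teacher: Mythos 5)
There is a genuine gap in your main case. After performing the crossing on $P=u_0u_1\dots u_5$, you propose to colour only the five edges of $P$: the middle three via Claim~\ref{cl:127_crossing} and the two outer ones ``by a Hall-type argument''. But if the rest of $C$ stays coloured, each outer edge sees up to six already-coloured $2$-edges in its $2$-edge-neighbourhood (two further cycle edges beyond $u_0$, resp.\ $u_5$, plus two $2$-coloured edges at each of the two nearby $a$-pendant vertices), so its list can have size $1$; even the middle three then see five coloured $2$-edges rather than the four assumed in Claim~\ref{cl:127_crossing}. A $5$-path with lists of sizes $(1,3,3,3,1)$ need not be strongly colourable: with lists $\set{1},\set{1,2,3},\set{1,2,3},\set{1,2,3},\set{1}$ the three middle edges would have to receive three distinct colours all different from $1$, which is impossible. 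This is exactly the obstruction the paper's proof is organised around. It takes $P=u_0\dots u_n$ to be the \emph{shortest} path of $C$ such that $u_1u_{n-1}$ is a chord, performs the crossing on a $5$-subpath of $P$, and then \emph{uncolours all of $P$}, thereby restoring the outer lists to size at least $3$ (and the second and penultimate ones to size at least $4$); the price is that one must recolour the whole of $P$ together with the distance-$2$ constraints transmitted through the $0$-coloured chord, i.e.\ list-colour the graph $D_n$, which is precisely what Lemmas~\ref{lem:el_diablo} and~\ref{lem:el_diablo-small} (combined with the three cases of Claim~\ref{cl:127_crossing} to supply the list of size $5$ in the middle) are for. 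You never invoke these lemmas in your main case, yet they are the crux of the argument.

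Your second case ($\ell_1=\ell_2=4$, so $|V(C)|=8$) is left as an unconstructed ``bespoke local reduction'', which you yourself identify as the technical heart; as written this is not a proof. Moreover, no bespoke reduction is needed: the paper handles all short chord-paths by the same crossing-plus-$D_n$ strategy, placing the crossing path as $u_1u_2\dots u_6$ so that the chord endpoints $u_1$ and $u_{n-1}$ are outer vertices of the crossing (hence the chord, being their $a$-pendant, is not removed), and then applying the small-$n$ choosability statements of Lemma~\ref{lem:el_diablo-small}. The fix for both of your cases is therefore the same: parametrise by the length of the shortest chord-path rather than by the arc lengths, uncolour that entire path, and use the $D_n$-choosability lemmas.
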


\begin{proofclaim}
	Suppose the contrary and let $C$ be a $bc$-cycle with a chord in $G$.
	Let $P = u_0\dots u_n$ be a path of $C$ such that $u_1u_{n-1}$ is a chord of $C$ and $P$ is the shortest path with this property in $C$.
	For every $i \in \set{0,\dots,n}$, denote by $u_i'$ the neighbor of $u_i$ such that $\pi(u_iu_i') = a$.  
	Note that $\pi(u_iu_i') = a$ implies that all $u_i'$s are pairwise distinct. Note that by defintion $u_1' = u_{n-1}$ and $u_{n-1}' = u_1$.
	We split the proof in two cases regarding the length of $P$.

	Suppose first that $n \geq 12$.
	Let $P' = u_2u_3u_4u_5u_6u_7$ and let $(G',\pi')$ be the $P'$-crossing and $\sigma_{\pi'}^a$ an $a$-induced coloring of $G'$.
	Let $\sigma_{\pi}^a$ be the partial coloring of $G$ induced by $\sigma_{\pi'}^a$.
	For every $i$, $i \in \set{3,4,5,6}$, color $u_iu_i'$ with $0$.
	Now, only the edges of $P'$ are non-colored. 
	To extend $\sigma_{\pi}^a$ to all edges of $G$, we first uncolor also the edges of $P$ that are already colored.	
	Next, for every $i$, $0 \le i \le n-1$, denote by $L_i$ the list of available $2$-colors of the edge $u_iu_{i+1}$ in $G$.
	Note that $\abs{L_0} \geq 3$, $\abs{L_1} \geq 4$, $\abs{L_{n-2}} \geq 4$, $\abs{L_{n-1}} \geq 3$, and $\abs{L_j} \geq 3$ for $2 \le j \le n-3$.
	By the minimality of $P$ the $2$-edge-neighborhood of an edge $e$ of $P$ contains the same non-colored edges 
	as in the graph $D_n$ obtained from $P$ by adding the edge $u_1u_{n-1}$. 
	Therefore, it suffices to color $D_n$ using the lists $L_i$ to extend $\sigma_{\pi}^a$.
	
	To show that we can color $D_n$, we make use of Claim~\ref{cl:127_crossing} applied to $P'$.
	In the case $(1)$, i.e., if $|L_4| = 5$, we can color $D_n$ by Lemma~\ref{lem:el_diablo}, a contradiction. 
	In the case $(2)$, i.e., if there exists a color $x \in L_4$ such that $\abs{L_3 \setminus x} \ge 3$ and $\abs{L_5 \setminus x} \ge 3$,
	we proceed as follows. We first color $u_4u_5$ with $x$. 
	We obtain, after updating the lists, the following: 
	$\abs{L_2} \geq 2$, $\abs{L_3} \geq 3$, $\abs{L_5} \geq 3$, $\abs{L_6} \geq 2$. 	
	Note that since $u_4u_5$ is already colored, we can simply assume $|L_4| \ge 5$ to be able to apply Lemma~\ref{lem:el_diablo} on the other edges of $D_n$.
	Hence, we again extend $\sigma_{\pi}^a$ to all edges of $G$, a contradiction.
	In the case $(3)$, i.e., if $\abs{L_4} \ge 4$ and $L_3 \cap L_5 = \emptyset$, 
	we use the color $x \in L_3$, which is not in $L_4$, to color $u_3u_4$.
	By doing this, we only decrease the number of available colors in $L_1$ and $L_2$ to $3$ and $2$, respectively.
	Next, we consecutively color $u_2u_3$, $u_1u_2$, and $u_0u_1$. 
	By doing this, we obtain $|L_4| \ge 3$, $|L_{n-2}| \ge 2$, and $|L_{n-1}| \ge 1$. 
	All the other non-colored edges still have at least $3$ available colors, 
	and hence we can extend $\sigma_{\pi}^a$ by coloring consecutively the edges $u_{n-1}u_n$, $u_{n-2}u_{n-1}$,\dots, $u_4u_5$, a contradiction.

	Hence, we may assume that $n < 12$. 
	By Claim~\ref{cl:127_square}, we also have that $n \geq 7$. 
	We first consider the case when $n \notin \set{8,11}$.
	Let $P' = u_1u_2u_3u_4u_5u_6$ and let $(G',\pi')$ be the $P'$-crossing  
	 and $\sigma_{\pi'}^a$ an $a$-induced coloring of $G'$.
	Let $\sigma_{\pi}^a$ be the partial coloring of $G$ induced by $\sigma_{\pi'}^a$.
	For every $i$, $i \in \set{2,3,4,5}$, color $u_iu_i'$ with $0$, and uncolor the colored edges of $P$.
	It is easy to see that all the edges have at least $3$ available colors and the edges $u_1u_2$ and $u_{n-2}u_{n-1}$ have at least $4$.
	Note that the edges of $P$ together with the edge $u_1u_{n-1}$ form the graph $D_n$, 
	which is $(3,4,3^{n-4},4,3)$-choosable by Lemma~\ref{lem:el_diablo-small} for $n \in \set{6,7,9,10}$. 
	Thus we can extend $\sigma_{\pi}^a$ to $G$, a contradiction. 

	So, we may assume that $n \in \set{8,11}$.
	Let $P' = u_1u_2u_3u_4u_5u_6$ and let $(G',\pi')$ be the $P'$-crossing and $\sigma_{\pi'}^a$ an $a$-induced coloring of $G'$.
	Let $\sigma_{\pi}^a$ be the partial coloring of $G$ induced by $\sigma_{\pi'}^a$.
	For every $i$, $i \in \set{2,3,4,5}$, color $u_iu_i'$ with $0$.
	Now, only the edges of $P'$ are non-colored. 	
	To extend $\sigma_{\pi}^a$ to all edges of $G$, we first uncolor the edges of $P$ that are already colored.
	As above, it suffices to find a list coloring of  the graph $D_n$ obtained from $P$ by adding the edge $u_1u_{n-1}$. 
	We will again apply Claim~\ref{cl:127_crossing}. 
	Suppose first that $|L_3| = 5$. Then, we can extend the coloring by Lemma~\ref{lem:el_diablo-small}, saying that $D_n$ is $(3,3,3,5,3^{n-7},2,3,4)$-choosable.
	Suppose now that there exists $x \in L_{3}$ such that $|L_{2} \setminus x| \ge 3$ and $|L_{4} \setminus x| \ge 3$. 
	We color $u_3u_4$ with $x$, we obtain $\abs{L_1} \geq 3$, $\abs{L_2} \geq 3$, $\abs{L_4} \geq 3$, $\abs{L_5} \geq 2$, and we can assume that $\abs{L_3} \geq 5$, 
	since it is already colored anyway. Hence, we can color $D_n$ by Lemma~\ref{lem:el_diablo-small}, a contradiction.
	Finally, suppose that $\abs{L_2} \ge 4$ and $L_1 \cap L_3 = \emptyset$.
	In this case, we color the edges of $P$ as follows.
	First, color $u_1u_2$ with a color that is not contained in $L_7$.
	Next, color $u_0u_1$ and then $u_2u_3$. 
	Note that at this point, $|L_3| \ge 2$, $|L_4| \ge 3$, $|L_5| \ge 3$, $|L_6| \ge 2$, and $|L_7| \ge 2$.
	This means that we can complete the coloring by using Lemma~\ref{lem:path_choose}$(a)$, a contradiction.	
\end{proofclaim}

\begin{claim}
	\label{cl:127_nochords}
	There is no $bc$-cycle in $G$.
\end{claim}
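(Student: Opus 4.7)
Suppose, for a contradiction, that $G$ contains a $bc$-cycle $C = u_0 u_1 \dots u_{n-1}$. By Claim~\ref{cl:127_chords}, $C$ is chordless, and since $\pi$ restricted to $E(C)$ is a proper $2$-edge-coloring, the length $n$ is even; the girth hypothesis then forces $n \ge 6$. Write $u_i'$ for the third neighbor of $u_i$, so that $\pi(u_iu_i') = a$.

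My strategy is to perform a $P$-crossing on the sub-path $P = u_0 u_1 u_2 u_3 u_4 u_5$ of $C$, obtain by minimality an $a$-induced coloring $\sigma_{\pi'}^a$ of the smaller graph $G'$, and pull it back to $G$. After coloring each pendant $u_j u_j'$ for $j \in \set{1,2,3,4}$ with $0$ and then \emph{uncoloring} the $n-5$ remaining edges of $E(C) \setminus E(P)$, the entire cycle $C$ is uncolored while the rest of the partial coloring remains valid; the pendants $u_iu_i'$ for $i \notin \set{1,2,3,4}$ already received the color $0$ from $\sigma_{\pi'}^a$ because $\pi(u_iu_i') = a$. For each edge $u_i u_{i+1}$ of $C$, the only $2$-colored edges in its $2$-edge-neighborhood are the four edges at $u_i'$ and $u_{i+1}'$ distinct from the corresponding pendant, so the list $L_i$ of available $2$-colors satisfies $|L_i| \ge 3$. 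Moreover, Claim~\ref{cl:127_crossing} applied to $P$ still provides structural information on $L_1, L_2, L_3$, since the constraints it records come from the $2$-colored edges sitting at $u_1', u_2', u_3', u_4'$, which are unaffected by the additional uncoloring.

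If $n \equiv 0 \pmod 3$, Theorem~\ref{thm:listcyc}$(iii)$ immediately yields a strong $L$-edge-coloring of $C$, producing the desired contradiction. The main obstacle is the remaining even values $n \not\equiv 0 \pmod 3$, i.e.\ $n \in \set{8, 10, 14, 16, \ldots}$, for which $C_n$ need not be strong $3$-choosable from arbitrary size-$3$ lists. In these cases I plan to invoke Claim~\ref{cl:127_crossing}: in each of its three alternatives one can fix the color of one or two middle edges of $P$ in such a way that, combined with the extra slack provided by the structural property, the remaining $n-1$ or $n-2$ edges of $C$ form a path whose list sizes dominate the profile $(2, 3^{\ell-2}, 2)$, allowing Lemma~\ref{lem:path_choose}$(b)$ to finish the coloring. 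The delicate point is that a naive one-edge fix would leave four boundary edges with lists of size only $2$, breaking the path-choosability hypothesis; hence each of the three structural sub-cases of Claim~\ref{cl:127_crossing} must be exploited to push two of those boundary edges up to size $3$, and the smallest lengths $n \in \set{8, 10}$ may require a sharper ad-hoc analysis in the spirit of Lemma~\ref{lem:el_diablo-small}. This finely tuned case distinction is the main technical challenge I expect to face.
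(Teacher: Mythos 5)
Your setup is sound: the reduction via a single $P$-crossing, the observation that all pendant edges of $C$ receive color $0$, the bound $|L_i|\ge 3$ for every cycle edge, and the disposal of the case $n\equiv 0 \bmod 3$ via Theorem~\ref{thm:listcyc} all match the paper (which in fact handles $n\equiv 0\bmod 3$ even more simply, by deleting $E(C)$ with no crossing at all). But for $n\not\equiv 0\bmod 3$ your argument stops exactly where the real work begins, and the route you sketch does not close. With a single crossing, the structural freedom of Claim~\ref{cl:127_crossing} is concentrated on the three consecutive edges $u_1u_2,u_2u_3,u_3u_4$; after you fix, say, $u_2u_3$ in alternative $(2)$, the leftover is a path of length $n-1$ around the cycle whose list profile is $(3,2,3^{n-5},2,3)$ --- the two deficient lists sit at the \emph{second} edge from each end, which is not dominated by either profile in Lemma~\ref{lem:path_choose}, and the Nullstellensatz coefficients of Proposition~\ref{claim:pkl-nullstellensatz} show that for any fixed profile there is a residue of the length modulo $3$ where this method genuinely fails. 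A single crossing gives you no way to shift the leftover path's length modulo $3$ nor to repair both deficient lists, and you explicitly defer this ("I plan to\ldots", "may require a sharper ad-hoc analysis"), so the main case is not proved.

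The paper's resolution is qualitatively different at this point: for $\ell\equiv 2$ and $\ell\equiv 1\bmod 3$ it performs \emph{two} overlapping crossings (on $u_\ell u_1\cdots u_5$ and on $u_4u_5\cdots$ resp.\ $u_5u_6\cdots$), so that Claim~\ref{cl:127_crossing} controls two separated triples of edges. It then analyses the set $S$ of $2$-colors at the middle pendant of the second crossing: either some color can be placed on $u_5u_6$ (resp.\ $u_6u_7$) without shrinking the neighboring lists, after which three or four consecutive middle edges are colored in a prescribed order so that the surviving path has length $\equiv 0$ or $1 \bmod 3$ and profile dominated by Lemma~\ref{lem:path_choose}; or else $S$ is forced, which makes $|L_2|$ and $|L_6|$ (resp.\ $|L_7|$) equal to $5$ so those two edges can be colored last. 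None of this machinery, nor any substitute for it, appears in your proposal, so the claim is not established for $n\not\equiv 0\bmod 3$.
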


\begin{proofclaim}
	Suppose, to the contrary, that $C=u_1u_2\dots u_\ell$ is a $bc$-cycle of length $\ell$ in $G$.
	By Claim~\ref{cl:127_chords}, we already have that $C$ is chordless.
	Clearly, $\ell$ is even and by Claim~\ref{cl:127_square}, $\ell \ge 6$.
	For every $i$, $1 \le i \le \ell$, 
	let $u_i'$ be the neighbor of $u_i$ such that $\pi(u_iu_i') = a$ (and thus all $u_i'$ are distinct).
	Since $C$ is chordless, no $u_i'$ is a vertex of $C$.	
	We consider three cases regarding $\ell$.

	\begin{itemize}
		\item[$(a)$] $\ell \equiv 0 \bmod{3}$. \quad
			Let $G' = G - E(C)$. By the minimality, there exists an $a$-induced coloring $\sigma_\pi^a$ of $G'$.
			To extend $\sigma_\pi^a$ to all edges of $G$, we only need to color the edges of $C$.
			Since $C$ is chordless, the only conflicts among its edges are those generated by $C$.
			Hence, every edge of $C$ has at least $3$ available colors, 
			by Theorem~\ref{thm:listcyc}, every cycle of length divisible by $3$ is $3$-choosable. 
			Thus, we can extend $\sigma_\pi^a$ to $G$.

		\item[$(b)$] $\ell \equiv 2 \bmod{3}$ (and so $\ell \ge 8$). \quad
			In this case, we perform two crossings at the same time, one with the path $P = u_\ell u_1 u_2 u_3 u_4 u_5$,
			and the other with the path $P' = u_4 u_5 u_6 u_7 u_8 u_j$, where $j = 1$ if $\ell = 8$, and $j = 9$ otherwise.
			Note that the properties for the lists of available colors guaranteed in Claim~\ref{cl:127_crossing} still hold.
			Let 
			$$
				G' = \big(G \setminus (E(C) \cup \{u_iu_i'\}_{1\leq i \leq 8})\big) \cup \set{u_1'u_3', u_2'u_4', u_5'u_7', u_6'u_8'}.
			$$ 
			By the minimality, there exists an $a$-induced coloring $\sigma_\pi^a$ of $G'$;
			thus $u_1'u_3'$, $u_2'u_4'$, $u_5'u_7'$, and $u_6'u_8'$ are colored with $0$ in $\sigma_\pi^a$.
			Without loss of generality, we may assume that the two $2$-colors incident with $u_5'$ are $1$ and $2$, 
			and that the two $2$-colors incident with $u_7'$ are $3$ and $4$. 
			Denote by $S$ the set containing the two $2$-colors incident with $u_6'$.
			Moreover, for every $i$, $1 \le i \le \ell-1$, let $L_i$ be the set of available $2$-colors for the edge $u_iu_{i+1}$ 
			(and the list for the edge $u_\ell u_1$ we denote by $L_\ell$).
			As in the previous case, we have $\abs{L_i} \geq 3$ for every $i$.
		
			We consider two possibilities regarding $S$.
			Suppose first that $3 \notin S$. 
			Then, $3 \in L_5$ and we color $u_5u_6$ with $3$. Since $u_7'$ is incident with an edge colored with $3$, the sizes of $L_6$ and $L_7$ do not decrease. 
			Now, consider the lists of available colors for the edges of $P$. 
			By Claim~\ref{cl:127_crossing}, we have three possibilities. 
			Suppose first that $|L_2| = 5$. In this case, we may color it last, since it will have at least one available color after all the edges at distance 
			$2$ on $C$ are colored, thus we may ignore it for now. 
			We color $u_4u_5$ (decreasing the size of $L_6$ by one) and $u_3u_4$ (decreasing the size of $L_1$ by one). 
			It remains to color the edges of the path $u_6u_7\dots u_\ell u_1u_2$, which is possible by Lemma~\ref{lem:path_choose}, 
			and finally coloring $L_2$, a contradiction.
			Suppose next that there exists $x \in L_{2}$ such that $|L_{1}\setminus x| \ge 3$ and $|L_{3} \setminus x| \ge 3$. 
			In this case, color $u_2u_3$ with $x$.				
			We obtain $\abs{L_1} \geq 3$, $\abs{L_3} \geq 2$, $\abs{L_4} \geq 1$, $\abs{L_6} \geq 3$, \dots, $\abs{L_{\ell-1}} \geq 3$, and $\abs{L_{\ell}} \geq 2$. 
			Color $u_4u_5$ and then $u_3u_4$. 
			It remains to color the path $u_6\dots u_\ell u_1u_2$ (of length $\ell - 4 = 1 \bmod 3$), 
			which can be done by Lemma~\ref{lem:path_choose}$(a)$, a contradiction.
			Finally, suppose that $\abs{L_2} \ge 4$ and $L_1 \cap L_3 = \emptyset$.
			In this case, we again color $u_4u_5$ (decreasing $|L_6|$ and $|L_2|$ by one) and $u_3u_4$ (only decrasing $|L_2|$ by one).
			The remaining edges are lying on a path which is colorable by Lemma~\ref{lem:path_choose}, a contradiction.

			Therefore, by symmetry, we may assume that $S = \{3,4\}$.
			Moreover, by the same reasoning, the set of $2$-colors incident with $u_2'$ is the same as the set of $2$-colors incident with $u_3'$. 
			This altogether means that $L_2$ and $L_6$ both have size at least $5$, and thus we can color the edges $u_2u_3$ and $u_6u_7$ last. 
			We first color $u_3u_4$, $u_4u_5$, and $u_5u_6$ (decreasing $|L_7|$ and $|L_1|$ by one), 
			and then color the path $u_7\dots u_\ell u_1u_2$ by Lemma~\ref{lem:path_choose}. 
			Finally, color $u_2u_3$ and $u_6u_7$, a contradiction.

		\item[$(c)$] $\ell \equiv 1 \bmod{3}$ (and so $\ell \ge 10$). \quad
			As in the previous case, we perform two crossings at the same time, one with the path $P = u_\ell u_1 u_2 u_3 u_4 u_5$,
			and the other with the path $P' = u_5 u_6 u_7 u_8 u_9 u_j$, where $j = 1$ if $\ell = 10$, and $j = 11$ otherwise.			
			Let 
			$$
				G' = \big( G \setminus (E(C) \cup \{u_iu_i'\}_{1\leq i \leq 4}  \cup \{u_iu_i'\}_{6\leq i \leq 9}) \big) \cup \{u_1'u_3', u_2'u_4', u_6'u_8', u_7'u_9'\}.
			$$ 
			By the minimality, there exists an $a$-induced coloring $\sigma_\pi^a$ of $G'$;
			thus $u_1'u_3'$, $u_2'u_4'$, $u_6'u_8'$, and $u_7'u_9'$ are colored with $0$ in $\sigma_\pi^a$.
			Without loss of generality, we may assume that the two $2$-colors incident with $u_6'$ are $1$ and $2$,
			and the two $2$-colors incident with $u_8'$ are $3$ and $4$. 
			Denote by $S$ the set containing the two $2$-colors incident with $u_7'$.
			Moreover, for every $i$, $1 \le i \le \ell-1$, let $L_i$ be the set of available $2$-colors for the edge $u_iu_{i+1}$ 
			(and the list for the edge $u_\ell u_1$ we denote by $L_\ell$).
			Again, we have that $\abs{L_i} \geq 3$ for every $i$.
			
			We consider two possibilities regarding $S$.
			Suppose first that $3 \notin S$.
			Then, $3 \in L_6$ and we color $u_6u_7$ with $3$. Since $u_8'$ is incident with an edge colored with $3$, the sizes of $L_7$ and $L_8$ do not decrease. 
			Now, consider the lists of available colors for the edges of $P$. 
			By Claim~\ref{cl:127_crossing}, we have three possibilities. 
			Suppose first that $|L_2| = 5$. In this case, we may color it last, since it will have at least one available color after all the edges at distance 
			$2$ on $C$ are colored, thus we may ignore it for now. 
			We first consecutively color the edges $u_5u_6$, $u_4u_5$, and $u_3u_4$ (each of them has at least one available color when being colored),
			by that, we decrease the sizes of $L_1$ and $L_7$ by at most $1$, 
			and hence we can color the edges of the path $u_7\dots u_\ell u_1u_2$ by Lemma~\ref{lem:path_choose}. Finally, we color $u_2u_3$, a contradiction.
			Suppose next that there exists $x \in L_{2}$ such that $|L_{1}\setminus x| \ge 3$ and $|L_{3} \setminus x| \ge 3$. 
			In this case, color $u_2u_3$ with $x$, and then consecutively $u_4u_5$, $u_5u_6$, and $u_3u_4$. 
			It remains to color the edges of the path $P'' = u_7\dots u_\ell u_1u_2$, where every edge has at least three available colors, 
			except for the edges $u_1u_2$, $u_\ell u_1$, and $u_7u8$, which have at least $2$.
			Since the length of $P''$ is $\ell - 5 = 2 \mod 3$, we can color its edges by Lemma~\ref{lem:path_choose}$(a)$, a contradiction.
			Finally, suppose that $\abs{L_2} \ge 4$ and $L_1 \cap L_3 = \emptyset$.
			In this case, we first color $u_5u_6$ (decreasing $|L_7|$ by one), $u_4u_5$, $u_3u_4$ (not decrasing $|L_1|$), and $u_2u_3$ (decreasing $|L_1|$ and $|L_\ell|$ by one).
			The remaining edges are lying on a path which is colorable by Lemma~\ref{lem:path_choose}$(a)$, a contradiction.
			
			Therefore, by symmetry, we may assume that $S = \{3,4\}$.
			Moreover, by the same reasoning, the set of $2$-colors incident with $u_2'$ is the same as the set of $2$-colors incident with $u_3'$. 
			This altogether means that $L_2$ and $L_7$ both have size at least $5$, and thus we can color the edges $u_2u_3$ and $u_7u_8$ last. 			
			Now, color $u_3u_4$, $u_4u_5$, $u_5u_6$, and $u_6u_7$ in this order (decreasing $|L_1|$ and $|L_8|$ by one). 
			and then color the path $u_7\dots u_\ell u_1u_2$ by Lemma~\ref{lem:path_choose}. 
			Finally, color $u_2u_3$ and $u_7u_8$, a contradiction.			
	\end{itemize}
\end{proofclaim}

Since $G$ is cubic by Claim~\ref{cl:127_min3}, the subgraph of $G$ induced by the edges colored $b$ or $c$ in $\pi$ is $2$-regular, 
meaning that there must be at least one $bc$-cycle in $G$, which is in contradiction with Claim~\ref{cl:127_nochords}.
This establishes Theorem~\ref{th:127}.
\end{proof}

%
%
%
%
\section{Further Work}
\label{sec:conc}

Conjecture~\ref{conj:main} remains open, but our upper bounds are only by one $2$-color off.
Unfortunately, we were not able to apply the proving techniques, used to prove tight bounds for proper edge-coloring and strong edge-coloring of subcubic graphs, 
to the problems considered in this paper. 
Therefore, since solving Conjecture~\ref{conj:main} in the general setting seems to be challenging, 
we suggest in this section additional problems which arise naturally when dealing with the considered colorings.
All of them are supported with computational results on graphs of small orders.

We begin with a general conjecture for strong edge-coloring.
\begin{conjecture}
	Every bridgeless subcubic graph $G$, not isomorphic to the Wagner graph or the complete bipartite graph $K_{3,3}$
	with one edge subdivided, admits a $(2^9)$-packing edge-coloring.
\end{conjecture}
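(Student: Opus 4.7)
The plan is to attack this conjecture by the minimum-counterexample framework that was successful in Sections~\ref{sec:1all} and~\ref{sec:1classI}. Let $G$ be a counterexample minimizing $|V(G)| + |E(G)|$. Thus $G$ is a bridgeless subcubic graph distinct from the Wagner graph and from $K_{3,3}$ with one edge subdivided, and $G$ admits no strong edge-coloring with $9$ colors. The task is to reduce $G$ to a smaller instance $G'$ which still satisfies the hypotheses of the conjecture (in particular, $G'$ must be bridgeless and avoid the two exceptions), apply the inductive hypothesis to get a strong $9$-edge-coloring of $G'$, and then extend it back to $G$.

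First I would establish the standard reductions: $G$ is simple; $G$ is cubic, where $1$- and $2$-vertices are eliminated either by direct extension (the available color count is sufficient since $A_2(e)$ exceeds the local palette) or by suppressing a $2$-vertex into a new edge, taking care that suppression does not create a bridge in $G'$; and $G$ has no triangles or $4$-cycles, by local replacement arguments analogous to Claims~\ref{128_girth4}, \ref{128_girth5}, \ref{cl:127_triangle}, and~\ref{cl:127_square}. After these reductions, the maximum $2$-edge-neighborhood of an edge contains at most $10$ edges, and $9$ colors exactly match the saturated regime, making tools such as Hall's Theorem (Theorem~\ref{th:Hall}) and the Combinatorial Nullstellensatz (Theorem~\ref{thm:null}) essentially indispensable when an edge's list has shrunk to size close to $1$.

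The second phase is structural. Once $G$ is cubic, bridgeless, and has girth at least $5$, I would perform cycle-based reductions similar to Claims~\ref{cl:127_chords} and~\ref{cl:127_nochords}: delete the edges of a short induced cycle $C$, apply induction, and try to extend. The critical modification compared to previous sections is that deleting $E(C)$ can create bridges; hence the reduction must be coupled with a gluing step. When $G-E(C)$ contains a bridge $uv$, split along the bridge into components $G_u$ and $G_v$, color each separately by induction, and then permute colors to agree along $uv$ and its $2$-edge-neighborhood, exactly in the spirit of Claim~\ref{128_2con}. Care must be taken that neither $G_u$ nor $G_v$ is itself one of the two exceptional graphs; in that case a separate, finite, case-analytic argument is needed, probably by computer verification on the few graphs in which such a configuration can occur.

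The hard part, and the reason this is stated as a conjecture rather than a theorem, is twofold. On one hand, preserving bridgelessness and avoiding the two exceptions under each reduction is genuinely delicate: unlike the $(1, 2^k)$ setting, we have no $1$-color to absorb conflicts, so reductions which lose one color must be engineered with more precision. On the other hand, a full proof would improve the classical $10$-color bound of Andersen and Hor\'ak--Qing--Trotter by one in the bridgeless setting, a strengthening which has resisted attack for three decades. A realistic intermediate target is therefore to prove the conjecture under additional assumptions such as large girth, or planarity, using a discharging argument tuned to the precise structure of $2$-edge-neighborhoods in cubic bridgeless graphs avoiding the two exceptions; the general statement likely requires a new global insight into what distinguishes the Wagner and subdivided-$K_{3,3}$ graphs from every other bridgeless cubic graph.
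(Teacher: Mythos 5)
This statement is an open conjecture in the paper's ``Further Work'' section: the authors give no proof of it, and to your credit your write-up concedes as much (``the reason this is stated as a conjecture rather than a theorem''). So there is nothing in the paper to compare your argument against, and what you have written is a research plan rather than a proof. As a proof attempt it has an unclosed gap at every essential step: you do not carry out the reductions eliminating $2^-$-vertices, triangles, and $4$-cycles in the $(2^9)$ setting, and these are genuinely harder than their analogues in Sections~\ref{sec:1all} and~\ref{sec:1classI} because there is no spare $1$-color to absorb conflicts --- in a cubic graph of girth at least $5$ an edge sees $10$ other edges in its $2$-edge-neighborhood, so with only $9$ colors a single uncolored edge can already have an empty list, and Hall's Theorem or the Nullstellensatz cannot be invoked until one has actually exhibited lists of sufficient size, which you do not do. The cycle-deletion phase is likewise only gestured at: deleting $E(C)$ from a cubic bridgeless graph always disconnects the pendant structure in ways that can create bridges or reproduce the exceptional graphs in the smaller instance, and your treatment of that case (``probably by computer verification on the few graphs in which such a configuration can occur'') is not an argument, since the set of such configurations is not finite a priori and is never delimited.

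The one concrete error of framing worth flagging: a minimal-counterexample proof of this conjecture cannot simply mimic Claim~\ref{128_2con}, because the inductive hypothesis here requires the smaller graph to be \emph{bridgeless and non-exceptional}, whereas in Theorem~\ref{thm:128_good} the hypothesis applies to all subcubic graphs; so every reduction must be verified to preserve two global properties rather than none, and your proposal acknowledges but does not resolve this. Your final paragraph, identifying large girth or planarity as realistic intermediate targets and noting that the general case likely needs a new structural insight into what distinguishes the Wagner graph and the subdivided $K_{3,3}$, is a reasonable assessment of the state of the problem --- but it is an assessment, not a proof, and the conjecture remains open.
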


\medskip
We proceed with an overview of results in specific graph classes and list open problems for each of them.
For that, we follow the conjecture on strong edge-coloring of subcubic graphs proposed
by Faudree, Gy\'{a}rf\'{a}s, Schelp, and Tuza~\cite{FauGyaSchTuz90} in 1990.
\begin{conjecture}[Faudree, Gy\'{a}rf\'{a}s, Schelp \& Tuza~\cite{FauGyaSchTuz90}]
	\label{con:Fau}
	For every subcubic graph $G$ it holds:
	\begin{itemize}
		\item[$(1)$] $G$ admits a $(2^{10})$-packing edge-coloring;
		\item[$(2)$] If $G$ is bipartite, then it admits a $(2^{9})$-packing edge-coloring;
		\item[$(3)$] If $G$ is planar, then it admits a $(2^{9})$-packing edge-coloring;
		\item[$(4)$] If $G$ is bipartite and each edge is incident with a $2$-vertex, then it admits a $(2^{6})$-packing edge-coloring; 
		\item[$(5)$] If $G$ is bipartite of girth at least $6$, then it admits a $(2^{7})$-packing edge-coloring;
		\item[$(6)$] If $G$ is bipartite and has girth large enough, then it admits a $(2^{5})$-packing edge-coloring.
	\end{itemize}
\end{conjecture}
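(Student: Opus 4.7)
The statement is the multi-part Faudree--Gy\'arf\'as--Schelp--Tuza conjecture on strong edge-colorings of subcubic graphs. It is a longstanding open problem in its full strength, so my ``proposal'' is really a plan of attack, part by part, using the tools developed in the paper (induced-matching decomposition, crossings along $bc$-cycles, Combinatorial Nullstellensatz for path/cycle configurations) together with the classical discharging machinery. Throughout, the underlying object is the square $L(G)^2$ of the line graph, and a $(2^k)$-packing edge-coloring of $G$ is precisely a proper $k$-coloring of $L(G)^2$; the plan is to bound $\Delta(L(G)^2)$ contextually.

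For part $(1)$, the plan is to reprove the Andersen / Hor\'ak--Qing--Trotter bound by minimal counterexample: assume $G$ is an edge-minimum counterexample, show $G$ is cubic and bridgeless by absorbing small configurations as in Claims~\ref{128_simple}--\ref{128_2con}, eliminate short cycles (triangles and $4$-cycles) by the same type of Hall-type counting we used in the proof of Theorem~\ref{thm:128_good}, and finish with a global argument on cubic graphs of girth $\ge 5$. For parts $(2)$ and $(3)$ I would specialize this approach. In the bipartite case the key extra ingredient is that every $2$-edge-neighborhood decomposes into two halves of size $\le 4$, saving one color in the worst case; the attack is to enumerate the small unavoidable configurations (degree-$2$ adjacencies, short even cycles, thetas) and for each give an explicit reduction to a smaller bipartite subcubic graph, extending the coloring with one fewer color exactly as in Claims~\ref{128_girth4}--\ref{128_noCycles}. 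For part $(3)$ (planar), I would combine the same reducibility lemmas with a discharging argument on the plane: assign initial charge $d(v)-4$ to vertices and $\ell(f)-4$ to faces, and design discharging rules that force the existence of one of a fixed finite list of reducible configurations. The hard part will be controlling faces of length $5$ and $6$, which are plentiful in cubic plane graphs of girth $\ge 5$.

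Parts $(4)$, $(5)$, $(6)$ exploit sparsity, and here the tools of Section~\ref{sec:1classI} are natural weapons. For $(4)$, if every edge is incident with a $2$-vertex, then in a bipartite subcubic graph one can orient so that the line graph squared has maximum degree at most $5$; I would run an inductive argument removing a $2$-vertex $v$ with neighbors $u,w$, coloring the two edges $uv,vw$ from lists of size $\ge 2$, and handling the tight cases with Hall's theorem. For $(5)$, girth $\ge 6$ means that $N''(e)$ contains no short cycle, so an edge of $G$ has at most $8$ edges in its $2$-edge-neighborhood, but among these many pairs are forced to receive distinct colors only via $e$; here a Combinatorial Nullstellensatz computation on small subpaths --- an extension of Lemma~\ref{lem:path_choose} and Lemma~\ref{lem:el_diablo} to the girth-$6$ setting --- should give list colorings with $7$ colors. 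For $(6)$, the plan is to iterate: with girth large enough one can first color a perfect (or near-perfect) matching with one color and peel off, reducing to a subcubic graph of strictly smaller $\Delta(L(G)^2)$, eventually reaching the threshold $5$.

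The main obstacle is the same as in all prior work: parts $(2)$ and $(3)$ are not known, and the discharging arguments that work for $(1)$ lose exactly one color when specialized, because the reducible configurations we can absorb do not suffice to rule out the extremal configurations (essentially, $K_{3,3}$ with one subdivided edge for the bipartite case and carefully constructed plane cubic graphs of girth $5$ for the planar case). Any proof along the plan above will therefore have to identify a new reducible configuration or a new global obstruction; the list-coloring framework together with the $P$-crossing trick (Figure~\ref{fig:crossing}) is, in my view, the most promising path, since it converts a local incompatibility into a strong-choosability statement on $D_n$-like graphs, which Combinatorial Nullstellensatz can in principle settle by finite computation.
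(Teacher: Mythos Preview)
This statement is a \emph{conjecture}, and the paper does not attempt to prove it; rather, Section~\ref{sec:conc} surveys which parts have been settled elsewhere and which remain open. So there is no ``paper's own proof'' to compare against. That said, your plan contains substantive factual errors that would derail any serious attempt.

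First, your claim that ``parts $(2)$ and $(3)$ are not known'' is wrong. Part $(2)$ was proved by Steger and Yu in 1993 (Theorem~\ref{thm:bipartite} in the paper), and part $(3)$ was proved by Kostochka et al.\ (Theorem~\ref{thm:planar}). Part $(4)$ was also settled, by Maydanskiy and independently Wu--Lin. The paper states explicitly that all cases except $(5)$ are resolved. Second, and more seriously, part $(6)$ is \emph{false}: the paper reports that Lu\v{z}ar, Ma\v{c}ajov\'{a}, \v{S}koviera, and Sot\'{a}k showed a cubic graph admits a $(2^5)$-packing edge-coloring if and only if it covers the Petersen graph, so arbitrarily large girth does not suffice. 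Your plan to ``peel off'' matchings until reaching threshold $5$ therefore cannot succeed, and the obstruction is structural (Petersen coverings), not a matter of finding the right reducible configuration or Nullstellensatz monomial.

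The only genuinely open case is $(5)$, and here your sketch is too thin to assess: you assert that girth $\ge 6$ forces $|N''(e)| \le 8$, but in a cubic graph every edge has exactly $8$ edges in its $2$-edge-neighborhood regardless of girth, so this observation gives no leverage. A real attack on $(5)$ would need a new idea beyond what the paper's toolkit (crossings, $D_n$-choosability) provides, since those tools were designed for colorings that retain at least one $1$-color.
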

All the cases of the conjecture, except $(5)$, are already resolved, and we present the results in what follows.

\subsection{Planar graphs}

It was the well-known connection between edge-coloring of bridgeless cubic planar graphs and the Four Color Problem, established by Tait~\cite{Tai80},
which initiated the research in this area. 
By the Four Color Theorem, we thus have that every bridgeless cubic planar graph admits a $(1,1,1)$-edge-coloring.
The condition of being bridgeless is necessary, since already $K_4$ with one subdivided edge is in class II.
However, not all questions are resolved. 
The following conjecture of Albertson and Haas~\cite{AlbHaa96}, 
which is a special case of Seymour's conjecture~\cite{Sey81}, is still widely open.
\begin{conjecture}[Albertson \& Haas~\cite{AlbHaa96}]
	Every bridgeless subcubic planar graph with at least two vertices of degree $2$ 
	admits a $(1,1,1)$-packing edge-coloring.
\end{conjecture}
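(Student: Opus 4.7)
The plan is to reduce the statement to the $3$-edge-colorability of bridgeless cubic planar graphs, which follows from the Four Color Theorem through Tait's classical equivalence. Because $G$ is bridgeless, it has no $1$-vertices, so the only non-cubic vertices are $2$-vertices. The strategy is to close $G$ up into a bridgeless cubic planar graph $G^{*}$ whose proper $3$-edge-coloring restricts to one of $G$, and to handle by induction the obstructions that prevent a direct closure.

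I would work with a minimum counterexample $G$ (minimizing $|V(G)| + |E(G)|$) and first establish the usual reductions: $G$ is connected, simple (parallel edges between $2$-vertices and $3$-vertices collapse nicely, and a pair of parallel edges between two $3$-vertices gives an easy reduction), $G$ has no $2$-vertex adjacent to a $2$-vertex (otherwise suppress the path and use induction), and $G$ has no non-trivial $2$-edge-cut separating the two $2$-vertices from one another (otherwise recurse on both sides and match colors across the cut, as in Claim~\ref{128_2con}). After these reductions, $G$ has essentially isolated $2$-vertices living in a $3$-connected cubic "skeleton."

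The next step is the main construction. Let $u,v$ be two $2$-vertices of $G$; I would try to add, along an arc in the plane avoiding other edges, a new edge $uv$ so that the resulting graph $G'$ remains planar and bridgeless. If this succeeds and $G'$ still has at least two $2$-vertices (or is cubic), induction (respectively Tait's theorem) produces a proper $3$-edge-coloring of $G'$, which immediately restricts to one of $G$. If every pair of $2$-vertices lies in distinct faces of every planar embedding, I would pass instead to a local reduction: for a $2$-vertex $v$ with neighbors $a,b$, suppress $v$ to obtain $G - v + ab$, apply the inductive hypothesis or Tait's theorem to the resulting planar graph (possibly a near-cubic multigraph, dealt with via an auxiliary simplification), and then subdivide $ab$ back to recover a coloring of $G$. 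A careful choice between these two moves, driven by the face structure of $G$, should preserve planarity, bridgelessness, and the "at least two $2$-vertices" hypothesis throughout the induction.

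The hard part, and the reason the conjecture remains open, is controlling the simultaneous requirements of planarity, bridgelessness, and the parity condition underlying $3$-edge-colorability (equivalently, the existence of a nowhere-zero $\mathbb{Z}_2 \times \mathbb{Z}_2$-flow, dual to $4$-face-coloring). A $2$-vertex suppression can introduce parallel edges whose removal creates a bridge, and adding a new edge $uv$ can destroy planarity if $u,v$ are in different faces. Thus the real obstacle is producing a \emph{global} matching of $2$-vertices whose pairwise closing arcs can be routed simultaneously in the plane without creating bridges — a planar flow-theoretic obstruction that seems to require going beyond the local reductions above, for instance by combining the $T$-join / $T$-cut structure of bridgeless planar graphs with a Seymour-style analysis of the set of $2$-vertices as a distinguished vertex subset. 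I would expect any complete proof to hinge on such a global planar-duality argument rather than on a purely inductive vertex-by-vertex reduction.
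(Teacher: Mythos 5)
This statement is not a theorem of the paper: it is the Albertson--Haas conjecture, recorded in Section~6 as a special case of Seymour's conjecture and explicitly described there as still widely open. The paper offers no proof, so there is nothing to compare your argument against except the statement itself --- and your proposal, by your own admission in its final paragraph, does not close the argument. What you have written is a reduction framework whose central step (simultaneously routing the closing arcs between $2$-vertices in the plane without destroying bridgelessness, or equivalently overcoming the planar parity obstruction) is left as the ``hard part.'' That hard part \emph{is} the conjecture; the surrounding reductions are standard and do not constitute progress toward a proof.

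Beyond that, two of your local moves fail concretely. First, suppressing a $2$-vertex $v$ (replacing it by the edge $ab$) can drop the number of $2$-vertices from two to one, and the inductive hypothesis is then unavailable and in fact false: $K_4$ with one subdivided edge is bridgeless, planar, subcubic, has exactly one $2$-vertex, and is class~II, as the paper notes in Section~1. So the induction cannot be maintained through that reduction without additional ideas. Second, the edge-addition move requires the two chosen $2$-vertices to lie on a common face of some planar embedding, and when no such pair exists you fall back on exactly the suppression move that breaks. A correct proof would have to resolve the global flow-theoretic obstruction you identify, which no one has done; you should present this as a discussion of the difficulty of the conjecture, not as a proof.
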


The number of required colors for strong edge-coloring of planar graphs is also determined. 
Just recently, Kostochka et al.~\cite{KosLiRukSanWanYu16} proved the following (and resolved the case~$(3)$ of Conjecture~\ref{con:Fau}).
\begin{theorem}[Kostochka et al.~\cite{KosLiRukSanWanYu16}]
	\label{thm:planar}
	Every subcubic planar graph admits a $(2^9)$-packing edge-coloring.
\end{theorem}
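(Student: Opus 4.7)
The plan is to attack Theorem~\ref{thm:planar} by a standard discharging argument on a minimum counterexample $G$ (minimum with respect to $|V(G)| + |E(G)|$) together with a list of reducible configurations. Since a subcubic graph has line graph of maximum degree at most $4$, every edge has at most $12$ edges within distance two, so the naive bound is $13$; beating this down to $9$ requires both local coloring lemmas and global planarity. I would work throughout in the list setting where possible, so that after deleting or contracting a reducible configuration and coloring the remainder by minimality, each uncolored edge $e$ carries a list $L(e)$ of available $2$-colors whose size I can bound below from the number of $2$-colored edges in its $2$-edge-neighborhood.

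First I would establish the basic global reductions: $G$ is connected, simple, $2$-edge-connected, and (after the easy degree-reduction lemmas) has minimum degree $2$. Then I would develop the local reducibility catalog. Typical targets are (a) a $2$-vertex whose two neighbors are close, (b) two $2$-vertices at distance $\le 2$, (c) a $3$-vertex with two $2$-neighbors, (d) short faces whose incident vertices are all of degree $3$ and whose ``pendent'' edges lead to a configuration one can either remove or re-route (as in Claims~\ref{128_girth5} and~\ref{cl:127_square}), and (e) special adjacency patterns between a $3$-face or $4$-face and another small face. For each such configuration I would delete a small set of edges/vertices, color the smaller graph by minimality, compute $|L(e)|$ on the uncolored edges, and verify colorability either directly, via Hall's Theorem~\ref{th:Hall}, via Theorem~\ref{thm:listcyc}, or via Theorem~\ref{thm:null} applied to the strong-coloring polynomial on the uncolored subgraph — this is exactly the style used in Sections~\ref{sec:1all} and~\ref{sec:1classI}, so the polynomial bookkeeping machinery of Proposition~\ref{claim:pkl-nullstellensatz} is already available to me.

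Next comes the discharging phase, which exploits planarity. I would assign initial charge $\mu(v) = d(v) - 4$ to each vertex and $\mu(f) = 2|f| - 4$ to each face; by Euler's formula the total is $-8$, so if the discharging rules redistribute charges without changing the total and I can prove every final charge is $\ge 0$, I obtain a contradiction. The reducible configurations eliminated above guarantee that $3$-faces, $4$-faces, and $5$-faces are surrounded by ``rich'' $3$-vertices that can afford to donate charge, while long faces keep surplus charge. I would likely need separate rules for $3$-faces adjacent to faces of various lengths and for $2$-vertices, aiming to give each short face exactly enough to reach $0$ while leaving donors non-negative.

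The hard part, by far, will be constructing the reducible-configuration list so that (i) each individual reduction really works — the lists $L(e)$ produced on the uncolored edges admit a strong list coloring, which in tight cases requires a polynomial/Nullstellensatz computation rather than a counting bound — and (ii) the list is rich enough that the discharging rules can discharge every short face. The interplay between $3$-faces and $4$-faces is the usual pinch point: a $3$-face sharing an edge with a $4$-face is already too expensive to handle with only $9$ colors unless the surrounding $3$-vertices satisfy fairly strong conditions, and ruling out such neighborhoods typically demands a delicate reduction where one contracts a small subgraph to a single vertex or edge and verifies colorability by an ad-hoc Nullstellensatz argument analogous to Lemma~\ref{lem:el_diablo}. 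I expect most of the work to sit in this case analysis, with the discharging itself being comparatively mechanical once the reducibility catalog is complete.
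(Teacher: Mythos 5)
First, a point of order: the paper does not prove this statement at all --- Theorem~\ref{thm:planar} is quoted from Kostochka et al.~\cite{KosLiRukSanWanYu16}, so there is no in-paper proof to compare yours against. Your general strategy (minimal counterexample, a catalogue of reducible configurations verified by list-coloring arguments, then discharging via Euler's formula) is indeed the strategy of the cited work, so the direction is not misguided. But what you have written is a plan, not a proof: no reducible configuration is actually specified, none is verified, and no discharging rule is stated. You yourself flag that ``the hard part, by far, will be constructing the reducible-configuration list'' and that you ``expect most of the work to sit in this case analysis'' --- that work \emph{is} the proof, and it is entirely absent here. In the original paper that case analysis runs to dozens of pages; asserting that it can presumably be carried out does not establish the theorem.

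There is also a concrete slip in the one piece of the argument you do write down. With $\mu(v)=d(v)-4$ and $\mu(f)=2\ell(f)-4$, the total charge is $\sum_v(d(v)-4)+\sum_f(2\ell(f)-4)=(2|E|-4|V|)+(4|E|-4|F|)=6|E|-4(|V|+|F|)=2|E|-8$ by Euler's formula, which is \emph{positive} for any graph with more than four edges; the intended contradiction (``every final charge is $\ge 0$ yet the total is negative'') therefore evaporates. You want, for instance, $\mu(f)=\ell(f)-4$, which does give total charge $-8$. This is easily repaired, but it shows that even the phase you call ``comparatively mechanical'' has not actually been checked, let alone the reducibility catalogue on which everything depends.
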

The upper bound is tight due to the $3$-prism, depicted in Fig.~\ref{fig:3prism}.
For now, this is the only known planar graph with maximum degree $3$ with strong chromatic index equal to $9$.\footnote{In the 
conference version of this paper~\cite{HocLajLuz20}, it was erroneously stated that there exists an infinite family of such graphs.}
\begin{figure}[htp!]
	$$
		\includegraphics{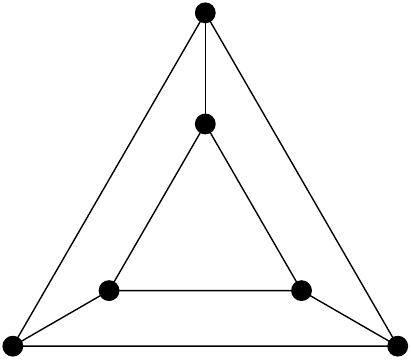}
	$$
	\caption{A cubic planar graph which needs nine colors for a strong edge-coloring.}
	\label{fig:3prism}
\end{figure}

On the other hand, there are no results for planar graphs on the colorings with one or two matchings.
We propose the following conjecture.
\begin{conjecture}
	Every subcubic planar graph admits a $(1,2^6)$-packing edge-coloring and a $(1,1,2^3)$-packing edge-coloring.
\end{conjecture}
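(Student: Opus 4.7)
The plan is to attack both statements via the standard reducibility-plus-discharging framework for planar graphs, using the proofs of Theorems~\ref{thm:128_good} and~\ref{th:127} as a template for the reducibility part.

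For the $(1,1,2^3)$ statement, I would begin by invoking the Four Color Theorem together with Vizing's theorem: every bridgeless cubic planar graph is of class I and so admits a proper $3$-edge-coloring with colors $a,b,c$. Keeping the $b$- and $c$-classes as the two $1$-matchings, the task reduces to partitioning the edges of color $a$ into three induced matchings in $G$. Equivalently, the auxiliary graph $G^{*}$ obtained from $G$ by contracting all $a$-colored edges (a planar graph of maximum degree $4$) must be properly $3$-vertex-colorable, with triangles of $G^{*}$ corresponding to triples of $a$-edges at mutual distance~$2$ in $G$. I would then either choose a careful $3$-edge-coloring of $G$ ensuring $G^{*}$ is triangle-free so that Gr\"otzsch's theorem applies, or handle the exceptional configurations (the analogues of the $K_{5}$ case in Proposition~\ref{prop:K5}) separately by a direct coloring, using a small number of reducible local structures obtained by contracting or rerouting edges in the style of Claims~\ref{cl:127_triangle} and~\ref{cl:127_square}.

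For the $(1,2^6)$ statement, I would proceed by minimal counterexample in the spirit of Theorem~\ref{thm:128_good}: establish that $G$ is simple, $2$-connected, cubic, has no $3$- or $4$-cycles, and that no $2^{-}$-vertex is incident with the $1$-color. The crucial new step is to exploit planarity. Because a cubic planar graph satisfies $|V|(6-g)\ge 4g$ by Euler's formula and therefore has girth at most $5$, it suffices to force the girth of a minimal counterexample to be at least $6$. To achieve this I would enlarge the list of reducible configurations to cover $5$-cycles with various attachment patterns, adapting the list-coloring toolkit of Lemmas~\ref{lem:path_choose}, \ref{lem:el_diablo}, and~\ref{lem:el_diablo-small} together with the Hall- and Nullstellensatz-based arguments of Section~\ref{sec:1classI}. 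If individual $5$-cycles resist direct reduction, I would fall back on a discharging argument assigning charge $d(v)-4$ to vertices and $|f|-4$ to faces and redistribute charge using the structure of the remaining $5$-faces and their immediate neighborhoods.

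The main obstacle is the loss of one $2$-color compared to Theorem~\ref{thm:main}: the lists of available $2$-colors in the key configurations shrink from sizes $5,4,3$ to $4,3,2$, which makes essentially all the Hall-type extension arguments of Section~\ref{sec:1classI} fail by one unit. In particular, extending a partial coloring along a $bc$-cycle whose length is $\equiv 0 \bmod 3$ with only two available colors per edge is no longer immediate, and the Combinatorial Nullstellensatz computation inside Lemma~\ref{lem:el_diablo} would have to be redone with one fewer color, likely producing genuine obstructions at certain cycle lengths modulo $3$ that correspond to the tight examples in Figure~\ref{fig:112}. Resolving these tight cases---probably by combining a cleverer starting coloring coming from the Four Color Theorem with a finer discharging on $5$-faces---is where the real difficulty of the conjecture lies.
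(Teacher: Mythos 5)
This statement is posed in the paper as an open conjecture in the ``Further Work'' section; the paper offers no proof of it, and your text is a research plan rather than a proof, so it cannot be checked against anything in the paper. Taken on its own terms, the plan has concrete gaps. For the $(1,1,2^3)$ part, your very first step --- invoking the Four Color Theorem to get a proper $3$-edge-coloring --- already does not cover the conjectured class: the conjecture is about \emph{all} subcubic planar graphs, and these may have bridges, degree-$2$ vertices, and be of class II ($K_4$ with one subdivided edge is the standard example), so there is no $3$-edge-coloring to start from. Even in the bridgeless cubic case, the reduction asks for a proper $3$-vertex-coloring of the contracted graph $G^{*}$, which is planar of maximum degree $4$; Brooks only gives $4$ colors, and Gr\"otzsch requires $G^{*}$ to be triangle-free, i.e., that no three $a$-edges are pairwise at distance $2$. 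You offer no argument that some perfect matching of $G$ avoids such triples, and nothing in the paper (Proposition~\ref{prop:K5} bounds cliques in the distance-$2$ graph at $5$, not at $2$) supports it. This is the entire content of the claim, not an ``exceptional configuration'' to be handled separately.

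For the $(1,2^6)$ part you correctly diagnose that every extension argument in Sections~\ref{sec:1all} and~\ref{sec:1classI} loses exactly one unit of slack, but you then defer the resolution (``redo the Nullstellensatz computation'', ``a finer discharging on $5$-faces'') without carrying out any of it. In particular, a cubic planar graph always contains a face of length at most $5$, but after deleting a $5$-cycle the five pendent edges can no longer all be colored $0$ unless the coloring of $G'$ is ``good'' in the sense of Theorem~\ref{thm:128_good}, and with only six $2$-colors the five cycle edges are left with lists of size as small as $2$, for which no analogue of Theorem~\ref{thm:listcyc} or Lemma~\ref{lem:path_choose} is available. The paper itself remarks that this conjecture ``appears to be much more demanding'' than the known strong edge-coloring bound for planar graphs, and the tight examples in Figure~\ref{fig:plan12} show there is no room to spare. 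In short: the statement remains open, and your proposal identifies the difficulties without overcoming them.
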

The conjectured upper bound, if true, is tight and attained by an infinitely many bridgeless subcubic planar graphs
for both values. Indeed, in Figure~\ref{fig:plan12}, 
a planar bridgeless graph which does not admit a $(1,2^6)$-packing edge-coloring nor a a $(1,1,2^2)$-packing edge-coloring
and moreover, it can be appended to other subcubic graphs by the two $2$-vertices, 
hence an infinite family of bridgeless subcubic planar graphs not admitting such a coloring.
\begin{figure}[htp!]
	$$
		\includegraphics{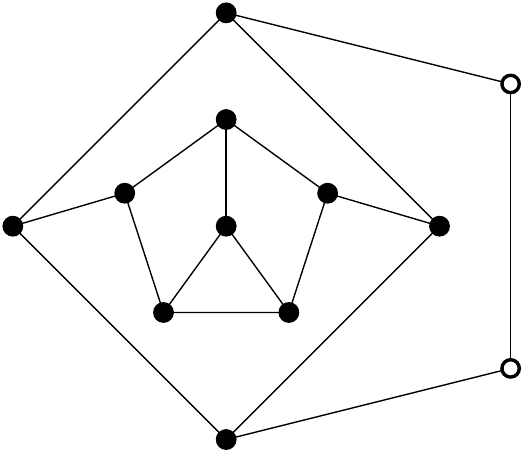}
	$$
	\caption{A subcubic planar graph which does not admit a $(1,2^6)$-packing edge-coloring nor a $(1,1,2^2)$-packing edge-coloring.}
	\label{fig:plan12}
\end{figure}
It also appears to be much more demanding as the result of Theorem~\ref{thm:planar}. 
Thus, also some partial results, with additional restrictions on the structure of planar graphs, 
might also be interesting, in order to understand the general problem better.

\subsection{Bipartite graphs}

In the class of bipartite graphs, the proper and the strong case of the colorings are long solved.
In 1916, K\"{o}nig~\cite{Kon16} proved that every bipartite graph is in class I, and in 1993,
Steger and Yu~\cite{SteYu93} established the following (and resolved the case~$(2)$ of Conjecture~\ref{con:Fau}).
\begin{theorem}[Steger \& Yu~\cite{SteYu93}]
	\label{thm:bipartite}
	Every subcubic bipartite graph admits a $(2^9)$-packing edge-coloring.
\end{theorem}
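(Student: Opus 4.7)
The plan is to reduce the strong edge-coloring to partitioning matchings, exploiting K\"onig's theorem to get good structure for free. Since $G$ is bipartite and subcubic, it admits a proper $3$-edge-coloring into matchings $M_1, M_2, M_3$. If each $M_i$ can be partitioned into at most three induced matchings of $G$, then allocating three disjoint color palettes of size three yields a $(2^9)$-packing edge-coloring, since a subset of a matching that is pairwise at distance $\ge 3$ in $G$ is exactly an induced matching of $G$. So I would reduce the theorem to the single-matching question: in a subcubic bipartite graph, does every matching decompose into three induced matchings?

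To attack this sub-problem for a single matching $M$, introduce the conflict graph $H = H(M)$ on vertex set $M$, in which two $M$-edges are adjacent whenever they lie at distance exactly $2$ in $G$. A proper $3$-coloring of $H$ is exactly a decomposition of $M$ into three induced matchings. For any $e = uv \in M$, the other $M$-edges within distance $2$ of $e$ are incident to the neighbors of $u$ and of $v$, so there are at most four of them; hence $\Delta(H) \le 4$. Since any $K_5$ in $H$ would force a rigid $10$-vertex cubic host by Proposition~\ref{prop:K5} (and in fact the bipartiteness of $G$ should rule this out entirely), Brooks' theorem gives a $4$-coloring of $H$ unconditionally. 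The heart of the argument is pushing this down from $4$ colors to $3$.

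The main obstacle is precisely this final descent. My plan is to lift obstructions in $H$ back to $G$: an odd cycle or a $K_4$ in $H$ corresponds to a tightly controlled configuration of $M$-edges mutually linked by non-$M$ edges, and the bipartiteness of $G$ imposes parity constraints on such configurations. I would expect to prove either that $H$ contains no short odd cycle (so that $3$-colorability follows via list arguments in the spirit of Lemma~\ref{lem:path_choose} and Theorem~\ref{thm:listcyc}), or that the extremal local configurations can be hand-colored by exploiting the specific matching $M$. A delicate point is that the three matchings interact: even if each $M_i$ individually decomposes into three induced matchings, the nine classes must globally form an edge-partition of $G$, so one may need to swap edges among the $M_i$ via alternating-path (Kempe-chain) arguments before partitioning.

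If the conflict-graph attack does not close cleanly, I would fall back on a minimum counterexample argument in the style of Theorem~\ref{thm:128_good}: take a smallest subcubic bipartite $G$ with no $(2^9)$-packing edge-coloring, prove it is cubic and bridgeless, eliminate $4$-cycles and $6$-cycles by local reduction (replacing the cycle with a smaller gadget, coloring by minimality, and extending via Theorem~\ref{th:Hall} or Combinatorial Nullstellensatz), and then exploit the forced even-girth condition via a discharging scheme. The hardest step in either route is the same: controlling how induced matchings in different color classes interact globally, since the local slack of one additional color disappears when we aim at the tight bound $9$.
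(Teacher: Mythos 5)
This theorem is not proved in the paper at all---it is imported from Steger and Yu---so there is no in-paper argument to compare against; judged on its own, your plan has a genuine gap at its central step. The reduction you propose rests on the claim that (at least for the colour classes of a K\"onig $3$-edge-colouring) every matching of a subcubic bipartite graph splits into three induced matchings, i.e.\ that the conflict graph $H(M)$ is $3$-colourable. This is false. Take parts $\{u_1,\dots,u_4\}$ and $\{v_1,\dots,v_4\}$, matching edges $e_i=u_iv_i$, and connector edges $u_1v_2$, $u_1v_3$, $u_4v_1$, $u_2v_3$, $u_2v_4$, $u_3v_4$. This graph is bipartite and subcubic, the four edges of $M=\{e_1,\dots,e_4\}$ are pairwise at distance exactly $2$, so $H(M)=K_4$ and $M$ needs four induced matchings; moreover $G-M$ has maximum degree $2$, so $M$ really is a colour class of a proper $3$-edge-colouring. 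The companion hope that bipartiteness of $G$ forbids short odd cycles in $H(M)$ is also unfounded: a perfect matching of $K_{3,3}$ has its three edges pairwise at distance $2$, so $H(M)=K_3$ already there. Bipartiteness of $G$ imposes no parity on $H(M)$, because each connector joins a $u$-endpoint of one matching edge to a $v$-endpoint of another and the only constraint is a degree bound (a "tournament with in- and out-degrees at most $2$" pattern, which $K_4$ admits). The paper itself signals that the classwise "one $1$-packing into three $2$-packings" strategy fails in general: right after Conjecture~\ref{conj:main} it notes that the Wagner graph is class~I yet needs $10$ colours for a strong edge-colouring, so some colour class of any proper $3$-edge-colouring of it requires four induced matchings. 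Incidentally, the "delicate point" you raise about the nine classes interacting is not an issue at all: if $M_1,M_2,M_3$ partition $E(G)$ and each $M_i$ is partitioned into induced matchings, the nine classes automatically partition $E(G)$, and a strong edge-colouring imposes no constraint between distinct colour classes. The difficulty is not where you locate it; it is that the per-class decomposition need not exist.

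Your fallback (minimal counterexample, reduction of small cycles, then discharging) is a programme rather than a proof: every piece of content of the theorem would live in the unexecuted case analysis, and nothing in the sketch uses bipartiteness in a way that would bring the bound from $10$ down to $9$. For the record, Steger and Yu's actual argument is a direct induction on a minimal counterexample with a careful local/greedy analysis of the neighbourhood of a suitable edge; it does not factor through K\"onig's theorem or through decomposing matchings into induced matchings. If you want to salvage the decomposition idea, you would at minimum need to prove that some proper $3$-edge-colouring can be chosen whose classes all avoid $K_4$'s and odd obstructions in their conflict graphs, and the examples above show this requires a genuinely new argument rather than Brooks-type bookkeeping.
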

\noindent Again, the bound is tight; it is attained by, e.g., $K_{3,3}$.

Since all bipartite graphs are in class I, 
the results and conjectures for them apply also in the bipartite case.
It is known that as soon as we require some $2$-colors instead just $1$-colors, the problems become much harder.
E.g., a tight upper bound for a strong edge-coloring of bipartite graphs is still not known (c.f.~\cite{FauGyaSchTuz90,SteYu93}).
Therefore, the cases $(c)$ and $(d)$ of Conjecture~\ref{conj:main} may be considered just in the bipartite setting.
Moreover, we have an infinite number of graphs attaining the conjectured upper bounds also among bipartite graphs;
the bipartite graph with two $2$-vertices presented in Figure~\ref{fig:bip12} does not admit a $(1,2^6)$-packing edge-coloring 
nor a $(1,1,2^2)$-packing edge-coloring, and so an infite family of such graphs can again be constructed.
\begin{figure}[htp!]
	$$
		\includegraphics{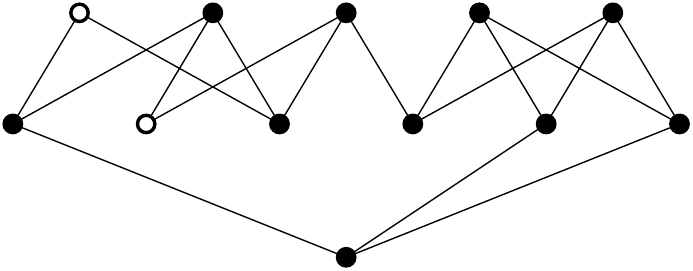}
	$$
	\caption{A subcubic bipartite graph which does not admit a $(1,2^6)$-packing edge-coloring nor a $(1,1,2^2)$-packing edge-coloring.}
	\label{fig:bip12}
\end{figure}

If we consider subcubic graphs with only edges of weight at most $5$, i.e., edges where at least one of the
end-vertices is of degree at most $2$, the number of required colors decreases substantially.
In particular, the case~$(4)$ of Conjecture~\ref{con:Fau} was resolved by Maydanskiy~\cite{May05}
and independently by Wu and Lin~\cite{WuLin08}.
\begin{theorem}[Maydanskiy~\cite{May05}, and Wu \& Lin~\cite{WuLin08}]
	\label{thm:bipartite}
	Every subcubic bipartite graph, in which each edge has weight at most $5$, admits a $(2^6)$-packing edge-coloring.
\end{theorem}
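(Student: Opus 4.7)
The plan is to proceed by a minimum counterexample argument. Let $G$ be a subcubic bipartite graph with every edge of weight at most $5$, minimizing $|V(G)| + |E(G)|$ among such graphs that do not admit a $(2^6)$-packing edge-coloring. The weight hypothesis forces every edge to have an endpoint of degree at most $2$; equivalently, no two $3$-vertices are adjacent, and every $3$-vertex has three neighbors of degree at most $2$.

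I would first dispose of low-degree configurations. If $u$ is a $1$-vertex with neighbor $v$, then a short bookkeeping shows the $2$-edge-neighborhood of $uv$ has at most $4$ other edges (at most two edges at $v$ and, if $\deg(v) = 3$, one further edge at each of $v$'s two degree-$\le 2$ neighbors), so a coloring of $G - u$ provided by minimality extends to $uv$ because $6 - 4 = 2$ colors remain available. Hence $G$ has minimum degree $2$. With more work one can also eliminate certain ``thin'' configurations, such as a long path of consecutive $2$-vertices or a $2$-vertex adjacent to a $3$-vertex whose other neighbor is a $1$-vertex, by replacing the offending substructure with a shorter segment and extending the coloring using Theorem~\ref{thm:listcyc} or Hall's theorem.

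The hard case is an edge $uv$ of weight exactly $5$, where $\deg(u) = 2$ and $\deg(v) = 3$. Here the $2$-edge-neighborhood of $uv$ contains up to $7$ edges — namely $uu'$, the two edges $u'a_1, u'a_2$ at $u$'s other neighbor $u'$, the two edges $vw_1, vw_2$ at $v$, and the edges $w_iw_i'$ at $w_1, w_2$ — so a purely local coloring argument cannot succeed against the palette of $6$ colors. I would exploit the bipartite structure: the vertices $u, w_1, w_2, a_1, a_2$ lie on one side of the bipartition and $v, u', w_1', w_2'$ on the other, and in particular the edges $u'a_i$ are at line-graph distance at least $3$ from the edges $vw_j$ and $w_jw_j'$, so they are \emph{allowed} to share colors. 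By deleting the two central vertices $u, v$ and suitably re-routing or identifying edges among $\{u', w_1, w_2, a_1, a_2\}$ so as to preserve bipartiteness, subcubicity, and the weight bound, I would invoke minimality on the resulting smaller graph and recover a coloring in which these forced coincidences leave at least one color free for $uv$.

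The main obstacle is engineering a reduction that simultaneously (i) preserves bipartiteness and the weight-$5$ constraint, (ii) strictly decreases $|V(G)| + |E(G)|$, and (iii) is guaranteed to produce enough coincidences among the seven colors in the $2$-edge-neighborhood of $uv$. The delicate sub-cases depend on the degrees of $a_1, a_2, w_1', w_2'$ and on any coincidences among these vertices, which could create $4$-cycles or parallel edges in the reduced graph; a finite case analysis, combined where needed with Combinatorial Nullstellensatz applications in the spirit of Lemma~\ref{lem:el_diablo} to handle tight list-coloring subcases, should suffice to complete the contradiction and thus establish the theorem.
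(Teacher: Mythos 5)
This theorem is not proved in the paper at all: it is quoted as a known result of Maydanskiy~\cite{May05} and of Wu and Lin~\cite{WuLin08}, so there is no internal proof to compare against. Your proposal therefore has to stand on its own, and it does not: it is a plan rather than a proof. The preliminary reductions are fine (a $1$-vertex really does see at most $4$ colored edges in its $2$-edge-neighborhood, and the weight bound does force every $3$-vertex to have only $2^-$-neighbors), but the entire difficulty of the theorem is concentrated in the case you label ``hard'' --- an edge $uv$ with $\deg(u)=2$, $\deg(v)=3$, whose $2$-edge-neighborhood can contain $7$ colored edges --- and there you never exhibit the reduction. You correctly observe that the edges $u'a_i$ are at distance $3$ from $vw_j$ and so \emph{may} share colors, but a minimal-counterexample argument needs a concrete smaller graph $G'$ (still bipartite, subcubic, with all edge weights at most $5$, and with $|V|+|E|$ strictly smaller) whose coloring \emph{forces} such a coincidence; ``suitably re-routing or identifying edges'' and ``a finite case analysis should suffice'' is precisely the step that has to be done and is not done. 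Note also that even the subcase $\deg(u')\le 2$ is not automatic: there the $2$-edge-neighborhood of $uv$ still contains up to $6$ edges, which can exhaust all $6$ colors, so no greedy extension is available there either.

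A secondary, but real, concern is that identifications of the kind you describe can create odd cycles (destroying bipartiteness), parallel edges, or edges of weight $6$, and can fail to decrease $|V|+|E|$; you list these as obstacles but do not show they can always be avoided. As it stands the argument establishes only the easy reductions and an accurate diagnosis of where the problem lies; the theorem itself remains unproved by the proposal. If you want a complete argument, you should consult the actual proofs in~\cite{May05} and~\cite{WuLin08}, which carry out exactly the kind of structural case analysis around weight-$5$ edges that your sketch defers.
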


Clearly, an analogous question for coloring such graphs with two $1$-colors is if they admit
a $(1,1,2^2)$-packing edge-coloring. It is answered in affirmative~\cite{Sot19}. The bound is tight already in the class of trees.
On the other hand, we do not have the answer for the following.
\begin{question}
	\label{q:w5}
	Is it true that every subcubic bipartite graph, in which each edge has weight at most $5$, 
	admits a $(1,2^4)$-packing edge-coloring?
\end{question}
This bound is again attained in the class of trees.

\subsection{Graphs with big girth}

Similarly as the bipartiteness, having big girth does not really simplify edge-colorings in which some colors must be $2$-colors.
Even more, due to Kochol~\cite{Koc96} we know, there are graphs with arbitrarily large girth which are in class II.
Anyway, if the girth is infinite, i.e., we consider the trees, the following simple observation is immediate.
\begin{observation} 
	Every subcubic tree admits:
	\begin{itemize}
		\item[$(1)$] a $(1,1,1)$-packing edge-coloring;
		\item[$(2)$] a $(1,1,2^2)$-packing edge-coloring;
		\item[$(3)$] a $(1,2^4)$-packing edge-coloring;
		\item[$(4)$] a $(2^5)$-packing edge-coloring.
	\end{itemize}
\end{observation}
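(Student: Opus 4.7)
The plan is to dispatch the four items with a short unified strategy leaning on one small lemma. Item $(1)$ is immediate: since a subcubic tree is bipartite with $\Delta\le 3$, K\"{o}nig's theorem yields a proper $3$-edge-coloring, which is by definition a $(1,1,1)$-packing edge-coloring. For $(4)$, I plan to invoke the classical bound $\chi_s'(T)\le \max_{uv\in E(T)}(\deg(u)+\deg(v)-1)$ for trees; with $\Delta\le 3$ this gives at most $5$ colors, so $T$ admits a $(2^5)$-packing edge-coloring.

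To handle $(2)$ and $(3)$ together, I first plan to prove a splitting lemma: in any tree, every matching can be partitioned into two induced matchings. Granted this, I would start from a proper $3$-edge-coloring of $T$ with color classes $A$, $B$, $C$. For $(2)$ I would keep $A$ and $B$ as $1$-colors and split $C$ into two induced matchings serving as the two $2$-colors; for $(3)$ I would keep $A$ as the $1$-color and split each of $B$ and $C$ into two induced matchings, giving the four required $2$-colors.

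For the splitting lemma itself, my plan is to build the conflict graph $H$ on $M$ whose edges record pairs of matching edges at distance exactly $2$ in $T$. A proper $2$-coloring of $H$ produces the desired partition, so it suffices to show that $H$ is a forest. I would proceed by contradiction: a cycle $e_1,\dots,e_k,e_1$ in $H$ with $k\ge 3$ would select, for each consecutive pair, a unique ``middle edge'' $f_i \in E(T)\setminus M$ joining an endpoint of $e_i$ to one of $e_{i+1}$. Since $M$ is a matching, the $2k$ endpoints involved are pairwise distinct, and I would verify that the $2k$ edges $e_1,f_1,\dots,e_k,f_k$ are also pairwise distinct (using that each middle edge determines the pair of matching edges it sits between, together with $f_i\notin M$). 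The resulting connected subgraph of $T$ on $2k$ vertices would then have $2k$ edges and hence contain a cycle, contradicting that $T$ is a tree. This distinctness verification is the only place where any care is needed; everything else reduces to combining standard pieces.
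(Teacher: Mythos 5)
Your proof is correct. Note that the paper itself gives no argument at all for this observation (it is stated as ``immediate'' in the concluding section), so there is no official proof to compare against; your write-up actually supplies the missing details. All four items check out: $(1)$ follows from K\"onig (or just a greedy root-to-leaf coloring), $(4)$ from the classical fact that the strong chromatic index of a tree is $\max_{uv\in E}(\deg(u)+\deg(v)-1)\le 5$, and your splitting lemma correctly handles $(2)$ and $(3)$. The lemma's verification is sound: each middle edge $f_i$ has both endpoints saturated by $e_i$ and $e_{i+1}$, so $f_i\notin M$ and $f_i$ determines the unordered pair $\{e_i,e_{i+1}\}$, whence the $2k$ edges $e_1,f_1,\dots,e_k,f_k$ are pairwise distinct and span at most $2k$ vertices, forcing a cycle in $T$. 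One could argue the paper intended a more pedestrian BFS-greedy argument for each item separately, but your route via ``every matching in a tree splits into two induced matchings'' is cleaner, unifies $(2)$ and $(3)$, and is a reusable statement in its own right; the only cost is that it is slightly more machinery than the word ``immediate'' suggests.
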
 
\noindent The bounds are tight already if we just consider a neighborhood of one edge with both end-vertices of degree $3$.

In the case of strong edge-coloring, the case~$(6)$ of Conjecture~\ref{con:Fau} was also rejected just 
recently by Lu\v{z}ar, Ma\v{c}ajov\'{a}, \v{S}koviera, and Sot\'{a}k~\cite{LuzMacSkoSot20}, 
who proved that a cubic graph is a cover of the Petersen graph if and only if it admits a $(2^5)$-packing edge-coloring.

Before we consider the intermediate colorings, we first recall the result of Gastineau and Togni~\cite{GasTog19}.
\begin{proposition}[Gastineau \& Togni~\cite{GasTog19}]
	Every cubic graph admitting a $(1,1,2^2)$-packing edge-coloring is class I and has order divisible by four.
\end{proposition}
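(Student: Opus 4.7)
The plan is to analyze the subgraph $M := E_\gamma \cup E_\delta$ of edges carrying the two $2$-colors $\gamma, \delta$ (with $\alpha, \beta$ the two $1$-colors), and to show that $M$ must be a perfect matching of $G$; class~I and $4 \mid n$ will then both follow from short counting arguments. Since each of $E_\gamma, E_\delta$ is a matching, $M$ has maximum degree $2$, and along any path or cycle component of $M$ consecutive edges must alternate between $\gamma$ and $\delta$. The induced-matching condition then rules out components with three or more edges, because in such a component two same-color edges would be separated by a single alternating edge and hence lie at distance exactly~$2$ in $G$. So every component of $M$ is either an isolated edge or a path of length~$2$.

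The key step is to rule out length-$2$ components. Given a path $u - w - v$ in $M$ with $uw$ of color $\gamma$ and $wv$ of color $\delta$, I would examine the third edge $wy$ at $w$, which carries a $1$-color, say $\alpha$. The three edges at $y$ are properly colored, so the two edges at $y$ other than $wy$ take two distinct colors from $\{\beta, \gamma, \delta\}$. Whichever two colors occur, at least one of them lies in $\{\gamma, \delta\}$, and the corresponding edge at $y$ is at distance~$2$ from $uw$ or $wv$ via the edge $wy$, violating the induced-matching property of $\gamma$ or $\delta$. Combined with the observation that no vertex can have all three of its edges $1$-colored (two matchings cannot cover three edges at a vertex), this forces $M$ to be a perfect matching.

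Once $M$ is a perfect matching, $E_\alpha \cup E_\beta = E(G) \setminus M$ is $2$-regular, so $E_\alpha$ and $E_\beta$ are also perfect matchings, giving a proper $3$-edge-coloring and hence class~I. For the divisibility, I would use the decomposition of $E_\alpha \cup E_\beta$ into even cycles. On each such cycle two consecutive vertices cannot have $M$-edges of the same $2$-color (otherwise those two same-color $M$-edges would be at distance~$2$ via the cycle edge joining them), so the $2$-colors alternate around each cycle of length $2\ell_C$, which leaves exactly $\ell_C$ vertices of each type. Splitting the $\gamma$-edges into those internal to a cycle (contributing $2$ to its $\gamma$-count) and those crossing between cycles (contributing $1$ to each end), the equality $\sum \ell_C = n/2$ rearranges to $|E_\gamma| = n/4$, which must be a non-negative integer, and so $4 \mid n$.

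The main obstacle is the exclusion of length-$2$ components of $M$: this is the only step where the induced-matching hypothesis is really used, via the small case check on the palette at~$y$. Everything else is a mechanical consequence of the rigid structure forced by $M$ being a perfect matching.
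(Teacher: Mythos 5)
The paper states this proposition without proof (it is imported from Gastineau and Togni), so there is no in-paper argument to compare yours against; judged on its own terms, your proof is correct and complete. The skeleton is sound: the union $M = E_\gamma \cup E_\delta$ has maximum degree $2$ and minimum degree $1$ (two $1$-colors cannot cover all three edges at a vertex), its components alternate between $\gamma$ and $\delta$, components with three or more edges die immediately because two same-colored edges would be at distance $2$, and your palette check at the far endpoint $y$ of the $1$-colored third edge at the middle vertex correctly kills length-$2$ components: the two non-$\alpha$ edges at $y$ carry two distinct colors from $\{\beta,\gamma,\delta\}$, so one of them is a $\gamma$- or $\delta$-edge at distance at most $2$ from $uw$ or $wv$. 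Hence $M$ is a perfect matching, the complementary $2$-factor forces $E_\alpha$ and $E_\beta$ to be perfect matchings (class I), and the alternation of $\gamma$/$\delta$ vertex types around each cycle of that $2$-factor (two consecutive vertices of the same type would put two same-colored $M$-edges at distance $2$ across the cycle edge) yields $2\abs{E_\gamma} = n/2$, i.e., $4 \mid n$. Two small points worth making explicit if you write this up: you tacitly use simplicity of $G$ so that $y \notin \{u,v\}$ and the conflicting edge at $y$ is genuinely distinct from $uw$ and $wv$ (otherwise the ``distance $2$'' claim needs the extra observation that distance $1$ between same-colored edges is also forbidden, which holds anyway since each color class is at least a matching); and your final count can be stated more directly as $2\abs{E_\gamma} = \sum_C \ell_C = n/2$ without splitting the $\gamma$-edges into intra- and inter-cycle ones.
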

Hence, the analogue of the case~$(6)$ of Conjecture~\ref{con:Fau} when having two $1$-colors does not hold.
However, the following remains open.
\begin{question}
	\label{q:biggirth}
	Is it true that every subcubic bipartite graph with big enough girth
	admits a $(1,2^4)$-packing edge-coloring?
\end{question}

To conclude, we believe that studying properties of the considered edge-colorings will have impact to the initial 
problem of strong edge-coloring, which is in general case still widely open. 
Namely, the conjectured upper bound for graphs with maximum degree $\Delta$ is $1.25 \Delta^2$, 
while currently the best upper bound is due to Hurley, de Joannis de Verclos, and Kang~\cite{HurJoaKan20}, set at $1.772 \Delta^2$.

\bigskip
\paragraph{Acknowledgement} 

This research has been done in the scope of the bilateral project between France and Slovenia, BI-FR/19-20-PROTEUS-001.
The third author was partly supported by the Slovenian Research Agency Program P1--0383 and the project J1--1692.

\bibliographystyle{plain}
{\small
	\bibliography{MainBase}
}

\end{document}